\def\dOi{11(3:11)2015}
\newcommand{\err}{\mathit{err}}
\begin{document}

\title[Abstract Model Repair]{Abstract Model Repair\rsuper*}
\author[G.~Chatzieleftheriou]{George~Chatzieleftheriou\lowercase{$^a$}}
\address{$^a$Department of Informatics, Aristotle University of Thessaloniki, 
54124 Thessaloniki, Greece}
\email{gchatzie@csd.auth.gr}

\author[B.~Bonakdarpour]{Borzoo~Bonakdarpour\lowercase{$^b$}}
\address{$^b$Department of Computing and Software, McMaster University, 
1280 Main Street West, Hamilton, ON L8S 4L7, Canada}
\email{borzoo@mcmaster.ca}

\author[P.~Katsaros]{Panagiotis~Katsaros\lowercase{$^c$}}
\address{$^c$Department of Informatics, Aristotle University of Thessaloniki, 
54124 Thessaloniki, Greece}
\email{katsaros@csd.auth.gr}

\author[S.~A.~Smolka]{Scott~A.~Smolka\lowercase{$^d$}}
\address{$^d$Department of Computer Science, Stony Brook University, 
 Stony Brook, NY 11794-4400, USA}
\email{sas@cs.sunysb.edu}

\keywords{Model Repair, Model Checking, Abstraction Refinement}
\titlecomment{{\lsuper*}A preliminary version of the paper has appeared in~\cite{GBSK12}}

\begin{abstract}
	Given a Kripke structure $M$ and CTL formula $\phi$, 
	where $M$ does not satisfy $\phi$, the problem of \emph{Model Repair} 
	is to obtain a new model $M'$ such that $M'$ satisfies $\phi$.  
	Moreover, the changes made to $M$ to derive $M'$ should be minimum 
	with respect to all such $M'$.  As in model checking, 
	\emph{state explosion} can make it virtually impossible to carry out 
	model repair on models with infinite or even large state spaces.  
	In this paper, we present a framework for model repair that uses 
	\emph{abstraction refinement} to tackle state explosion.  Our framework
	aims to repair Kripke Structure models based on a Kripke Modal 
Transition System abstraction and a 3-valued semantics for CTL.  We 
introduce 
	an abstract-model-repair algorithm for which we prove soundness and
	semi-completeness, and we study its complexity class.  Moreover, a prototype 
	implementation is presented to illustrate the practical utility of abstract-model-repair on an 
	Automatic Door Opener system model and a model of the Andrew File System 1 protocol.  
\end{abstract}

\maketitle

\section{Introduction}
\label{sec:intro}	

Given a model $M$ and temporal-logic formula $\phi$,
\emph{model checking}~\cite{CES09} is the problem of determining whether or not
$M \models \phi$.  When this is not the case, a model checker will typically 
provide a \emph{counterexample} in the form of an execution path along which 
$\phi$ is violated.  The user should then process the counterexample manually 
to correct $M$.

An extended version of the model-checking problem is that of 
\emph{model repair}: given a model $M$ and temporal-logic formula
$\phi$, where $M \not\models \phi$, obtain a new model $M'$,
such that $M' \models \phi$. The problem of Model Repair for Kripke structures 
and Computation Tree Logic (CTL)~\cite{EH85} properties was first introduced 
in~\cite{BEGL99}.

\emph{State explosion} is a well known limitation of automated formal
methods, such as model checking and model repair, which impedes their
application to systems having large or even infinite state spaces.
Different techniques have been developed to cope with this problem.
In the case of model checking, 
\emph{abstraction}~\cite{CGL94,LGSBBP95,GS97,DGG97,GHJ01} is used to 
create a smaller, more abstract version $\hat{M}$ of the initial concrete
model $M$, and model checking is performed on this smaller model.
For this technique to work as advertised, it should be the case that
if $\hat{M} \models \phi$ then $M \models \phi$.

Motivated by the success of abstraction-based model checking, we present
in this paper a new framework for Model Repair that uses
\emph{abstraction refinement} to tackle state explosion.  The resulting
\emph{Abstract Model Repair} (AMR) methodology makes it possible to
repair models with large state spaces, and to speed-up the repair
process through the use of smaller abstract models.  The major
contributions of our work are as follows:

\begin{itemize}

\item
We provide an AMR framework that uses Kripke structures (KSs) for the concrete 
model $M$, Kripke Modal Transition Systems (KMTSs) for the abstract model 
$\hat{M}$, and a 3-valued semantics for interpreting CTL over 
KMTSs~\cite{HJS01}.  An iterative refinement of the abstract 
KMTS model takes place whenever the result of the 3-valued CTL model-checking 
problem is undefined.  If the refinement process terminates with a KMTS that  
violates the CTL property, this property is also falsified by the concrete 
KS $M$.  Then, the repair process for the refined KMTS is initiated.  

\item
We strengthen the Model Repair problem by additionally taking into
account the following \emph{minimality} criterion (refer to the
definition of Model Repair above): the changes made to $M$ to derive
$M'$ should be minimum with respect to all $M'$ satisfying $\phi$.
To handle the minimality constraint, we define a metric space over KSs
that quantifies the structural differences between them.

\item
We introduce an Abstract Model Repair algorithm for KMTSs, which 
takes into account the aforementioned minimality criterion.

\item 
We prove the soundness of the Abstract Model Repair algorithm for the full CTL and 
the completeness for a major fragment of it.  Moreover, the algorithm's complexity 
is analyzed with respect to the abstract KMTS model size, which can be
much smaller than the concrete KS.     

\item
We illustrate the utility of our approach through a prototype implementation used to 
repair a flawed Automatic Door Opener system~\cite{BK08} and the Andrew File System 1 
protocol. Our experimental results show significant improvement in efficiency 
compared to a concrete model repair solution.    
\end{itemize}

\noindent\emph{Organization. } \ The rest of this paper is organized as follows.
Sections~\ref{sec:mc} and~\ref{sec:abstr} introduce KSs, KMTSs,
as well as abstraction and refinement based on a 3-valued semantics
for CTL.  Section~\ref{sec:mrp} defines a metric space for 
KSs and formally defines the problem of Model Repair.
Section~\ref{sec:absmrp} presents our framework for
Abstract Model Repair, while Section~\ref{sec:alg} introduces the
abstract-model-repair algorithm for KMTSs and discusses its soundness, 
completeness and complexity properties.  Section~\ref{sec:exp} presents the experimental 
evaluation of our method through its application to the Andrew File 
System 1 protocol (AFS1).  Section~\ref{sec:relwork} considers 
related work, while Section~\ref{sec:concl} concludes with a review of the overall
approach and pinpoints directions for future work.

\section{Kripke Modal Transition Systems}
\label{sec:mc}	\enlargethispage{\baselineskip}

Let $AP$ be a set of {\em atomic propositions}.  Also, let $Lit$ be the set of
{\em literals}:
\[
Lit = AP \; \cup \; \{ \neg p \mid p \in AP\}
\]

\begin{defi}
\label{def:ks}
A {\em Kripke Structure} (KS) is a quadruple $M = (S, S_{0}, R, L)$,
where:
\begin{enumerate}
\item $S$ is a finite set of {\em states}.
\item $S_{0}\subseteq S$ is the set of {\em initial states}.
\item $R\subseteq S \times S$ is a {\em transition relation} that must
  be total, i.e., $$\forall s \in S: \exists s' \in S:R(s,s').$$
\item $L: S \rightarrow 2^{Lit}$ is a state {\em labeling function},
  such that $$\forall s \in S: \forall p \in AP: p \in L(s)
  \Leftrightarrow \neg p \notin L(s).\eqno{\qEd}$$ \qedhere
\end{enumerate} 
\end{defi}

\noindent The fourth condition in Def.~\ref{def:ks} ensures that any
atomic proposition $p \in AP$ has one and only one truth value at any
state.\\

\noindent \emph{Example.} We use the Automatic Door Opener system (ADO)
of~\cite{BK08} as a running example throughout the paper.  The system,
given as a KS in Fig~\ref{fig:ado_system}, requires a three-digit code 
$(p_{0},p_{1},p_{2})$ to open a door, allowing for one and only one wrong digit 
to be entered at most twice.  Variable $\mathit{err}$ counts the number of 
errors, and an alarm is rung if its value exceeds two.  For the purposes of our 
paper, we use a simpler version of the ADO system, given as the KS $M$ in
Fig.~\ref{fig:ado_initial}, where the set of atomic propositions is
$AP = \{q\}$ and $q \equiv (open = true)$.

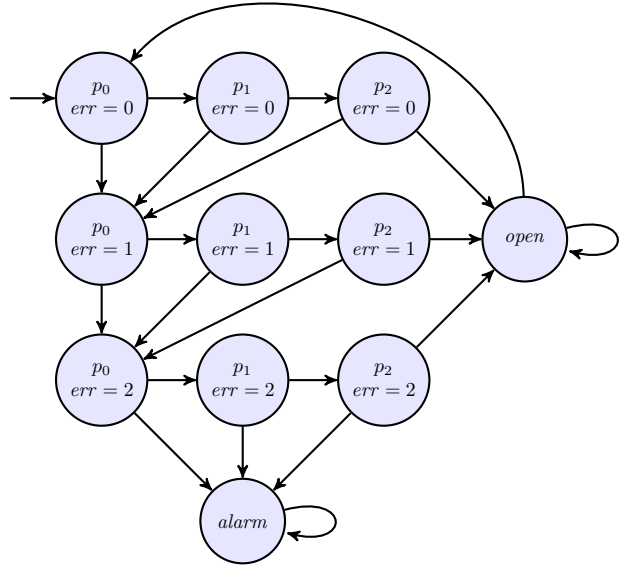
\begin{figure}[t]
\centering
\begin{tikzpicture}[->,>=stealth',auto,node 
distance=2.5cm, scale=0.75, thick, main node/.style={scale=0.75, minimum size = 
1.5cm, align=center,circle,fill=blue!10,draw, font=\small}]

  \node[main node] (1) {$p_0$ \\ $\err =0$};
  \node[main node] (2) [below of=1] {$p_0$ \\ $\err =1$};
  \node[main node] (3) [below of=2] {$p_0$ \\ $\err =2$};
  \node[main node] (4) [right of=1] {$p_1$ \\ $\err =0$};
  \node[main node] (5) [below of=4] {$p_1$ \\ $\err =1$};
  \node[main node] (6) [below of=5] {$p_1$ \\ $\err =2$};
  \node[main node] (7) [right of=4] {$p_2$ \\ $\err =0$};
  \node[main node] (8) [below of=7] {$p_2$ \\ $\err =1$};
  \node[main node] (9) [below of=8] {$p_2$ \\ $\err =2$};
  \node[main node] (10) [right of=8] {$\mathit{open}$};
  \node[main node] (11) [below of=6] {$\mathit{alarm}$};

  \draw[->] ([xshift=-.8cm]1.west) --  (1.west);

   \path
     (1) edge (4)
     (4) edge (7)
     (2) edge (5)
     (5) edge (8)
     (3) edge (6)
     (6) edge (9)
     (7) edge (10)
     (8) edge (10)
     (9) edge (10)
     (3) edge (11)
     (6) edge (11)
     (9) edge (11)

     (1) edge (2)
     (4) edge (2)
     (7) edge (2)

     (2) edge (3)
     (5) edge (3)
     (8) edge (3)

     (10) edge [bend right=70] node {} (1)
     
     (10) edge [loop right] (10)
     (11) edge [loop right] (11);
\end{tikzpicture}
\caption{The Automatic Door Opener (ADO) System.}
\label{fig:ado_system}
\end{figure}

\begin{defi}
\label{def:kmts}
A {\em Kripke Modal Transition System} (KMTS) is a 5-tuple 
$\hat{M} = (\hat{S}, \hat{S_{0}},$ $R_{must}, R_{may}, \hat{L})$, where:

\begin{enumerate}
\item $\hat{S}$ is a finite set of \emph{states}.
\item $\hat{S_{0}}\subseteq \hat{S}$ is the set of \emph{initial states}.
\item $R_{must} \subseteq \hat{S} \times \hat{S}$ and $R_{may} \subseteq
  \hat{S} \times \hat{S}$ are \emph{transition relations} such that
  $R_{must} \subseteq R_{may}$. 
\item $\hat{L}: \hat{S} \rightarrow 2^{Lit}$ is a state-labeling
  such that $\forall \hat{s} \in \hat{S}$, $\forall p \in AP$, $\hat{s}$
  is labeled by {\em at most} one of $p$ and $\neg p$.\qed
  \end{enumerate}
\end{defi}

\noindent A KMTS has two types of transitions: \emph{must-transitions}, which
exhibit \emph{necessary} behavior, and \emph{may-transitions}, which
exhibit \emph{possible} behavior. Must-transitions are also may-transitions.  
The ``at most one'' condition in the
fourth part of Def.~\ref{def:kmts} makes it possible for the truth
value of an atomic proposition at a given state to be {\em unknown}.
This relaxation of truth values in conjunction with the existence of 
may-transitions in a KMTS constitutes a \emph{partial modeling}
formalism.

Verifying a CTL formula $\phi$ over a KMTS may result in an undefined
outcome ($\bot$).  We use the \emph{3-valued semantics}~\cite{HJS01} of a
CTL formula $\phi$ at a state $\hat{s}$ of KMTS $\hat{M}$.  

\begin{defi}
\label{def:ctl3_semantics}
{\bf \cite{HJS01}} Let $\hat{M} = (\hat{S}, \hat{S_{0}}, R_{must}, R_{may}, 
\hat{L})$ 
be a KMTS.  The 3-valued semantics of a CTL formula 
$\phi$ at a state $\hat{s}$ of $\hat{M}$, denoted as 
$(\hat{M},\hat{s}) \models^{3} \phi$, is defined inductively as follows: 

\begin{itemize}
\item If $\phi = \mathit{false}$
\begin{itemize}
\item $[(\hat{M},\hat{s}) \models^{3} \phi ] = \mathit{false}$
\end{itemize}
\item If $\phi = \mathit{true}$
\begin{itemize}
\item $[(\hat{M},\hat{s}) \models^{3} \phi ] = \mathit{true}$
\end{itemize}
\item If $\phi = p$ where $p \in AP$
\begin{itemize}
\item $[(\hat{M},\hat{s}) \models^{3} \phi ] = \mathit{true}$, 
iff $p \in \hat{L}(\hat{s})$.  
\item $[(\hat{M},\hat{s}) \models^{3} \phi ] = \mathit{false}$, 
iff $\neg p \in \hat{L}(\hat{s})$.
\item $[(\hat{M},\hat{s}) \models^{3} \phi ] = \bot$, 
otherwise.  
\end{itemize}
\item If $\phi = \neg\phi_{1}$
\begin{itemize}
\item $[(\hat{M},\hat{s}) \models^{3} \phi ] = \mathit{true}$, 
iff $[(\hat{M},\hat{s}) \models^{3} \phi_{1} ] = \mathit{false}$.  
\item $[(\hat{M},\hat{s}) \models^{3} \phi ] = \mathit{false}$, 
iff $[(\hat{M},\hat{s}) \models^{3} \phi_{1} ] = \mathit{true}$.
\item $[(\hat{M},\hat{s}) \models^{3} \phi ] = \bot$, 
otherwise.  
\end{itemize}
\item If $\phi = \phi_{1} \, \vee \, \phi_{2}$
\begin{itemize}
\item $[(\hat{M},\hat{s}) \models^{3} \phi ] = \mathit{true}$, 
iff $[(\hat{M},\hat{s}) \models^{3} \phi_{1}] = \mathit{true}$ or 
$[(\hat{M},\hat{s}) \models^{3} \phi_{2}] = \mathit{true}$.    
\item $[(\hat{M},\hat{s}) \models^{3} \phi ] = \mathit{false}$,
iff $[(\hat{M},\hat{s}) \models^{3} \phi_{1}] = \mathit{false}$ and 
$[(\hat{M},\hat{s}) \models^{3} \phi_{2}] = \mathit{false}$.  
\item $[(\hat{M},\hat{s}) \models^{3} \phi ] = \bot$, 
otherwise.  
\end{itemize}
\item If $\phi = \phi_{1} \, \wedge \, \phi_{2}$
\begin{itemize}
\item $[(\hat{M},\hat{s}) \models^{3} \phi ] = \mathit{true}$, 
iff $[(\hat{M},\hat{s}) \models^{3} \phi_{1}] = \mathit{true}$ and 
$[(\hat{M},\hat{s}) \models^{3} \phi_{2}] = \mathit{true}$.    
\item $[(\hat{M},\hat{s}) \models^{3} \phi ] = \mathit{false}$,
iff $[(\hat{M},\hat{s}) \models^{3} \phi_{1}] = \mathit{false}$ or 
$[(\hat{M},\hat{s}) \models^{3} \phi_{2}] = \mathit{false}$.  
\item $[(\hat{M},\hat{s}) \models^{3} \phi ] = \bot$, 
otherwise.  
\end{itemize}
\item If $\phi = AX\phi_{1}$
\begin{itemize}
\item $[(\hat{M},\hat{s}) \models^{3} \phi ] = \mathit{true}$, 
iff for all $\hat{s}_{i}$ such that 
$(\hat{s},\hat{s}_{i}) \in R_{may}$, 
$[(\hat{M},\hat{s}_{i}) \models^{3} \phi_{1}] = \mathit{true}$.      
\item $[(\hat{M},\hat{s}) \models^{3} \phi ] = \mathit{false}$,
iff there exists some $\hat{s}_{i}$ such that 
$(\hat{s},\hat{s}_{i}) \in R_{must}$ and  
$[(\hat{M},\hat{s}_{i}) \models^{3} \phi_{1}] = \mathit{false}$.  
\item $[(\hat{M},\hat{s}) \models^{3} \phi ] = \bot$,
otherwise.  
\end{itemize} 
\item If $\phi = EX\phi_{1}$
\begin{itemize}
\item $[(\hat{M},\hat{s}) \models^{3} \phi ] = \mathit{true}$, 
iff there exists $\hat{s}_{i}$ such that 
$(\hat{s},\hat{s}_{i}) \in R_{must}$ and  
$[(\hat{M},\hat{s}_{i}) \models^{3} \phi_{1}] = \mathit{true}$.      
\item $[(\hat{M},\hat{s}) \models^{3} \phi ] = \mathit{false}$,
iff for all $\hat{s}_{i}$ such that 
$(\hat{s},\hat{s}_{i}) \in R_{may}$, 
$[(\hat{M},\hat{s}_{i}) \models^{3} \phi_{1}] = \mathit{false}$.  
\item $[(\hat{M},\hat{s}) \models^{3} \phi ] = \bot$,
otherwise.  
\end{itemize} 
\item If $\phi = AG\phi_{1}$
\begin{itemize}
\item $[(\hat{M},\hat{s}) \models^{3} \phi ] = \mathit{true}$, 
iff for all may-paths 
$\pi_{may} = [\hat{s},\hat{s}_{1},\hat{s}_{2},...]$ and for all 
$\hat{s}_{i} \in \pi_{may}$ it holds that 
$[(\hat{M},\hat{s}_{i}) \models^{3} \phi_{1}] = \mathit{true}$.  
\item $[(\hat{M},\hat{s}) \models^{3} \phi ] = \mathit{false}$,
iff there exists some must-path  
$\pi_{must} = [\hat{s},\hat{s}_{1},\hat{s}_{2},...]$, 
such that for some $\hat{s}_{i} \in \pi_{must}$,  
$[(\hat{M},\hat{s}_{i}) \models^{3} \phi_{1}] = \mathit{false}$.  
\item $[(\hat{M},\hat{s}) \models^{3} \phi ] = \bot$,
otherwise.  
\end{itemize}
\item If $\phi = EG\phi_{1}$
\begin{itemize}
\item $[(\hat{M},\hat{s}) \models^{3} \phi ] = \mathit{true}$, 
iff there exists some must-path 
$\pi_{must} = [\hat{s},\hat{s}_{1},\hat{s}_{2},...]$, 
such that for all $\hat{s}_{i} \in \pi_{must}$,  
$[(\hat{M},\hat{s}_{i}) \models^{3} \phi_{1}] = \mathit{true}$.  
\item $[(\hat{M},\hat{s}) \models^{3} \phi ] = \mathit{false}$,
iff for all may-paths   
$\pi_{may} = [\hat{s},\hat{s}_{1},\hat{s}_{2},...]$, there is some 
$\hat{s}_{i} \in \pi_{may}$ such that   
$[(\hat{M},\hat{s}_{i}) \models^{3} \phi_{1}] = \mathit{false}$.  
\item $[(\hat{M},\hat{s}) \models^{3} \phi ] = \bot$,
otherwise.  
\end{itemize}
\item If $\phi = AF\phi_{1}$
\begin{itemize}
\item $[(\hat{M},\hat{s}) \models^{3} \phi ] = \mathit{true}$, 
iff for all may-paths 
$\pi_{may} = [\hat{s},\hat{s}_{1},\hat{s}_{2},...]$, there is a 
$\hat{s}_{i} \in \pi_{may}$ such that  
$[(\hat{M},\hat{s}_{i}) \models^{3} \phi_{1}] = \mathit{true}$.  
\item $[(\hat{M},\hat{s}) \models^{3} \phi ] = \mathit{false}$,
iff there exists some must-path  
$\pi_{must} = [\hat{s},\hat{s}_{1},\hat{s}_{2},...]$, 
such that for all $\hat{s}_{i} \in \pi_{must}$,  
$[(\hat{M},\hat{s}_{i}) \models^{3} \phi_{1}] = \mathit{false}$.  
\item $[(\hat{M},\hat{s}) \models^{3} \phi ] = \bot$,
otherwise.  
\end{itemize}
\item If $\phi = EF\phi_{1}$
\begin{itemize}
\item $[(\hat{M},\hat{s}) \models^{3} \phi ] = \mathit{true}$, 
iff there exists some must-path  
$\pi_{must} = [\hat{s},\hat{s}_{1},\hat{s}_{2},...]$, such 
that there is some $\hat{s}_{i} \in \pi_{must}$ for which   
$[(\hat{M},\hat{s}_{i}) \models^{3} \phi_{1}] = \mathit{true}$.  
\item $[(\hat{M},\hat{s}) \models^{3} \phi ] = \mathit{false}$,
iff for all may-paths   
$\pi_{may} = [\hat{s},\hat{s}_{1},\hat{s}_{2},...]$ and for all 
$\hat{s}_{i} \in \pi_{may}$, 
$[(\hat{M},\hat{s}_{i}) \models^{3} \phi_{1}] = \mathit{false}$.  
\item $[(\hat{M},\hat{s}) \models^{3} \phi ] = \bot$,
otherwise.  
\end{itemize}
\item If $\phi = A(\phi_{1} \, U \, \phi_{2})$
\begin{itemize}
\item $[(\hat{M},\hat{s}) \models^{3} \phi ] = \mathit{true}$, 
iff for all may-paths 
$\pi_{may} = [\hat{s},\hat{s}_{1},\hat{s}_{2},...]$, there is 
$\hat{s}_{i} \in \pi_{may}$ such that 
$[(\hat{M},\hat{s}_{i}) \models^{3} \phi_{2}] = \mathit{true}$
and $\forall j < i: [(\hat{M},\hat{s}_{j}) \models^{3} \phi_{1}] 
= true$.  
\item $[(\hat{M},\hat{s}) \models^{3} \phi ] = \mathit{false}$,
iff there exists some must-path  
$\pi_{must} = [\hat{s},\hat{s}_{1},\hat{s}_{2},...]$, 
such that 
\begin{itemize}
\item[i.] for all $0\leq k< |\pi_{must}|:$\\
$(\forall j < k : [(\hat{M},\hat{s}_{j}) \models^{3} \phi_{1}] \neq \mathit{false}) 
\Rightarrow ([(\hat{M},\hat{s}_{k}) \models^{3} \phi_{2}] = \mathit{false})$
\item[ii.] 
$(\text{for all } 0 \leq k < |\pi_{must}|:[(\hat{M},\hat{s}_{k}) \models^{3} \phi_{2}] \neq \mathit{false}) 
\Rightarrow |\pi_{must}| = \infty$
\end{itemize}  
\item $[(\hat{M},\hat{s}) \models^{3} \phi ] = \bot$,
otherwise.  
\end{itemize}
\item If $\phi = E(\phi_{1}U\phi_{2})$
\begin{itemize}
\item $[(\hat{M},\hat{s}) \models^{3} \phi ] = \mathit{true}$, 
iff there exists some must-path 
$\pi_{must} = [\hat{s},\hat{s}_{1},\hat{s}_{2},...]$
such that there is a $\hat{s}_{i} \in \pi_{must}$ with   
$[(\hat{M},\hat{s}_{i}) \models^{3} \phi_{2}] = \mathit{true}$ 
and for all 
$j < i, [(\hat{M},\hat{s}_{j}) \models^{3} 
\phi_{1}] = \mathit{true}$.  
\item $[(\hat{M},\hat{s}) \models^{3} \phi ] = \mathit{false}$,
iff for all may-paths   
$\pi_{may} = [\hat{s},\hat{s}_{1},\hat{s}_{2},...]$ 
\begin{itemize}
\item[i.] for all $0 \leq k < |\pi_{may}|:$\\ 
$(\forall j < k : 
[(\hat{M},\hat{s}_{j}) \models^{3} \phi_{1}] \neq \mathit{false}) 
\Rightarrow ([(\hat{M},\hat{s}_{k}) \models^{3} \phi_{2}] = \mathit{false})$
\item[ii.] $(\text{for all } 0 \leq k < |\pi_{may}| : 
[(\hat{M},\hat{s}_{k}) \models^{3} \phi_{2}] \neq \mathit{false}) 
\Rightarrow |\pi_{may}| = \infty$
\end{itemize}  
\item $[(\hat{M},\hat{s}) \models^{3} \phi ] = \bot$,
otherwise.  \qed
\end{itemize}
\end{itemize}  
\end{defi}\enlargethispage{\baselineskip}

\noindent From the 3-valued CTL semantics, it follows that must-transitions are used to 
check the truth of existential CTL properties, while may-transitions
are used to check the truth of universal CTL properties.  This works inversely 
for checking the refutation of CTL properties.  In what follows, we use 
$\models$ instead of $\models^{3}$ in order to refer to the 3-valued 
satisfaction relation.

\section{Abstraction and Refinement for 3-Valued CTL}
\label{sec:abstr}

\subsection{Abstraction}

\emph{Abstraction} is a state-space reduction technique that produces a
smaller abstract model from an initial {\em concrete} model, so that the
result of model checking a property $\phi$ in the abstract model is preserved
in the concrete model.  This can be achieved if the abstract model is built 
with certain requirements~\cite{CGL94,GHJ01}.

\begin{defi}
\label{def:abs_kmts}
Given a KS $M = (S, S_{0}, R, L)$ and a pair of total functions 
$(\alpha : S \rightarrow \hat{S}, \gamma : \hat{S}
\rightarrow 2^{S})$ such that $$\forall s \in S:
\forall\hat{s} \in \hat{S}: (\alpha(s) = \hat{s} \Leftrightarrow s \in 
\gamma(\hat{s}))$$ the KMTS 
$\alpha(M) = (\hat{S}, \hat{S_{0}}, R_{must}, R_{may}, 
\hat{L})$ is defined as follows:
\begin{enumerate}
\item $\hat{s} \in \hat{S_{0}}$ iff $\exists s \in 
\gamma(\hat{s})$ such that $s \in S_{0}$

\item $lit \in \hat{L}(\hat{s})$ only if $\forall s \in 
\gamma(\hat{s}): lit \in L(s)$

\item $R_{must} = \left\{(\hat{s_{1}},\hat{s_{2}}) \mid 
\forall s_{1} \in \gamma(\hat{s_{1}}): \exists s_{2} \in 
\gamma(\hat{s_{2}}): (s_{1},s_{2}) \in R\right\}$

\item $R_{may} = \left\{(\hat{s_{1}},\hat{s_{2}}) \mid 
\exists s_{1} \in \gamma(\hat{s_{1}}): \exists s_{2} \in 
\gamma(\hat{s_{2}}): (s_{1},s_{2}) \in R\right\}$\qed
\end{enumerate}
\end{defi}

For a given KS $M$ and a pair of abstraction and concretization functions 
$\alpha$ and $\gamma$, Def.~\ref{def:abs_kmts} introduces the KMTS $\alpha(M)$ 
defined over the set $\hat{S}$ of \emph{abstract states}.  In our AMR 
framework, we view $M$ as the \emph{concrete model} and the KMTS $\alpha(M)$ as 
the \emph{abstract model}.  Any two concrete states $s_{1}$ and $s_{2}$ 
of $M$ are abstracted by $\alpha$ to a state $\hat{s}$ of $\alpha(M)$ 
if and only if $s_{1}$, $s_{2}$ are elements of the set $\gamma(\hat{s})$ (see 
Fig~\ref{fig:abstract_concrete}).  A state of $\alpha(M)$ is initial \emph{if 
and only if} at least one of its concrete states is initial as well.  An atomic 
proposition in an abstract state is true (respectively, false), \emph{only if} 
it is also true (respectively, false) in all of its concrete states. This means 
that the value of an atomic proposition may be unknown at a state of 
$\alpha(M)$. A must-transition from $\hat{s_{1}}$ to $\hat{s_{2}}$ of 
$\alpha(M)$ exists, 
if and only if there are transitions from all states of $\gamma(\hat{s_{1}})$ 
to at least one state of $\gamma(\hat{s_{2}})$ $(\forall\exists-condition)$.  
Respectively, a may-transition from $\hat{s_{1}}$ to $\hat{s_{2}}$ of 
$\alpha(M)$ exists, 
if and only if there is at least one transition from some state of 
$\gamma(\hat{s_{1}})$ 
to some state of $\gamma(\hat{s_{2}})$ $(\exists\exists-condition)$.  

\begin{figure}[t]
\centering
\includegraphics[width=38mm]{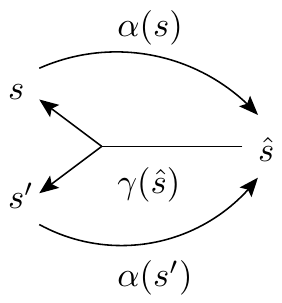} 
\caption{Abstraction and Concretization.}
\label{fig:abstract_concrete}
\end{figure}

\begin{defi}\enlargethispage{\baselineskip}
\label{def:concretize_kmts}
Given a pair of total functions $(\alpha : S \rightarrow \hat{S}, \gamma : 
\hat{S} \rightarrow 2^{S})$ such that $$\forall s \in S: \forall
\hat{s} \in \hat{S}: (\alpha(s) = \hat{s} \Leftrightarrow s \in 
\gamma(\hat{s}))$$ and a KMTS 
$\hat{M} = (\hat{S}, \hat{S_{0}}, R_{must}, R_{may}, \hat{L})$, 
the set of KSs  $\gamma(\hat{M}) = \{M \mid M = (S, S_{0}, R, L)\}$ 
is defined such that for all $M \in \gamma(\hat{M})$ the following conditions 
hold:  
\begin{enumerate}
\item $s \in S_{0}$ iff $\alpha(s) \in \hat{S_{0}}$
\item $lit \in L(s)$ if $lit \in \hat{L}(\alpha(s))$
\item $(s_{1},s_{2}) \in R$ iff
\begin{itemize}{}
\item $\exists s_{1}^{\prime} \in \gamma(\alpha(s_{1})): 
\exists s_{2}^{\prime} \in \gamma(\alpha(s_{2})) : 
(\alpha(s_{1}),\alpha(s_{2})) \in R_{may}$ and,
\item $\forall s_{1}^{\prime} \in \gamma(\alpha(s_{1})): 
\exists s_{2}^{\prime} \in \gamma(\alpha(s_{2})) : 
(\alpha(s_{1}),\alpha(s_{2})) \in R_{must}$ \qed
\end{itemize}{}
\end{enumerate}
\end{defi}

\noindent For a given KMTS $\hat{M}$ and a pair of abstraction and concretization 
functions $\alpha$ and $\gamma$, Def.~\ref{def:concretize_kmts} introduces 
a set $\gamma(\hat{M})$ of \emph{concrete} KSs.  A state $s$ of a KS 
$M \in \gamma(\hat{M})$ is initial if its abstract state $\alpha(s)$ is 
also initial.  An atomic proposition in a concrete state $s$ is true 
(respectively, false) if it is also true (respectively, false) in its abstract 
state $\alpha(s)$. A transition from a concrete state $s_{1}$ to another 
concrete state $s_{2}$ exists, if and only if 
\begin{itemize}
\item{}
there are concrete states 
$s_{1}^{\prime} \in \gamma(\alpha(s_{1}))$ and
$s_{2}^{\prime} \in \gamma(\alpha(s_{2}))$, where  
$(\alpha(s_{1}),\alpha(s_{2})) \in R_{may}$, and 

\item{} 
there is at least one concrete state 
$s_{2}^{\prime} \in \gamma(\alpha(s_{2}))$ such that for all 
$s_{1}^{\prime} \in \gamma(\alpha(s_{1}))$ it holds that 
$(\alpha(s_{1}),\alpha(s_{2})) \in R_{must}$.  
\end{itemize}

\paragraph{Abstract Interpretation.} 
A pair of abstraction and concretization functions can 
be defined within an \emph{Abstract Interpretation}~\cite{CC77,CC79} framework. 
Abstract interpretation is a theory for a set of abstraction techniques, for 
which important 
properties for the model checking problem have been proved~\cite{DGG97,D96}.    

\begin{defi}
\label{def:mixsimul}
\emph{~\cite{DGG97,GJ02}}
Let $M = (S, S_{0}, R, L)$ be a concrete KS and 
$\hat{M}$ = $(\hat{S}, \hat{S_{0}}, R_{must},$  $R_{may}, \hat{L})$ 
be an abstract KMTS.  A relation $H \subseteq S \times \hat{S}$ 
for $M$ and $\hat{M}$ is called a \emph{mixed simulation},
when $H(s,\hat{s})$ implies:  
\begin{itemize}
\item $\hat{L}(\hat{s}) \subseteq L(s)$

\item if $r = (s,s^{\prime}) \in R$, then there is exists $\hat{s}^{\prime} \in 
\hat{S}$ such that $r_{may} = (\hat{s},\hat{s}^{\prime}) \in R_{may}$ 
and $(s^{\prime},\hat{s}^{\prime}) \in H$.

\item if $r_{must} = (\hat{s},\hat{s}^{\prime}) \in R_{must}$, 
then there exists $s^{\prime} \in S$ such that 
$r = (s,s^{\prime}) \in R$ and $(s^{\prime},\hat{s}^{\prime}) 
\in H$.\qed
\end{itemize}
\end{defi}

\noindent The abstraction function $\alpha$ of Def.~\ref{def:abs_kmts} is a 
mixed simulation for the KS $M$ and its abstract KMTS $\alpha(M)$.

\begin{thm}
\label{theor:preserv}
\emph{\cite{GJ02}}
Let $H \subseteq S \times \hat{S}$ be a mixed simulation 
from a KS $M = (S, S_{0}, R, L)$ to a KMTS $\hat{M} = 
(\hat{S}, \hat{S_{0}}, R_{must}, R_{may}, \hat{L})$.  
Then, for every CTL formula $\phi$ and every $(s,\hat{s}) 
\in H$ it holds that
\[
[(\hat{M},\hat{s}) \models \phi] \neq \bot \Rightarrow 
[(M,s) \models \phi] = [(\hat{M},\hat{s}) \models \phi] 
\]
\end{thm}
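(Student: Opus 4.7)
The plan is to proceed by structural induction on $\phi$ and show, for every pair $(s,\hat{s}) \in H$, the contrapositively-phrased statements
\[
[(\hat{M},\hat{s}) \models \phi] = \mathit{true} \Rightarrow [(M,s) \models \phi] = \mathit{true}
\]
and
\[
[(\hat{M},\hat{s}) \models \phi] = \mathit{false} \Rightarrow [(M,s) \models \phi] = \mathit{false},
\]
which together yield the stated implication since the 2-valued semantics over $M$ never returns $\bot$.

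\medskip
\noindent\textbf{Base cases.} For $\phi \equiv \mathit{true}$ and $\phi \equiv \mathit{false}$ the claim is immediate. For an atomic proposition $p$, the first clause of Def.~\ref{def:mixsimul} gives $\hat{L}(\hat{s}) \subseteq L(s)$, so if $p \in \hat{L}(\hat{s})$ then $p \in L(s)$, and if $\neg p \in \hat{L}(\hat{s})$ then $\neg p \in L(s)$; the fourth condition of Def.~\ref{def:ks} then pins down the two-valued truth of $p$ at $s$.

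\medskip
\noindent\textbf{Boolean cases.} Negation, conjunction and disjunction follow directly from the induction hypothesis together with the matching clauses of Def.~\ref{def:ctl3_semantics}, using that a two-valued answer at the concrete side suffices to discharge the ``and/or'' combinations.

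\medskip
\noindent\textbf{Next-time cases.} For $\phi \equiv AX\phi_1$ with $\hat{M},\hat{s}$-value $\mathit{true}$, pick any $(s,s') \in R$; the second clause of Def.~\ref{def:mixsimul} supplies $\hat{s}'$ with $(\hat{s},\hat{s}') \in R_{may}$ and $(s',\hat{s}') \in H$, so the induction hypothesis gives $[(M,s') \models \phi_1] = \mathit{true}$. Dually, if the value is $\mathit{false}$ we get a $(\hat{s},\hat{s}') \in R_{must}$ witnessing refutation, and the third clause of Def.~\ref{def:mixsimul} lifts it to some $(s,s') \in R$ with $(s',\hat{s}') \in H$, invoking the induction hypothesis to refute $\phi_1$ at $s'$. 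The case $EX\phi_1$ is symmetric, swapping the roles of $R_{may}$ and $R_{must}$.

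\medskip
\noindent\textbf{Path cases.} The main technical step, and the main obstacle, is to lift paths across $H$. I would isolate two lemmas, proved by coinduction/induction on length:
\begin{itemize}
\item \emph{(may-lifting)} For every $(s,\hat{s}) \in H$ and every concrete path $\pi = [s,s_1,s_2,\dots]$ in $M$, there is a may-path $\hat{\pi} = [\hat{s},\hat{s}_1,\hat{s}_2,\dots]$ in $\hat{M}$ with $(s_i,\hat{s}_i) \in H$ for all $i$; this uses the second bullet of Def.~\ref{def:mixsimul} step by step.
\item \emph{(must-lifting)} For every $(s,\hat{s}) \in H$ and every must-path $\hat{\pi} = [\hat{s},\hat{s}_1,\hat{s}_2,\dots]$ in $\hat{M}$, there is a concrete path $\pi = [s,s_1,s_2,\dots]$ in $M$ with $(s_i,\hat{s}_i) \in H$ for all $i$; here totality of $R$ (Def.~\ref{def:ks}) is needed to extend finite must-paths, and the third bullet of Def.~\ref{def:mixsimul} provides the inductive step.
\end{itemize}

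With these in hand, each of $AG$, $EG$, $AF$, $EF$, $A(\cdot\,U\,\cdot)$ and $E(\cdot\,U\,\cdot)$ reduces to a routine case analysis on Def.~\ref{def:ctl3_semantics}: a $\mathit{true}$ answer about all may-paths (resp.\ some must-path) on the abstract side is transported to all concrete paths (resp.\ some concrete path) via may-lifting (resp.\ must-lifting), and dually for $\mathit{false}$. For the until operators one additionally has to transport the pointwise two-valued witnesses for $\phi_1$ and $\phi_2$ along the lifted path; the induction hypothesis guarantees the needed agreement since every state on the lifted path is $H$-related to its abstract counterpart. The subtle clauses are conditions (i) and (ii) for $A(\phi_1 U \phi_2) = \mathit{false}$ and $E(\phi_1 U \phi_2) = \mathit{false}$: condition (ii) forces an infinite path, which in the must-lifting case requires that must-lifting preserves infiniteness (immediate from the construction) and in the may-lifting case is automatic because concrete paths are always infinite by totality of $R$. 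This combination of path-lifting plus the induction hypothesis closes all cases.
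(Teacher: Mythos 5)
The paper does not actually prove this theorem: it is imported verbatim from \cite{GJ02} as a known preservation result for mixed simulations, so there is no in-paper proof to compare against. Judged on its own, your plan is the standard argument and is essentially the one used in the literature: a simultaneous structural induction on the two implications ``true transfers down'' and ``false transfers down'' (both are needed because of the negation clause), with the atomic case from $\hat{L}(\hat{s}) \subseteq L(s)$, the $AX/EX$ cases from the two transition clauses of Def.~\ref{def:mixsimul}, and the temporal cases reduced to a may-lifting lemma (concrete paths lift to may-paths) and a must-lifting lemma (must-paths project to concrete paths). Your observation that the until clauses need the contrapositive form of the induction hypothesis to rule out premature witnesses, and that condition (ii) interacts with infiniteness, is exactly the right place to be careful.

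One point is thinner than it should be. In the clauses where a \emph{single existentially quantified must-path} must satisfy a property at \emph{every} state on it --- $[EG\phi_1]=\mathit{true}$ and $[AF\phi_1]=\mathit{false}$ --- your must-lifting lemma, as you state it, extends a finite maximal must-path to an infinite concrete path using totality of $R$. But the ``everywhere'' property is only guaranteed on the lifted prefix; nothing constrains $\phi_1$ on the appended tail, so the concrete conclusion ($EG\phi_1$ true at $s$, resp.\ $AF\phi_1$ false at $s$) does not follow from a finite witness. The argument (and indeed the theorem) goes through only under the convention that the witnessing must-paths in these clauses are infinite --- which is the reading under which \cite{GJ02} establishes the result, and which is consistent with the explicit $|\pi_{must}|=\infty$ side condition the paper includes in the $AU/EU$ refutation clauses. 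You should either restrict must-path witnesses for $EG/AF$ to infinite ones, or note that a finite maximal must-path cannot serve as a witness there; ``extend by totality'' is not a fix. The remaining cases ($AG$ false, $EF$ true, and all the may-path universals) are unaffected, since there the transported property is existential along the path or the lifted may-path is automatically infinite.
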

\noindent Theorem~\ref{theor:preserv} ensures that if a CTL formula 
$\phi$ has a definite truth value (i.e., true or false) in 
the abstract KMTS, then it has the same truth value in the concrete KS.  When 
we get $\bot$ from the 3-valued model checking of a CTL formula $\phi$, the 
result of model checking property $\phi$ on the corresponding KS can be 
either true or false.\\

\noindent \emph{Example.} An abstract KMTS $\hat{M}$ is presented in
Fig.~\ref{fig:ado_initial}, where all the states labeled by $q$
are grouped together, as are all states labeled by $\neg q$.

\begin{figure}[t]
\centering
\subfloat[The KS and initial KMTS.]
{\label{fig:ado_initial}\begin{tikzpicture}[->,>=stealth',auto,node 
distance=2cm, scale=0.75, thick, main node/.style={scale=0.75, minimum size = 
1cm, align=center,circle,fill=blue!10,draw}, abs node/.style={scale=0.75, 
minimum size = 1cm, align=center,rectangle,fill=blue!10,draw}]

\begin{scope}
  \node[main node] (1) {$s_0$ \\ $\neg q$};
  \node[main node] (2) [below of=1] {$s_3$ \\ $\neg q$};
  \node[main node] (3) [below of=2] {$s_6$ \\ $\neg q$};
  \node[main node] (4) [right of=1] {$s_1$ \\ $\neg q$};
  \node[main node] (5) [below of=4] {$s_4$ \\ $\neg q$};
  \node[main node] (6) [below of=5] {$s_7$ \\ $\neg q$};
  \node[main node] (7) [right of=4] {$s_2$ \\ $\neg q$};
  \node[main node] (8) [below of=7] {$s_5$ \\ $\neg q$};
  \node[main node] (9) [below of=8] {$s_8$ \\ $\neg q$};
  \node[main node] (10) [right of=8] {$s_{10}$ \\ $q$};
  \node[main node] (11) [below of=6] {$s_9$ \\ $\neg q$};

   \path
     (1) edge (4)
     (4) edge (7)
     (2) edge (5)
     (5) edge (8)
     (3) edge (6)
     (6) edge (9)
     (7) edge (10)
     (8) edge (10)
     (9) edge (10)
     (3) edge (11)
     (6) edge (11)
     (9) edge (11)

     (1) edge (2)
     (4) edge (2)
     (7) edge (2)

     (2) edge (3)
     (5) edge (3)
     (8) edge (3)

     (10) edge [bend right=70] node {} (1)
     
     (10) edge [loop right] (10)
     (11) edge [loop below] (11);
     \draw[->] ([xshift=-.8cm]1.west) --  (1.west);
     
     \draw[dotted] (-1, 1) rectangle (5, -7);
     \draw[dotted] (5.2, -1.2) rectangle (6.7, -2.8);
     \node[font = \small] at (3, -7.5) {$M$};
\end{scope}

\begin{scope}[xshift=9cm]
  \node[abs node] (12) {$\hat{s}_0$ \\ $\neg q$};
  \node[abs node] (13) [right of=12] {$\hat{s}_1$ \\ $q$};

     \path
     (12) edge [loop above] (12)
     (13) edge [loop above] (13)
     (13) edge [bend left=15] (12)
     (12) edge [bend left=15, dashed] (13);

     \draw[->] ([xshift=-.8cm]12.west) --  (12.west);

     \node[font = \small] at (1, -1.5) {$\alpha(M$)};

     \draw (-.5, -3) rectangle (4.2, -4.5);
     \node[font = \footnotesize] at (2.5, -3.5) {must-transition};
     \node[font = \footnotesize] at (2.5, -4) {may-transition};
     \draw[->] (0, -3.5) -- (.8, -3.5);
     \draw[->, dashed] (0, -4) -- (.8, -4);
  \end{scope}

\end{tikzpicture}}          

\subfloat[The KS and refined KMTS.]
{\label{fig:ado_refined}\begin{tikzpicture}[->,>=stealth',auto,node 
distance=2cm, scale=0.75, thick, main node/.style={scale=0.75, minimum size = 
1cm, align=center,circle,fill=blue!10,draw}, abs node/.style={scale=0.75, 
minimum size = 1cm, align=center,rectangle,fill=blue!10,draw}]

\begin{scope}
  \node[main node] (1) {$s_0$ \\ $\neg q$};
  \node[main node] (2) [below of=1] {$s_3$ \\ $\neg q$};
  \node[main node] (3) [below of=2] {$s_6$ \\ $\neg q$};
  \node[main node] (4) [right of=1] {$s_1$ \\ $\neg q$};
  \node[main node] (5) [below of=4] {$s_4$ \\ $\neg q$};
  \node[main node] (6) [below of=5] {$s_7$ \\ $\neg q$};
  \node[main node] (7) [right of=4] {$s_2$ \\ $\neg q$};
  \node[main node] (8) [below of=7] {$s_5$ \\ $\neg q$};
  \node[main node] (9) [below of=8] {$s_8$ \\ $\neg q$};
  \node[main node] (10) [right of=8] {$s_{10}$ \\ $q$};
  \node[main node] (11) [below of=6] {$s_9$ \\ $\neg q$};

   \path
     (1) edge (4)
     (4) edge (7)
     (2) edge (5)
     (5) edge (8)
     (3) edge (6)
     (6) edge (9)
     (7) edge (10)
     (8) edge (10)
     (9) edge (10)
     (3) edge (11)
     (6) edge (11)
     (9) edge (11)

     (1) edge (2)
     (4) edge (2)
     (7) edge (2)

     (2) edge (3)
     (5) edge (3)
     (8) edge (3)

     (10) edge [bend right=70] node {} (1)
     
     (10) edge [loop right] (10)
     (11) edge [loop below] (11);
     \draw[->] ([xshift=-.8cm]1.west) --  (1.west);
     
     \draw[dotted] (-1, 1) rectangle (2.8, -7);
     \draw[dotted] (3, 1) rectangle (5, -5);
     \draw[dotted] (5.2, -1.2) rectangle (6.7, -2.8);
     \node[font = \small] at (3, -7.5) {$M$};
\end{scope}

\begin{scope}[xshift=9cm]
  \node[abs node] (12) {$\hat{s}_{01}$ \\ $\neg q$};
  \node[abs node] (13) [below of=12] {$\hat{s}_{02}$ \\ $\neg q$};
  \node[abs node, yshift=1cm] (14) [right of=13] {$\hat{s}_1$ \\ $q$};

     \path
     (12) edge [loop above] (12)
     (14) edge [loop above] (14)
     (12) edge [bend left=15, dashed] (13)
     (14) edge (12)
     (13) edge (14)
     (13) edge [bend left=15] (12);

     \draw[->] ([xshift=-.8cm]12.west) --  (12.west);

     \node[font = \small] at (3, -2.5) {$\alpha_{\mathit{Refined}}(M$)};

     \draw (-.5, -3) rectangle (4.2, -4.5);
     \node[font = \footnotesize] at (2.5, -3.5) {must-transition};
     \node[font = \footnotesize] at (2.5, -4) {may-transition};
     \draw[->] (0, -3.5) -- (.8, -3.5);
     \draw[->, dashed] (0, -4) -- (.8, -4);
  \end{scope}

\end{tikzpicture}}
\caption{The KS and KMTSs for the ADO system.}
\label{fig:ado_ks_kmts}
\end{figure}
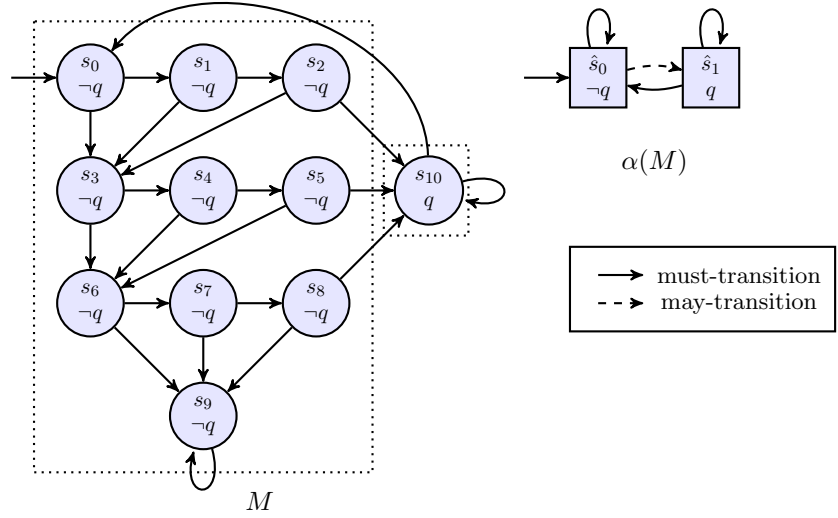
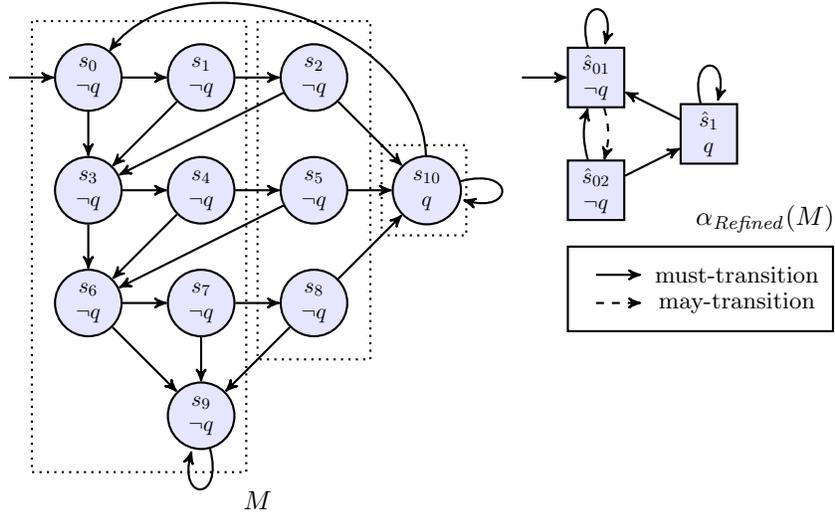

\subsection{Refinement}

When the outcome of verifying a CTL formula $\phi$ on an abstract
model using the 3-valued semantics is $\bot$, then a \emph{refinement}
step is needed to acquire a more \emph{precise} abstract model.  In the 
literature, there are refinement approaches for the 2-valued CTL 
semantics~\cite{CGJLV00,CPR05,CGR07}, as well as a number of techniques for the 
3-valued CTL model checking~\cite{GHJ01,SG04,SG07,GLLS07}.  The refinement 
technique that we adopt is an automated two-step process based 
on~\cite{CGJLV00,SG04}: 
\begin{enumerate}
\item Identify a \emph{failure state} in $\alpha(M)$ using the algorithms 
in~\cite{CGJLV00,SG04}; the cause of failure for a state $\hat{s}$ stems from an 
atomic proposition having an undefined value in $\hat{s}$, or from an outgoing 
may-transition from $\hat{s}$.   
\item Produce the abstract KMTS $\alpha_{\mathit{Refined}}(M)$, where 
$\alpha_{\mathit{Refined}}$ is a new abstraction function as in 
Def.~\ref{def:abs_kmts}, such that the identified failure state is refined into 
two states.  If the cause of failure is an undefined value of an atomic 
proposition in $\hat{s}$, then $\hat{s}$ is split into states $\hat{s}_{1}$ and 
$\hat{s}_{2}$, such that the atomic proposition is true in $\hat{s}_{1}$ and 
false in $\hat{s}_{2}$.  Otherwise, if the cause of failure is an outgoing 
may-transition from $\hat{s}$, then $\hat{s}$ is split into states $\hat{s}_{1}$ 
and $\hat{s}_{2}$, such that there is an outgoing must-transition from 
$\hat{s}_{1}$ and no outgoing may- or must-transition from $\hat{s}_{2}$.  
\end{enumerate}
The described refinement technique does not necessarily converge to an abstract 
KMTS with a definite model checking result.  A promising approach in order to 
overcome this restriction is by using a different type of abstract model, as 
in~\cite{SG04}, where the authors propose the use of Generalized KMTSs, which 
ensure monotonicity of the refinement process.\\

\noindent \emph{Example. } Consider the case where the ADO system requires a
mechanism for opening the door from any state with a direct action. This
could be an action done by an expert if an immediate opening of the door
is required.  This property can be expressed in CTL as $\phi = AGEXq$.  
Observe that in $\alpha(M)$ of Fig.~\ref{fig:ado_initial}, the absence of a 
must-transition from $\hat{s}_{0}$ to $\hat{s}_{1}$, where 
$[(\alpha(M),\hat{s}_{1}) \models q]
= true$, in conjunction with the existence of a may-transition from
$\hat{s}_{0}$ to $\hat{s}_{1}$, i.e. to a state where
$[(\alpha(M),\hat{s}_{1}) \models q] = true$, results in an undefined
model-checking outcome for $[(\alpha(M),\hat{s}_{0}) \models \phi]$.  
Notice that state $\hat{s}_{0}$ is the failure state, and the
may-transition from $\hat{s}_{0}$ to $\hat{s}_{1}$ is the cause of the
failure.  Consequently, $\hat{s}_{0}$ is refined into two states, 
$\hat{s}_{01}$ and $\hat{s}_{02}$, such that the former
has no transition to $\hat{s}_{1}$ and the latter has an 
outgoing must-transition to $\hat{s}_{1}$.  Thus, the may-transition which 
caused the undefined outcome is eliminated and for the refined KMTS 
$\alpha_{\mathit{Refined}}(M)$ it holds that 
$[\alpha_{\mathit{Refined}}(M),\hat{s}_{1}) \models \phi] = 
\mathit{false}$.  The initial KS and the refined KMTS 
$\alpha_{\mathit{Refined}}(M)$ are shown in 
Fig.~\ref{fig:ado_refined}.

\section{The Model Repair Problem}
\label{sec:mrp}

In this section, we formulate the problem of Model Repair.  A metric space over 
Kripke structures is defined to quantify their structural differences.  This 
allows us taking into account the \emph{minimality of changes} criterion
in Model Repair. 

\noindent Let $\pi$ be a function on the set of all functions $f: X \rightarrow Y$ such that:
\[
\pi(f) = \{(x, f(x)) \mid x \in X\}
\]

\noindent A \emph{restriction operator}  (denoted by $\upharpoonright$) for 
the domain of function $f$ is defined such that for $X_{1} \subseteq X$,
\[
f\upharpoonright_{X_{1}} = \{(x, f(x)) \mid x \in X_{1}\}
\] 
By $S^C$, we denote the complement of a set $S$. 

\begin{defi}
\label{def:metric_space}
For any two $M = (S,S_{0},R,L)$ and
$M^{\prime} = (S^{\prime},S^{\prime}_{0},R^{\prime},L^{\prime})$ in the set $K_{M}$
of all KSs, where
\begin{itemize}
\item[]
$S^{\prime} = (S \cup S_{\mathit{IN}}) - S_{\mathit{OUT}}$ 
for some $S_{\mathit{IN}} \subseteq S^{C}$, $S_{\mathit{OUT}} \subseteq S$,
\item[]
$R^{\prime} = (R \cup R_{\mathit{IN}}) - R_{\mathit{OUT}}$ 
for some $R_{\mathit{IN}} \subseteq R^{C}$, $R_{\mathit{OUT}} \subseteq R$,
\item[]
$L^{\prime} = S^{\prime} \rightarrow 2^{LIT}$, 
\end{itemize}
the {\em distance function} $d$ over $K_{M}$ is defined as follows:  
\[
d(M,M^{\prime}) = |S\,\Delta \, S^{\prime}| + |R \, \Delta \, R^{\prime}| + 
\frac{|\pi(L\upharpoonright_{S\cap S^{\prime}}) \,\Delta \, 
\pi(L^{\prime}\upharpoonright_{S\cap S^{\prime}})|}{2}
\]
with $A \, \Delta \, B$ representing the symmetric difference 
$(A-B)\cup(B-A)$.\qed
\end{defi}

\noindent For any two KSs defined over the same set of atomic propositions $AP$, 
function $d$ counts the number of differences $|S\,\Delta\, S^{\prime}|$ in the 
state spaces, the number of differences $|R\,\Delta\, R^{\prime}|$ in their 
transition relation and the number of common states with altered labeling. 

\begin{prop}
\label{prop:metric_space}
The ordered pair $(K_{M},d)$ is a metric space.
\end{prop}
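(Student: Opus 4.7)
The plan is to verify the four axioms of a metric space for $(K_M, d)$: non-negativity, symmetry, the identity of indiscernibles, and the triangle inequality. Non-negativity and symmetry are immediate, since each of the three summands in $d$ is built from a symmetric difference $|A\,\Delta\,B|$ and therefore is non-negative and invariant under swapping $M$ and $M'$. For the identity of indiscernibles, $d(M,M') = 0$ forces each summand to vanish, so $S = S'$, $R = R'$, and $\pi(L\upharpoonright_{S\cap S'}) = \pi(L'\upharpoonright_{S\cap S'})$; combined with $S = S'$, this yields $L = L'$, so $M$ and $M'$ agree on every component appearing in $d$.

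The main work is the triangle inequality $d(M,M'') \leq d(M,M') + d(M',M'')$. For the state and transition summands, I would apply the standard pointwise bound $|A\,\Delta\,C| \leq |A\,\Delta\,B| + |B\,\Delta\,C|$ directly, since $S, S', S''$ live in a common universe of states (and similarly for transitions). The subtle component is the labeling term, because the restriction domain $S\cap S''$ on the left-hand side is not aligned with the domains $S\cap S'$ and $S'\cap S''$ that appear on the right.

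The first simplification I would record is the identity
\[
\tfrac{1}{2}\bigl|\pi(L\upharpoonright_{S\cap S'}) \,\Delta\, \pi(L'\upharpoonright_{S\cap S'})\bigr| \;=\; \bigl|\{s \in S\cap S' : L(s) \neq L'(s)\}\bigr|,
\]
which holds because each disagreeing state contributes exactly the two distinct pairs $(s,L(s))$ and $(s,L'(s))$ to the symmetric difference, while each agreeing state contributes nothing. Writing $\ell(M,M')$ for this count, the triangle inequality reduces to showing that any shortfall in $\ell(M,M'') - \ell(M,M') - \ell(M',M'')$ is absorbed by the ``slack'' $(|S\,\Delta\,S'| + |S'\,\Delta\,S''|) - |S\,\Delta\,S''|$ present in the pointwise state bound.

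To close the argument, I would account for each $s \in S\cap S''$ with $L(s) \neq L''(s)$ (each contributing $1$ to $\ell(M,M'')$) according to whether $s \in S'$. If $s \in S'$, then $L(s) \neq L''(s)$ forces at least one of $L(s) \neq L'(s)$ and $L'(s) \neq L''(s)$, so $s$ contributes at least $1$ to $\ell(M,M') + \ell(M',M'')$. If $s \notin S'$, then $s \in S\setminus S'$ and $s \in S''\setminus S'$, contributing $1$ to each of $|S\,\Delta\,S'|$ and $|S'\,\Delta\,S''|$ while contributing $0$ to $|S\,\Delta\,S''|$; the resulting slack of $2$ in the state summand absorbs the unit missing from the label terms with room to spare. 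The main obstacle I anticipate is this case analysis: ensuring that every relevant discrepancy is charged exactly once to the correct right-hand summand, with no element silently double counted.
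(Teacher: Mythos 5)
Your proof is correct, and it takes a genuinely different --- and in fact more careful --- route than the paper's. The paper's proof simply asserts a component-wise triangle inequality for each of the three summands and adds them up; in particular, its item (4) claims
$|\pi(L'\upharpoonright_{S'\cap S''})\,\Delta\,\pi(L''\upharpoonright_{S'\cap S''})| \leq |\pi(L'\upharpoonright_{S'\cap S})\,\Delta\,\pi(L\upharpoonright_{S'\cap S})| + |\pi(L\upharpoonright_{S\cap S''})\,\Delta\,\pi(L''\upharpoonright_{S\cap S''})|$,
which is false in general precisely because the restriction domains are not aligned (take $S$ disjoint from $S'\cap S''$ with $L'$ and $L''$ disagreeing on some common state: the right-hand side is $0$ while the left-hand side is positive). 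You correctly identify this misalignment as the crux, rewrite the labeling summand as a count of disagreeing common states, and then run a pointwise charging argument in which a state of $S\cap S''$ lying outside $S'$ generates two units of slack in the state summand that absorb the missing unit in the labeling comparison. This is exactly the repair the paper's argument needs, and it buys a proof of the triangle inequality for $d$ as a whole rather than summand by summand; the paper's approach is shorter but rests on a component inequality that does not hold. One small point worth flagging in both proofs: since $d$ never inspects $S_0$, the identity of indiscernibles only yields agreement of $S$, $R$, and $L$ (as you carefully phrase it), not literal equality of the two Kripke structures.
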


\begin{proof}
We use the fact that the cardinality of the symmetric difference between any two sets is a distance metric.  
It holds that: 
\begin{enumerate}
\item $|S\Delta S^{\prime}| \geq 0$, $|R\Delta R^{\prime}| \geq 0$ and 
$|\pi(L\upharpoonright_{S\cap S^{\prime}})\Delta 
\pi(L^{\prime}\upharpoonright_{S\cap S^{\prime}})| \geq 0$ (non-negativity)
\item $|S\Delta S^{\prime}| = 0$ iff $S = S^{\prime}$, 
$|R\Delta R^{\prime}| = 0$ iff $R = R^{\prime}$ and $|\pi(L\upharpoonright_{S\cap S^{\prime}})|\Delta |\pi(L^{\prime}\upharpoonright_{S\cap S^{\prime}})| = 0$ iff $\pi(L\upharpoonright_{S\cap S^{\prime}}) = \pi(L^{\prime}\upharpoonright_{S\cap S^{\prime}})$ (identity of indiscernibles)
\item $|S\Delta S^{\prime}| = |S^{\prime}\Delta S|$, 
$|R\Delta R^{\prime}| = |R^{\prime}\Delta R|$ and 
$|\pi(L\upharpoonright_{S\cap S^{\prime}})\Delta \pi(L^{\prime}\upharpoonright_{S\cap S^{\prime}})| =\\ 
|\pi(L^{\prime}\upharpoonright_{S\cap S^{\prime}}) \Delta \pi(L\upharpoonright_{S\cap S^{\prime}})|$(symmetry)
\item $|S^{\prime}\Delta S^{\prime\prime}| \leq |S^{\prime}\Delta S| + |S \Delta S^{\prime\prime}|$,
$|R^{\prime}\Delta R^{\prime\prime}| \leq |R^{\prime}\Delta R| + |R \Delta R^{\prime\prime}|$, \\
$|\pi(L^{\prime}\upharpoonright_{S^{\prime}\cap S^{\prime\prime}})\Delta \pi(L^{\prime\prime}|_{S^{\prime}\cap S^{\prime\prime}})| \leq
|\pi(L^{\prime}\upharpoonright_{S^{\prime}\cap S})\Delta \pi(L\upharpoonright_{S^{\prime}\cap S})| + \\
|\pi(L\upharpoonright_{S\cap S^{\prime\prime}})\Delta \pi(L^{\prime\prime}|_{S\cap S^{\prime\prime}})|$ \\ (triangle inequality)
\end{enumerate}
We will prove that $d$ is a metric on $K_{M}$.  
Suppose $M, M^{\prime}, M^{\prime\prime} \in K_{M}$
\begin{itemize}
\item It easily follows from (1) that $d(M,M^{\prime}) \geq 0$ (non-negativity)
\item From (2), $d(M,M^{\prime}) = 0$ iff $M = M^{\prime}$ (identity of indiscernibles)
\item Adding the equations in (3), results in $d(M,M^{\prime}) = d(M^{\prime},M)$ (symmetry)
\item If we add the inequalities in (4), then we get 
$d(M^{\prime},M^{\prime\prime}) \leq d(M^{\prime},M) + d(M,M^{\prime\prime})$ 
(triangle inequality)
\end{itemize}
So, the proposition is true.   
\end{proof}

\begin{defi}
\label{def:metric_space_kmts}
For any two $\hat{M}$ = $(\hat{S}, \hat{S_{0}}, R_{must}, R_{may}, \hat{L})$ 
and 
$\hat{M}^{\prime}$ = $(\hat{S}^{\prime}, \hat{S_{0}}^{\prime}, R_{must}^{\prime},$  
$R_{may}^{\prime}, \hat{L}^{\prime})$ in the set $K_{\hat{M}}$ of all KMTSs, where 
\begin{itemize}
\item[]
$\hat{S}^{\prime} = (\hat{S} \cup \hat{S}_{\mathit{IN}}) - \hat{S}_{\mathit{OUT}}$ 
for some $\hat{S}_{\mathit{IN}} \subseteq \hat{S}^{C}$, $\hat{S}_{\mathit{OUT}} \subseteq \hat{S}$,
\item[]
$\hat{R}_{must}^{\prime} = (\hat{R}_{must} \cup \hat{R}_{\mathit{IN}}) - \hat{R}_{\mathit{OUT}}$ 
for some $\hat{R}_{\mathit{IN}} \subseteq \hat{R}_{must}^{C}$, $\hat{R}_{\mathit{OUT}} \subseteq \hat{R}_{must}$,
\item[]
$\hat{R}_{may}^{\prime} = (\hat{R}_{may} \cup \hat{R}_{\mathit{IN}}^{\prime}) - \hat{R}_{\mathit{OUT}}^{\prime}$ 
for some $\hat{R}_{\mathit{IN}}^{\prime} \subseteq \hat{R}_{may}^{C}$, $\hat{R}_{\mathit{OUT}}^{\prime} \subseteq \hat{R}_{may}$,
\item[]
$\hat{L}^{\prime} = \hat{S}^{\prime} \rightarrow 2^{LIT}$, 
\end{itemize}
the {\em distance function} $\hat{d}$ over $K_{\hat{M}}$ is defined as follows:   
\[
\begin{split}
\hat{d}(M,M^{\prime}) = |\hat{S} \, \Delta \, \hat{S}^{\prime}| + 
|\hat{R}_{must} \, \Delta \,  \hat{R}_{must}^{\prime}| + 
|(\hat{R}_{may} - \hat{R}_{must}) \, \Delta  \, (\hat{R}_{may}^{\prime} - 
\hat{R}_{must}^{\prime})| + \\ 
\frac{|\pi(\hat{L}\upharpoonright_{\hat{S}\cap \hat{S}^{\prime}}) \, \Delta  \, 
\pi(\hat{L}^{\prime}\upharpoonright_{\hat{S}\cap \hat{S}^{\prime}})|}{2}
\end{split}
\]
with $A \Delta B$ representing the symmetric difference $(A-B)\cup(B-A)$.
\end{defi}
\noindent
We note that $\hat{d}$ counts the differences between $\hat{R}_{may}^{\prime}$ and $\hat{R}_{may}$, and those between $\hat{R}_{must}^{\prime}$ and $\hat{R}_{must}$ separately, while avoiding to count the differences in the latter case twice (we remind that must-transitions are also included in $\hat{R}_{may}$). 

\begin{prop}
\label{prop:kmts_metric_space}
The ordered pair $(K_{\hat{M}},\hat{d})$ is a metric space.
\end{prop}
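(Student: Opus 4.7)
The plan is to mirror the proof of Proposition~\ref{prop:metric_space}, exploiting the fact that each of the four summands in $\hat{d}$ is (a scalar multiple of) the cardinality of a symmetric difference between two sets, and that the cardinality of a symmetric difference is itself a well-known metric on the power set of any universe. Since a non-negative linear combination of metrics on a common space is again a metric, it suffices to verify the four metric axioms termwise and then add.

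First I would list the four components of $\hat{d}$: the state-set term $|\hat{S}\,\Delta\,\hat{S}'|$, the must-transition term $|\hat{R}_{must}\,\Delta\,\hat{R}_{must}'|$, the strict-may term $|(\hat{R}_{may}-\hat{R}_{must})\,\Delta\,(\hat{R}_{may}'-\hat{R}_{must}')|$, and the labeling term $\tfrac{1}{2}|\pi(\hat{L}\upharpoonright_{\hat{S}\cap\hat{S}'})\,\Delta\,\pi(\hat{L}'\upharpoonright_{\hat{S}\cap\hat{S}'})|$. For each of these I would verify (1)~non-negativity, (2)~that the term vanishes iff the two corresponding components coincide, (3)~symmetry, and (4)~the triangle inequality, exactly as in the proof of Proposition~\ref{prop:metric_space}. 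The first, second and fourth terms are handled by an essentially verbatim repetition of the KS argument, since $\hat{S}$, $\hat{R}_{must}$ and $\hat{L}$ play the same formal roles as $S$, $R$ and $L$.

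The part that requires a small extra argument is the strict-may term. Here the key observation is that, once $\hat{R}_{must}$ is fixed as a component of the KMTS, the object $\hat{R}_{may}-\hat{R}_{must}$ is simply a subset of a fixed universe of potential transitions, so its symmetric difference with another such set behaves like an ordinary set distance. Moreover, because $R_{must}\subseteq R_{may}$ in every KMTS, a KMTS is uniquely determined by its states, its must-relation, its strict-may part $R_{may}-R_{must}$, and its labeling; therefore $\hat{d}(\hat{M},\hat{M}')=0$ forces equality of $\hat{S}$, $\hat{R}_{must}$, $\hat{R}_{may}-\hat{R}_{must}$, and hence of $\hat{R}_{may}$, together with agreement of $\hat{L}$ on $\hat{S}=\hat{S}'$, giving $\hat{M}=\hat{M}'$.

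The main obstacle I anticipate is nothing more than bookkeeping: in the identity-of-indiscernibles step one must be careful that splitting $\hat{R}_{may}$ into $\hat{R}_{must}$ and $\hat{R}_{may}-\hat{R}_{must}$ really does recover the full may-relation (it does, because of the containment $\hat{R}_{must}\subseteq \hat{R}_{may}$ built into Def.~\ref{def:kmts}). Once that is observed, summing the four verified inequalities yields non-negativity, identity, symmetry and the triangle inequality for $\hat{d}$, and the proposition follows.
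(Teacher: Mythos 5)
Your proposal is correct and follows essentially the same route as the paper, whose entire proof of this proposition is the single sentence that it is done in the same way as Prop.~\ref{prop:metric_space}. You in fact supply the one detail the paper leaves implicit — that for identity of indiscernibles the containment $\hat{R}_{must}\subseteq\hat{R}_{may}$ lets you recover $\hat{R}_{may}$ from $\hat{R}_{must}$ together with the strict-may part — which is a welcome addition rather than a deviation.
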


\begin{proof}
The proof is done in the same way as in Prop.~\ref{prop:metric_space}.  
\end{proof}

\begin{defi}
Given a KS $M$ and a CTL formula $\phi$ where $M \not\models \phi$, the 
Model Repair problem is to find a KS $M^{\prime}$, such that $M^{\prime} \models 
\phi$ and $d(M,M^{\prime})$ is minimum with respect to all such $M^{\prime}$.
\end{defi}

\noindent The Model Repair problem aims at modifying a KS such that the resulting KS satisfies a CTL formula that was violated before. The distance function $d$ of Def.~\ref{def:metric_space} features all the attractive properties of a distance metric. Given that no quantitative interpretation exists for predicates and logical operators in CTL, $d$ can be used in a model repair solution towards selecting minimum changes to the modified KS.

\section{The Abstract Model Repair Framework}
\label{sec:absmrp}

Our AMR framework integrates 3-valued
model checking, model refinement, and a new algorithm for selecting the
repair operations applied to the abstract model.  The goal
of this algorithm
is to apply the repair operations in a way, such that the number of
structural changes to the corresponding concrete model is
minimized.  The algorithm works based on a partial order relation 
over a set of basic repair operations for KMTSs.  This section describes
the steps involved in our AMR framework, the basic repair operations,
and the algorithm.

\subsection{The Abstract Model Repair Process}
\label{subsec:absmrp_approach}

\begin{figure}[t]
\centering
\includegraphics[width=120mm]{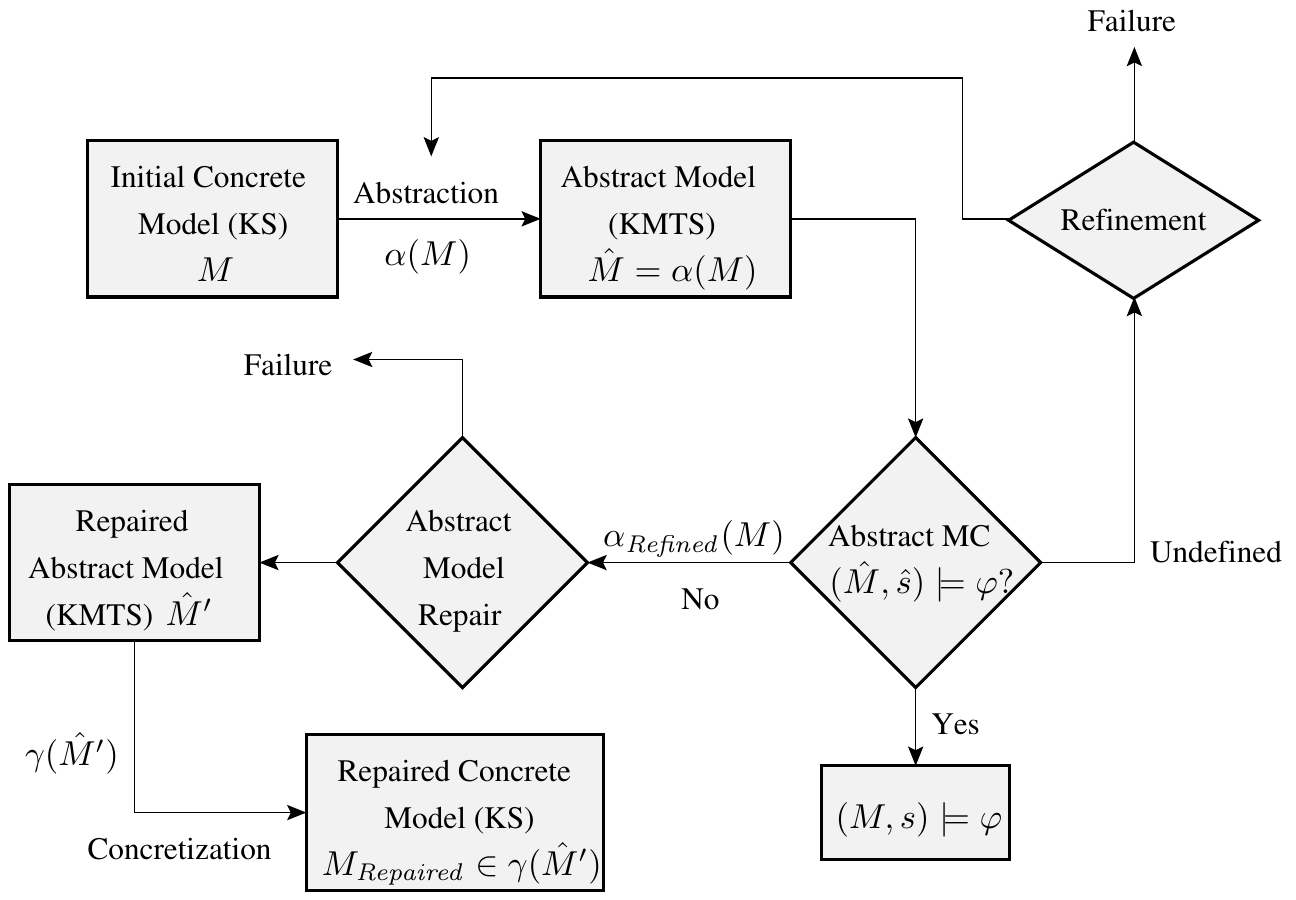} 
\caption{Abstract Model Repair Framework.}
\label{fig:abs_repair}
\end{figure} 

The process steps shown in Fig.~\ref{fig:abs_repair} rely 
on the KMTS abstraction of Def.~\ref{def:abs_kmts}.    
These are the following:
\begin{description}
\item[Step 1.] Given a KS $M$, a state $s$ of $M$, and 
a CTL property $\phi$, let us call $\hat{M}$
the KMTS obtained as in Def.~\ref{def:abs_kmts}.
\item[Step 2.] For state $\hat{s} = \alpha(s)$ of 
$\hat{M}$, we check whether 
$(\hat{M},\hat{s}) \models \phi$ by 3-valued model 
checking.  
\begin{description}
\item[Case 1.] If the result is \emph{true}, then,
according to Theorem~\ref{theor:preserv}, 
$(M,s) \models \phi$ and there is no need to 
repair $M$.     
\item[Case 2.] If the result is \emph{undefined}, 
then a refinement of $\hat{M}$ takes place, and:
\begin{description}
\item[Case 2.1.] If an $\hat{M}_{Refined}$ is found, 
the control is transferred to Step~2.  
\item[Case 2.2.] If a refined KMTS cannot be retrieved, the 
repair process terminates with a failure.   
\end{description}
\item[Case 3.] If the result is \emph{false},
then, from Theorem~\ref{theor:preserv},
$(M,s) \not\models \phi$ and the repair process
is enacted; the control is transferred to Step 3.
\end{description}
\item[Step 3.] 
The \emph{AbstractRepair} algorithm is called for 
the abstract KMTS ($\hat{M}_{Refined}$ or $\hat{M}$  
if no refinement has occurred), the state $\hat{s}$ and the 
property $\phi$.  
\begin{description}
\item[Case 1.] \emph{AbstractRepair} returns an 
$\hat{M}^{\prime}$ for which 
$(\hat{M}^{\prime},\hat{s}) \models \phi$.  
\item[Case 2.] \emph{AbstractRepair} fails to find an  
$\hat{M}^{\prime}$ for which the property holds true.   
\end{description}
\item[Step 4.] If \emph{AbstractRepair} returns an 
$\hat{M}^{\prime}$, then the process ends with selecting the
subset of KSs from $\gamma(\hat{M}^{\prime})$, with elements
whose distance $d$ from the KS $M$ is minimum 
with respect to all the KSs in $\gamma(\hat{M}^{\prime})$.
\end{description} 

\subsection{Basic Repair Operations}
\label{subsec:basic_ops}
We decompose the KMTS repair process into seven basic 
repair operations:  
\begin{description}
\item[AddMust] Adding a must-transition
\item[AddMay] Adding a may-transition
\item[RemoveMust] Removing a must-transition
\item[RemoveMay] Removing a may-transition
\item[ChangeLabel] Changing the labeling of a KMTS state
\item[AddState] Adding a new KMTS state
\item[RemoveState] Removing a disconnected KMTS state
\end{description}

\subsubsection{Adding a must-transition}

\begin{defi}[AddMust]
\label{def:AddMust}
For a given KMTS $\hat{M} = (\hat{S},\hat{S_{0}}, R_{must}, 
R_{may}, \hat{L})$ and 
$\hat{r}_{n} = (\hat{s}_{1},\hat{s}_{2}) \notin R_{must}$,  
$AddMust(\hat{M},\hat{r}_{n})$ is the KMTS 
$\hat{M^{\prime}} = (\hat{S}, \hat{S_{0}}, 
R_{must}^{\prime}, R_{may}^{\prime}, \hat{L})$
such that $R_{must}^{\prime} = R_{must} \cup \{\hat{r}_{n}\}$ and  
$R_{may}^{\prime} = R_{may} \cup \{\hat{r}_{n}\}$. \qed
\end{defi}

Since $R_{must} \subseteq R_{may}$, $\hat{r}_{n}$ must also 
be added to $R_{may}$, resulting in a new may-transition if 
$\hat{r}_{n} \notin R_{may}$.  Fig.~\ref{fig:AddMust} shows 
how the basic repair operation \emph{AddMust} modifies a 
given KMTS.  The newly added transitions are in bold.    

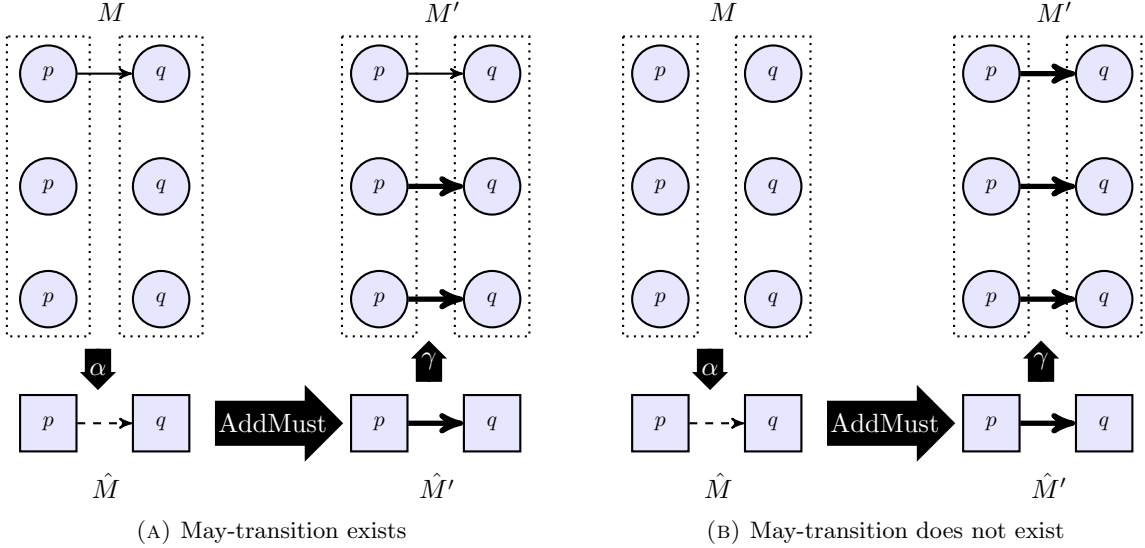
\begin{figure}[htb]
\centering
\subfloat[May-transition exists]{\begin{tikzpicture}[->,>=stealth',auto,node 
distance=2cm, scale=0.55, thick, main node/.style={scale=0.75, minimum size = 
1cm, align=center,circle,fill=blue!10,draw}, abs node/.style={scale=0.75, 
minimum size = 1cm, align=center,rectangle,fill=blue!10,draw}]

    \tikzstyle{bigarrows}=[line width=1.5mm,draw=black,-triangle 
90,postaction={draw, line width=6mm, shorten >=3mm, -}]

    \tikzstyle{smallarrows}=[line width=.5mm,draw=black,-triangle 
90,postaction={draw, line width=3.5mm, shorten >=2mm, -}]

\begin{scope}
  \node[main node] (1) {$p$};
  \node[main node] (2) [below of=1] {$p$};
  \node[main node] (3) [below of=2] {$p$};
  \node[main node] (4) [right of=1] {$q$};
  \node[main node] (5) [below of=4] {$q$};
  \node[main node] (6) [below of=5] {$q$};

   \path
     (1) edge (4);
     
     \draw[dotted] ([xshift=-3mm, yshift=9mm]1.west) rectangle ([xshift=3mm, 
yshift= -9mm]3.east);
     \draw[dotted] ([xshift=-3mm, yshift=9mm]4.west) rectangle ([xshift=3mm, 
yshift= -9mm]6.east);

   \node[font = \small] at ([xshift=-12mm, yshift=8mm]4.north) {$M$};

   \node[abs node] (7) at ([yshift=-23mm]3.south) {$p$};
   \node[abs node] (8) [right of=7] {$q$};
     
    \draw[->, dashed] (7) -- (8);

    \node[font = \small] at ([xshift=14mm, yshift=-8mm]7.south) {$\hat{M}$};

\draw [smallarrows] ([xshift=5mm, yshift=-12mm]3.east) -- ([xshift=5mm, 
yshift=-22mm]3.east) node[color = white, font=\small] at ([xshift=5mm, 
yshift=-17mm]3.east) {$\alpha$};

\draw [bigarrows] ([xshift=6mm]8.east) -- ([xshift=37mm]8.east) node[color = 
white, font=\small] at ([xshift=20mm]8.east) {AddMust};
   
     \end{scope}

 \begin{scope}[xshift=8cm]
  \node[main node] (1) {$p$};
  \node[main node] (2) [below of=1] {$p$};
  \node[main node] (3) [below of=2] {$p$};
  \node[main node] (4) [right of=1] {$q$};
  \node[main node] (5) [below of=4] {$q$};
  \node[main node] (6) [below of=5] {$q$};

   \path
     (1) edge (4)
     (2) edge [line width = .7mm] (5)
     (3) edge [line width = .7mm] (6);
     
     \draw[dotted] ([xshift=-2mm, yshift=9mm]1.west) rectangle ([xshift=2mm, 
yshift= -9mm]3.east);
     \draw[dotted] ([xshift=-2mm, yshift=9mm]4.west) rectangle ([xshift=2mm, 
yshift= -9mm]6.east);

   \node[font = \small] at ([xshift=-12mm, yshift=8mm]4.north) {$M'$};

   \node[abs node] (7) at ([yshift=-23mm]3.south) {$p$};
   \node[abs node] (8) [right of=7] {$q$};
     
    \draw[->, line width = .7mm] (7) -- (8);

    \node[font = \small] at ([xshift=14mm, yshift=-8mm]7.south) {$\hat{M'}$};

\draw [smallarrows] ([xshift=5mm, yshift=-20mm]3.east) -- ([xshift=5mm, 
yshift=-10mm]3.east) node[color = white, font=\small] at ([xshift=5mm, 
yshift=-15mm]3.east) {$\gamma$};
 \end{scope}

\end{tikzpicture}}
\hfill
\subfloat[May-transition does not exist]{\begin{tikzpicture}[->,>=stealth',auto,node 
distance=2cm, scale=0.55, thick, main node/.style={scale=0.75, minimum size = 
1cm, align=center,circle,fill=blue!10,draw}, abs node/.style={scale=0.75, 
minimum size = 1cm, align=center,rectangle,fill=blue!10,draw}]

    \tikzstyle{bigarrows}=[line width=1.5mm,draw=black,-triangle 
90,postaction={draw, line width=6mm, shorten >=3mm, -}]

    \tikzstyle{smallarrows}=[line width=.5mm,draw=black,-triangle 
90,postaction={draw, line width=3.5mm, shorten >=2mm, -}]

\begin{scope}
  \node[main node] (1) {$p$};
  \node[main node] (2) [below of=1] {$p$};
  \node[main node] (3) [below of=2] {$p$};
  \node[main node] (4) [right of=1] {$q$};
  \node[main node] (5) [below of=4] {$q$};
  \node[main node] (6) [below of=5] {$q$};

     \draw[dotted] ([xshift=-2mm, yshift=9mm]1.west) rectangle ([xshift=2mm, 
yshift= -9mm]3.east);
     \draw[dotted] ([xshift=-2mm, yshift=9mm]4.west) rectangle ([xshift=2mm, 
yshift= -9mm]6.east);

   \node[font = \small] at ([xshift=-12mm, yshift=8mm]4.north) {$M$};

   \node[abs node] (7) at ([yshift=-23mm]3.south) {$p$};
   \node[abs node] (8) [right of=7] {$q$};
     
    \draw[->, dashed] (7) -- (8);

    \node[font = \small] at ([xshift=14mm, yshift=-8mm]7.south) {$\hat{M}$};

\draw [smallarrows] ([xshift=5mm, yshift=-12mm]3.east) -- ([xshift=5mm, 
yshift=-22mm]3.east) node[color = white, font=\small] at ([xshift=5mm, 
yshift=-17mm]3.east) {$\alpha$};

\draw [bigarrows] ([xshift=6mm]8.east) -- ([xshift=37mm]8.east) node[color = 
white, font=\small] at ([xshift=20mm]8.east) {AddMust};
   
     \end{scope}

 \begin{scope}[xshift=8cm]
  \node[main node] (1) {$p$};
  \node[main node] (2) [below of=1] {$p$};
  \node[main node] (3) [below of=2] {$p$};
  \node[main node] (4) [right of=1] {$q$};
  \node[main node] (5) [below of=4] {$q$};
  \node[main node] (6) [below of=5] {$q$};

   \path
     (1) edge [line width = .7mm] (4)
     (2) edge [line width = .7mm] (5)
     (3) edge [line width = .7mm] (6);
     
     \draw[dotted] ([xshift=-2mm, yshift=9mm]1.west) rectangle ([xshift=2mm, 
yshift= -9mm]3.east);
     \draw[dotted] ([xshift=-2mm, yshift=9mm]4.west) rectangle ([xshift=2mm, 
yshift= -9mm]6.east);

   \node[font = \small] at ([xshift=-12mm, yshift=8mm]4.north) {$M'$};

   \node[abs node] (7) at ([yshift=-23mm]3.south) {$p$};
   \node[abs node] (8) [right of=7] {$q$};
     
    \draw[->, line width = .7mm] (7) -- (8);

    \node[font = \small] at ([xshift=14mm, yshift=-8mm]7.south) {$\hat{M}'$};

\draw [smallarrows] ([xshift=5mm, yshift=-20mm]3.east) -- ([xshift=5mm, 
yshift=-10mm]3.east) node[color = white, font=\small] at ([xshift=5mm, 
yshift=-15mm]3.east) {$\gamma$};
 \end{scope}

\end{tikzpicture}}
\caption{\emph{AddMust}: Adding a new must-transition}
\label{fig:AddMust}
\end{figure}

\begin{prop}
\label{prop:AddMust}
For any $\hat{M}^{\prime} = AddMust(\hat{M},\hat{r}_{n})$, it 
holds that $\hat{d}(\hat{M},\hat{M}^{\prime}) = 1$.\qed
\end{prop}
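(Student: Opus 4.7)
The plan is a direct calculation from Def.~\ref{def:metric_space_kmts}, broken into the four summands of $\hat{d}(\hat{M},\hat{M}')$. Since $AddMust$ by construction preserves $\hat{S}$, $\hat{S}_0$, and $\hat{L}$, the state-set summand $|\hat{S}\,\Delta\,\hat{S}'|$ and the labeling summand $\tfrac{1}{2}|\pi(\hat{L}\!\upharpoonright_{\hat{S}\cap\hat{S}'})\,\Delta\,\pi(\hat{L}'\!\upharpoonright_{\hat{S}\cap\hat{S}'})|$ both vanish trivially, so the proof reduces to analysing the two transition-related summands.

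For the must summand, the hypothesis $\hat{r}_n\notin R_{must}$ together with $R_{must}' = R_{must}\cup\{\hat{r}_n\}$ immediately gives $R_{must}\,\Delta\,R_{must}' = \{\hat{r}_n\}$, contributing exactly $1$.

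The main obstacle is showing that the pure-may summand $|(R_{may}-R_{must})\,\Delta\,(R_{may}'-R_{must}')|$ contributes $0$, which is precisely where the design of $\hat{d}$ (using $R_{may}-R_{must}$ in place of $R_{may}$ to avoid double-counting, as emphasised in the remark following Def.~\ref{def:metric_space_kmts}) must be carefully invoked. I would split on whether $\hat{r}_n$ is already a may-transition. In the case $\hat{r}_n\notin R_{may}$ (illustrated in Fig.~\ref{fig:AddMust}(b)) both $R_{may}$ and $R_{must}$ acquire $\hat{r}_n$ simultaneously, so a one-line set calculation gives $R_{may}' - R_{must}' = (R_{may}\cup\{\hat{r}_n\}) - (R_{must}\cup\{\hat{r}_n\}) = R_{may} - R_{must}$, and the symmetric difference is empty. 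In the case $\hat{r}_n\in R_{may}$ (Fig.~\ref{fig:AddMust}(a)) we have $R_{may}' = R_{may}$, so no new may-edge is introduced; here one must argue that the migration of $\hat{r}_n$ out of the pure-may set $R_{may}-R_{must}$ into $R_{must}'$ is already fully accounted for by the must summand, in accordance with the paper's stated no-double-counting convention. Summing the four contributions then yields $\hat{d}(\hat{M},\hat{M}') = 0+1+0+0 = 1$, as claimed.
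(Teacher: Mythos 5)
Your overall strategy---evaluating the four summands of $\hat{d}$ from Def.~\ref{def:metric_space_kmts} separately---is the natural one (the paper states this proposition without any proof, so a direct computation is exactly what is called for), and your treatment of the state summand, the labeling summand, the must summand, and your Case~2 ($\hat{r}_n \notin R_{may}$) is correct. The gap is in Case~1, where $\hat{r}_n \in R_{may} - R_{must}$: there you assert that the pure-may summand vanishes because the migration of $\hat{r}_n$ into $R_{must}'$ is ``already accounted for'' by the no-double-counting convention, but you never actually compute the summand, and the computation does not give $0$. Since $\hat{r}_n \in R_{may}$ and $\hat{r}_n \notin R_{must}$, we have $\hat{r}_n \in R_{may} - R_{must}$, whereas $R_{may}' - R_{must}' = R_{may} - (R_{must} \cup \{\hat{r}_n\})$ does not contain $\hat{r}_n$; all other pairs are unaffected, so $(R_{may}-R_{must}) \,\Delta\, (R_{may}'-R_{must}') = \{\hat{r}_n\}$ and the third summand equals $1$, which would give $\hat{d}(\hat{M},\hat{M}') = 0+1+1+0 = 2$ in this case.

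The remark following Def.~\ref{def:metric_space_kmts} about avoiding double counting is realized by the formula only in your Case~2, where $\hat{r}_n$ enters both relations simultaneously and therefore cancels out of $R_{may}' - R_{must}'$; it is not a free-standing principle that can be invoked to override what the formula yields when a may-transition is merely upgraded to a must-transition. As written, your argument therefore proves the claim only under the additional hypothesis $\hat{r}_n \notin R_{may}$. To cover the upgrade case one must either restrict the proposition to that hypothesis, or replace the two transition summands by a quantity such as $|(R_{must}\,\Delta\,R_{must}') \cup (R_{may}\,\Delta\,R_{may}')|$ (the cardinality of the union rather than the sum of the two symmetric differences), which does evaluate to $1$ in both cases; with the definition exactly as given in the paper, the missing step in your Case~1 cannot be filled in.
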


\begin{defi}
\label{def:add_must_ks}
Let $M = (S,S_{0},R,L)$ be a KS and let $\alpha(M) = (\hat{S},\hat{S_{0}}, 
R_{must}, R_{may}, \hat{L})$ be the abstract KMTS derived from $M$ 
as in Def.~\ref{def:abs_kmts}.  Also, let      
$\hat{M}^{\prime} = AddMust(\alpha(M),\hat{r}_{n})$ 
for some $\hat{r}_{n} = (\hat{s}_{1},\hat{s}_{2}) \notin R_{must}$.  
The set $K_{min} \subseteq \gamma(\hat{M}^{\prime})$ with all KSs, 
whose distance $d$ from $M$ is minimized is:
\begin{equation}
K_{min} = \{M^{\prime} \mid M^{\prime} = (S, S_{0}, R \cup R_{n}, L)\} 
\end{equation}
where $R_{n}$ is given for one $s_{2} \in \gamma(\hat{s}_{2})$ as follows:  
\begin{equation} \nonumber
	R_{n} = \bigcup_{s_{1} \in \gamma(\hat{s}_{1})} \{(s_{1}, s_{2}) \mid \nexists s \in \gamma(\hat{s}_{2}): (s_{1},s) \in R\} \eqno{\qEd}
\end{equation}
\end{defi}

\noindent Def.~\ref{def:add_must_ks} implies that when the 
\emph{AbstractRepair} algorithm applies \emph{AddMust} 
on the abstract KMTS $\hat{M}$, then a set of 
KSs is retrieved from the concretization of 
$\hat{M}^{\prime}$.  The same holds for all other 
basic repair operations and consequently, when 
\emph{AbstractRepair} finds a 
repaired KMTS, one or more KSs can be obtained 
for which property $\phi$ holds.     

\begin{prop}
\label{prop:add_must}
For all $M^{\prime} \in K_{min}$, it holds that
$1 \leq d(M,M^{\prime}) \leq \left|S\right|$.
\end{prop}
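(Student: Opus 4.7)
The plan is to unfold the distance function $d$ on the pair $(M,M')$ for an arbitrary $M' \in K_{min}$, and then reduce the claim to a simple count of how many transitions are actually added when realizing the abstract operation $AddMust$ at the concrete level.

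First, I would observe from Def.~\ref{def:add_must_ks} that every $M' \in K_{min}$ has exactly the same state set $S$ and the same labeling $L$ as $M$; only the transition relation changes, with $R' = R \cup R_n$. Plugging this into Def.~\ref{def:metric_space}, the state-symmetric-difference term vanishes, $S \cap S' = S$ so $\pi(L\upharpoonright_{S\cap S'}) = \pi(L'\upharpoonright_{S\cap S'})$ and the labeling term vanishes as well. Hence
\[
d(M,M') \;=\; |R \,\Delta\, R'| \;=\; |R_n \setminus R|.
\]
Moreover, every pair $(s_1,s_2) \in R_n$ satisfies $\nexists s \in \gamma(\hat{s}_2): (s_1,s) \in R$, and since $s_2 \in \gamma(\hat{s}_2)$ this forces $(s_1,s_2) \notin R$. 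Thus $R_n \cap R = \emptyset$, giving $d(M,M') = |R_n|$. The proposition therefore reduces to showing $1 \leq |R_n| \leq |S|$.

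For the lower bound, I would use the hypothesis of Def.~\ref{def:AddMust} that $\hat{r}_n = (\hat{s}_1,\hat{s}_2) \notin R_{must}$ in $\alpha(M)$. By clause~3 of Def.~\ref{def:abs_kmts}, $\hat{r}_n \notin R_{must}$ means the $\forall\exists$ condition fails, i.e.\ there exists some $s_1 \in \gamma(\hat{s}_1)$ with $\nexists s \in \gamma(\hat{s}_2): (s_1,s) \in R$. For this $s_1$, the pair $(s_1,s_2)$ (with $s_2$ the fixed witness chosen in Def.~\ref{def:add_must_ks}) belongs to $R_n$, so $R_n \neq \emptyset$ and $|R_n| \geq 1$. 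For the upper bound, the union in the definition of $R_n$ is indexed by $s_1 \in \gamma(\hat{s}_1)$, and for each such $s_1$ at most one pair $(s_1,s_2)$ is added (the witness $s_2 \in \gamma(\hat{s}_2)$ is fixed once and for all). Hence $|R_n| \leq |\gamma(\hat{s}_1)| \leq |S|$.

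The argument is almost entirely bookkeeping; the only subtlety I anticipate is making the lower-bound step airtight by explicitly citing the failure of the $\forall\exists$ condition in Def.~\ref{def:abs_kmts} (so as to guarantee at least one concrete $s_1$ without any $R$-successor in $\gamma(\hat{s}_2)$), since without this hypothesis the definition of $R_n$ could in principle produce an empty set and collapse $M'$ onto $M$. Once that point is noted, the two inequalities follow directly from the combinatorial shape of $R_n$.
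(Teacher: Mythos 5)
Your proof is correct and follows essentially the same route as the paper's: both reduce $d(M,M')$ to $|R_n|$ by observing that the state and labeling terms vanish, and then bound $|R_n|$ between $1$ and $|S|$. The paper simply asserts $|R_n|\geq 1$ and $|R_n|\leq |S|$ without justification, whereas you correctly supply the missing details (the failure of the $\forall\exists$ condition for $\hat{r}_n\notin R_{must}$ guaranteeing a witness $s_1$, the disjointness $R_n\cap R=\emptyset$, and the one-pair-per-$s_1$ counting), which is a worthwhile tightening but not a different argument.
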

\begin{proof}
Recall that $$d(M,M^{\prime}) = |S\Delta S^{\prime}| + |R\Delta R^{\prime}| + 
\frac{|\pi(L\upharpoonright_{S\cap S^{\prime}})\Delta 
\pi(L^{\prime}\upharpoonright_{S\cap S^{\prime}})|}{2}$$
Since $|S\Delta S^{\prime}| = 0$ and $|\pi(L\upharpoonright_{S\cap 
S^{\prime}})\Delta \pi(L^{\prime}\upharpoonright_{S\cap S^{\prime}})| = 0$, 
$d(M,M^{\prime}) = |R\Delta R^{\prime}| = 
|R - R^{\prime}| + |R^{\prime} - R| = 0 + |R_{n}|$.
Since $|R_{n}| \geq 1$ and $|R_{n}| \leq |S|$, it is proved that 
$1 \leq d(M,M^{\prime}) \leq \left|S\right|$. 
\end{proof}
From Prop.~\ref{prop:add_must}, we conclude that 
a lower and upper bound exists for the distance 
between $M$ and any $M^{\prime} \in K_{min}$.

\subsubsection{Adding a may-transition}
\begin{defi}[AddMay]
\label{def:AddMay}
For a given KMTS $\hat{M} = (\hat{S},\hat{S_{0}}, R_{must}, 
R_{may}, \hat{L})$ and 
$\hat{r}_{n} = (\hat{s}_{1},\hat{s}_{2}) \notin R_{may}$, 
$AddMay(\hat{M},\hat{r}_{n})$ is the KMTS 
$\hat{M^{\prime}} = (\hat{S}, \hat{S_{0}}, 
R_{must}^{\prime}, R_{may}^{\prime}, \hat{L})$
such that $R_{must}^{\prime} = R_{must} \cup \{\hat{r}_{n}\}$ 
if $\left|S_{1}\right| = 1$ or
$R_{must}^{\prime} = R_{must}$ if $\left|S_{1}\right| > 1$ 
for $S_{1} = \{s_{1} \mid s_{1} \in \gamma(\hat{s}_{1})\}$ and 
$R_{may}^{\prime} = R_{may} \cup \{\hat{r}_{n}\}$. \qed
\end{defi}

From Def.~\ref{def:AddMay}, we conclude that there are 
two different cases in adding a new may-transition 
$\hat{r}_{n}$; adding also a must-transition or not.  
In fact, $\hat{r}_{n}$ is also a must-transition if and 
only if the set of the corresponding concrete states of 
$\hat{s}_{1}$ is a singleton.  Fig.~\ref{fig:AddMay} 
displays the two different cases of applying basic 
repair operation \emph{AddMay} to a KMTS.

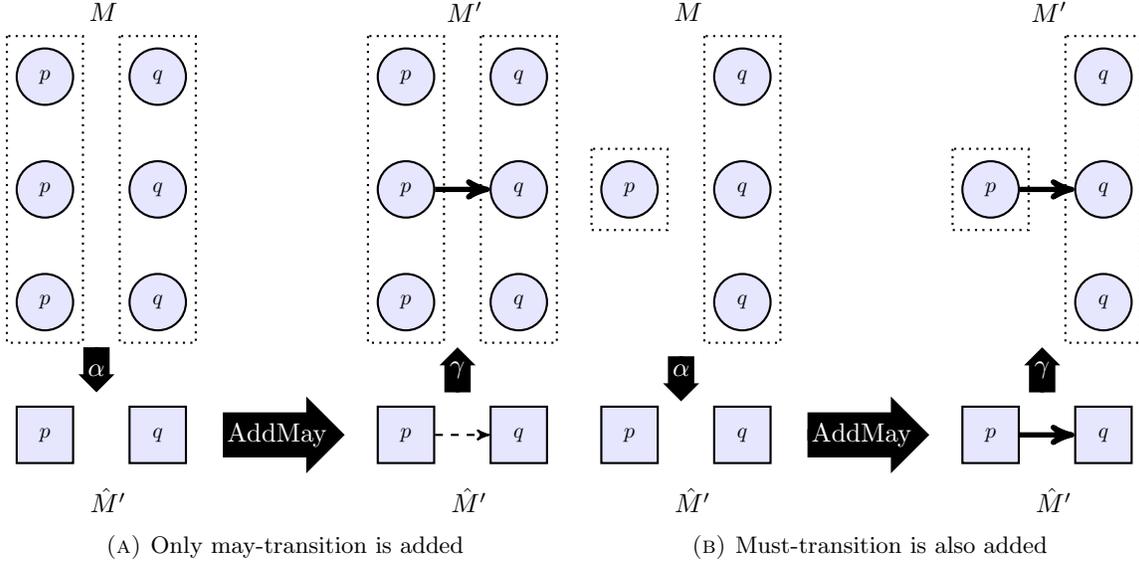
\begin{figure}[htb]
\centering
\subfloat[Only may-transition is 
added]{\begin{tikzpicture}[->,>=stealth',auto,node 
distance=2cm, scale=0.6, thick, main node/.style={scale=0.75, minimum size = 
1cm, align=center,circle,fill=blue!10,draw}, abs node/.style={scale=0.75, 
minimum size = 1cm, align=center,rectangle,fill=blue!10,draw}]

    \tikzstyle{bigarrows}=[line width=1.5mm,draw=black,-triangle 
90,postaction={draw, line width=6mm, shorten >=3mm, -}]

    \tikzstyle{smallarrows}=[line width=.5mm,draw=black,-triangle 
90,postaction={draw, line width=3.5mm, shorten >=2mm, -}]

\begin{scope}
  \node[main node] (1) {$p$};
  \node[main node] (2) [below of=1] {$p$};
  \node[main node] (3) [below of=2] {$p$};
  \node[main node] (4) [right of=1] {$q$};
  \node[main node] (5) [below of=4] {$q$};
  \node[main node] (6) [below of=5] {$q$};

     \draw[dotted] ([xshift=-2mm, yshift=9mm]1.west) rectangle ([xshift=2mm, 
yshift= -9mm]3.east);
     \draw[dotted] ([xshift=-2mm, yshift=9mm]4.west) rectangle ([xshift=2mm, 
yshift= -9mm]6.east);

   \node[font = \small] at ([xshift=-12mm, yshift=8mm]4.north) {$M$};

   \node[abs node] (7) at ([yshift=-23mm]3.south) {$p$};
   \node[abs node] (8) [right of=7] {$q$};
     
    \node[font = \small] at ([xshift=14mm, yshift=-8mm]7.south) {$\hat{M}'$};

\draw [smallarrows] ([xshift=5mm, yshift=-10mm]3.east) -- ([xshift=5mm, 
yshift=-20mm]3.east) node[color = white, font=\small] at ([xshift=5mm, 
yshift=-15mm]3.east) {$\alpha$};

\draw [bigarrows] ([xshift=8mm]8.east) -- ([xshift=35mm]8.east) node[color = 
white, font=\small] at ([xshift=20mm]8.east) {AddMay};
   
     \end{scope}

 \begin{scope}[xshift=8cm]
  \node[main node] (1) {$p$};
  \node[main node] (2) [below of=1] {$p$};
  \node[main node] (3) [below of=2] {$p$};
  \node[main node] (4) [right of=1] {$q$};
  \node[main node] (5) [below of=4] {$q$};
  \node[main node] (6) [below of=5] {$q$};

   \path
     (2) edge [line width = .7mm] (5);
     
     \draw[dotted] ([xshift=-2mm, yshift=9mm]1.west) rectangle ([xshift=2mm, 
yshift= -9mm]3.east);
     \draw[dotted] ([xshift=-2mm, yshift=9mm]4.west) rectangle ([xshift=2mm, 
yshift= -9mm]6.east);

   \node[font = \small] at ([xshift=-12mm, yshift=8mm]4.north) {$M'$};

   \node[abs node] (7) at ([yshift=-23mm]3.south) {$p$};
   \node[abs node] (8) [right of=7] {$q$};
     
    \draw[->, dashed] (7) -- (8);

    \node[font = \small] at ([xshift=14mm, yshift=-8mm]7.south) {$\hat{M}'$};

\draw [smallarrows] ([xshift=5mm, yshift=-20mm]3.east) -- ([xshift=5mm, 
yshift=-10mm]3.east) node[color = white, font=\small] at ([xshift=5mm, 
yshift=-15mm]3.east) {$\gamma$};
 \end{scope}

\end{tikzpicture}}
\hfill
\subfloat[Must-transition is also 
added]{\begin{tikzpicture}[->,>=stealth',auto,node 
distance=2cm, scale=0.6, thick, main node/.style={scale=0.75, minimum size = 
1cm, align=center,circle,fill=blue!10,draw}, abs node/.style={scale=0.75, 
minimum size = 1cm, align=center,rectangle,fill=blue!10,draw}]

    \tikzstyle{bigarrows}=[line width=1.5mm,draw=black,-triangle 
90,postaction={draw, line width=6mm, shorten >=3mm, -}]

    \tikzstyle{smallarrows}=[line width=.5mm,draw=black,-triangle 
90,postaction={draw, line width=3.5mm, shorten >=2mm, -}]

\begin{scope}
  \node[main node] (2) {$p$};
  \node[main node] (5) [right of=2] {$q$};
  \node[main node] (4) [above of=5]{$q$};
  \node[main node] (6) [below of=5] {$q$};
     
     \draw[dotted] ([xshift=-2mm, yshift=9mm]2.west) rectangle ([xshift=2mm, 
yshift= -9mm]2.east);
     \draw[dotted] ([xshift=-2mm, yshift=9mm]4.west) rectangle ([xshift=2mm, 
yshift= -9mm]6.east);

   \node[font = \small] at ([xshift=-12mm, yshift=8mm]4.north) {$M$};

   \node[abs node] (7) at ([yshift=-48mm]2.south) {$p$};
   \node[abs node] (8) [right of=7] {$q$};
     
    \node[font = \small] at ([xshift=14mm, yshift=-8mm]7.south) {$\hat{M}'$};

\draw [smallarrows] ([xshift=5mm, yshift=-37mm]2.east) -- ([xshift=5mm, 
yshift=-47mm]2.east) node[color = white, font=\small] at ([xshift=5mm, 
yshift=-40mm]2.east) {$\alpha$};

\draw [bigarrows] ([xshift=8mm]8.east) -- ([xshift=35mm]8.east) node[color = 
white, font=\small] at ([xshift=20mm]8.east) {AddMay};
   
     \end{scope}

 \begin{scope}[xshift=8cm]
  \node[main node] (2) {$p$};
  \node[main node] (5) [right of=2] {$q$};
  \node[main node] (4) [above of=5]{$q$};
  \node[main node] (6) [below of=5] {$q$};

   \path
     (2) edge [line width = .7mm] (5);
     
     \draw[dotted] ([xshift=-2mm, yshift=9mm]2.west) rectangle ([xshift=2mm, 
yshift= -9mm]2.east);
     \draw[dotted] ([xshift=-2mm, yshift=9mm]4.west) rectangle ([xshift=2mm, 
yshift= -9mm]6.east);

   \node[font = \small] at ([xshift=-12mm, yshift=8mm]4.north) {$M'$};

   \node[abs node] (7) at ([yshift=-48mm]2.south) {$p$};
   \node[abs node] (8) [right of=7] {$q$};
     
    \draw[->, line width = .7mm] (7) -- (8);

    \node[font = \small] at ([xshift=14mm, yshift=-8mm]7.south) {$\hat{M}'$};

\draw [smallarrows] ([xshift=5mm, yshift=5mm]3.east) -- ([xshift=5mm, 
yshift=15mm]3.east) node[color = white, font=\small] at ([xshift=5mm, 
yshift=10mm]3.east) {$\gamma$};
 \end{scope}

\end{tikzpicture}}
\caption{\emph{AddMay}: Adding a new must-transition}
\label{fig:AddMay}
\end{figure}

\begin{prop}
\label{prop:AddMay}
For any $\hat{M}^{\prime} = AddMay(\hat{M},\hat{r}_{n})$, it 
holds that $\hat{d}(\hat{M},\hat{M}^{\prime}) = 1$.\qed
\end{prop}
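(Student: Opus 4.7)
The plan is to compute $\hat{d}(\hat{M},\hat{M}^{\prime})$ directly from Def.~\ref{def:metric_space_kmts} by splitting into the two cases of Def.~\ref{def:AddMay}, namely $|S_{1}|=1$ and $|S_{1}|>1$, where $S_{1}=\gamma(\hat{s}_{1})$. In both cases the operation modifies neither the state set nor the labeling, so $\hat{S}=\hat{S}^{\prime}$ and $\hat{L}=\hat{L}^{\prime}$; hence the first and last summands in $\hat{d}$ vanish, and the whole computation reduces to tracking $|\hat{R}_{must}\,\Delta\,\hat{R}_{must}^{\prime}|$ and $|(\hat{R}_{may}-\hat{R}_{must})\,\Delta\,(\hat{R}_{may}^{\prime}-\hat{R}_{must}^{\prime})|$.

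In the subcase $|S_{1}|>1$, only the may-relation changes: $R_{must}^{\prime}=R_{must}$ contributes $0$, and since $\hat{r}_{n}\notin R_{may}$ (hence also $\hat{r}_{n}\notin R_{may}-R_{must}$) whereas $\hat{r}_{n}\in R_{may}^{\prime}-R_{must}^{\prime}$, the symmetric difference of the ``may-only'' sets is exactly $\{\hat{r}_{n}\}$, of cardinality $1$. Total: $\hat{d}(\hat{M},\hat{M}^{\prime})=1$.

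In the subcase $|S_{1}|=1$, we have $R_{must}^{\prime}=R_{must}\cup\{\hat{r}_{n}\}$. Using $R_{must}\subseteq R_{may}$ and $\hat{r}_{n}\notin R_{may}$, we get $\hat{r}_{n}\notin R_{must}$, so $|R_{must}\,\Delta\,R_{must}^{\prime}|=1$. The key bookkeeping step is to verify that the may-only difference is $0$: since $\hat{r}_{n}$ is added to \emph{both} $R_{may}$ and $R_{must}$, a short set computation gives $R_{may}^{\prime}-R_{must}^{\prime}=(R_{may}\cup\{\hat{r}_{n}\})-(R_{must}\cup\{\hat{r}_{n}\})=R_{may}-R_{must}$, so the symmetric difference is empty. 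Again $\hat{d}(\hat{M},\hat{M}^{\prime})=0+1+0+0=1$.

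There is no real obstacle here; the only point requiring a small amount of care is the cancellation in the $|S_{1}|=1$ case, where one must confirm that adding $\hat{r}_{n}$ simultaneously to $R_{may}$ and $R_{must}$ does not produce a spurious contribution to the ``may-only'' term — this is precisely the reason Def.~\ref{def:metric_space_kmts} was phrased in terms of $R_{may}-R_{must}$ rather than $R_{may}$ itself.
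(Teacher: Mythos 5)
Your proof is correct: the paper states this proposition without proof (treating it as immediate from Definitions~\ref{def:AddMay} and~\ref{def:metric_space_kmts}), and your direct computation — splitting on $|S_{1}|$ and checking that in the $|S_{1}|=1$ case the simultaneous addition of $\hat{r}_{n}$ to both relations leaves $R_{may}^{\prime}-R_{must}^{\prime}=R_{may}-R_{must}$ — is exactly the verification the paper leaves implicit. Nothing is missing.
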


\begin{defi}
\label{def:add_may_ks}
Let $M = (S,S_{0},R,L)$ be a KS and let $\alpha(M) = (\hat{S},\hat{S_{0}}, 
R_{must}, R_{may}, \hat{L})$ be the abstract KMTS derived from $M$ 
as in Def.~\ref{def:abs_kmts}.  
Also, let $\hat{M}^{\prime} = AddMay(\alpha(M),\hat{r}_{n})$ 
for some 
$\hat{r}_{n} = (\hat{s}_{1},\hat{s}_{2}) \notin R_{may}$.  
The set $K_{min} \subseteq \gamma(\hat{M}^{\prime})$ with all KSs, 
whose structural distance $d$ from $M$ is minimized is 
given by:
\begin{equation}
K_{min} = \{M^{\prime} \mid M^{\prime} = (S, S_{0}, R \cup \{r_{n}\}, L)\} 
\end{equation}
where $r_{n} \in R_{n}$ and $R_{n} = \{r_{n}=(s_{1},s_{2}) \mid 
s_{1} \in \gamma(\hat{s}_{1}), s_{2} \in \gamma(\hat{s}_{2})$ and  
$r_{n} \notin R\}$.\qed
\end{defi}

\begin{prop}
\label{prop:add_may}
For all $M^{\prime} \in K_{min}$, it holds that $d(M,M^{\prime}) = 1$.
\end{prop}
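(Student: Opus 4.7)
The plan is to unfold the metric $d$ from Def.~\ref{def:metric_space} applied to $M$ and an arbitrary $M^{\prime} \in K_{min}$, and observe that only the transition-relation term is nonzero. Fix any $M^{\prime} \in K_{min}$. By Def.~\ref{def:add_may_ks}, $M^{\prime}$ has the form $(S, S_{0}, R \cup \{r_{n}\}, L)$ where $r_{n} = (s_{1}, s_{2})$ with $s_{1} \in \gamma(\hat{s}_{1})$, $s_{2} \in \gamma(\hat{s}_{2})$, and crucially $r_{n} \notin R$.

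The first step is to dispose of the state-set and labeling terms. Since the state components of $M$ and $M^{\prime}$ coincide, $|S \,\Delta\, S^{\prime}| = 0$ and $S \cap S^{\prime} = S$. Combined with $L^{\prime} = L$, this gives $L\upharpoonright_{S \cap S^{\prime}} = L^{\prime}\upharpoonright_{S \cap S^{\prime}}$, so the labeling contribution $|\pi(L\upharpoonright_{S \cap S^{\prime}}) \,\Delta\, \pi(L^{\prime}\upharpoonright_{S \cap S^{\prime}})|/2$ vanishes.

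The second step handles the transition term. Since $r_{n} \notin R$ and $R^{\prime} = R \cup \{r_{n}\}$, we have $R \setminus R^{\prime} = \emptyset$ and $R^{\prime} \setminus R = \{r_{n}\}$, hence $|R \,\Delta\, R^{\prime}| = 1$. Adding the three contributions according to Def.~\ref{def:metric_space} yields $d(M, M^{\prime}) = 0 + 1 + 0 = 1$, as claimed. The argument is essentially bookkeeping, mirroring the proof of Prop.~\ref{prop:add_must}; no substantive obstacle arises, the only point deserving attention being that the clause $r_{n} \notin R$ built into the definition of $R_{n}$ is precisely what pins the transition term at $1$ rather than at $0$.
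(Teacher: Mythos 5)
Your proof is correct and follows essentially the same route as the paper's: both expand the distance $d$ from Def.~\ref{def:metric_space}, observe that the state-set and labeling terms vanish, and reduce the computation to $|R \,\Delta\, R^{\prime}| = |\{r_{n}\}| = 1$. Your only addition is making explicit that the clause $r_{n} \notin R$ in the definition of $R_{n}$ is what guarantees the transition term equals $1$, which the paper leaves implicit.
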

\begin{proof}
$d(M,M^{\prime}) = |S\Delta S^{\prime}| + |R\Delta R^{\prime}| + 
\frac{|\pi(L\upharpoonright_{S\cap S^{\prime}})\Delta \pi(L^{\prime}\upharpoonright_{S\cap S^{\prime}})|}{2}$.
Because $|S\Delta S^{\prime}| = 0$ and $|\pi(L\upharpoonright_{S\cap S^{\prime}})\Delta \pi(L^{\prime}\upharpoonright_{S\cap S^{\prime}})| = 0$, 
$d(M,M^{\prime}) = |R\Delta R^{\prime}| = 
|R - R^{\prime}| + |R^{\prime} - R| = 0 + |\{r_{n}\}| = 1$.
So, we prove that $d(M,M^{\prime}) = 1$. 
\end{proof}

\subsubsection{Removing a must-transition}
\begin{defi}[RemoveMust]
\label{def:RemoveMust}
For a given KMTS $\hat{M} = (\hat{S},\hat{S_{0}}, R_{must}, 
R_{may}, \hat{L})$ and 
$\hat{r}_{m} = (\hat{s}_{1},\hat{s}_{2}) \in R_{must}$, 
$RemoveMust(\hat{M},\hat{r}_{m})$ is the KMTS 
$\hat{M^{\prime}} = (\hat{S}, \hat{S_{0}}, 
R_{must}^{\prime},$ $R_{may}^{\prime}, \hat{L})$
such that $R_{must}^{\prime} = R_{must} - \{\hat{r}_{m}\}$ and 
$R_{may}^{\prime} = R_{may} - \{\hat{r}_{m}\}$ 
if $\left|S_{1}\right| = 1$ or
$R_{may}^{\prime} = R_{may}$ if $\left|S_{1}\right| > 1$ 
for $S_{1} = \{s_{1} \mid s_{1} \in \gamma(\hat{s}_{1})\}$.\qed
\end{defi}

Removing a must-transition $\hat{r}_{m}$, in some 
special and maybe rare cases, could also result in the 
deletion of the may-transition $\hat{r}_{m}$ as well.  In fact, 
this occurs if transitions to the concrete states of 
$\hat{s}_{2}$ exist only from one concrete state 
of the corresponding ones of $\hat{s}_{1}$.  
These two cases for function \emph{RemoveMust} are 
presented graphically in Fig.~\ref{fig:RemoveMust}.     

\begin{figure}[htb]
\centering
\subfloat[May-transition is not 
removed]{\begin{tikzpicture}[->,>=stealth',auto,node 
distance=2cm, scale=0.55, thick, main node/.style={scale=0.75, minimum size = 
1cm, align=center,circle,fill=blue!10,draw}, abs node/.style={scale=0.75, 
minimum size = 1cm, align=center,rectangle,fill=blue!10,draw}]

    \tikzstyle{bigarrows}=[line width=1.5mm,draw=black,-triangle 
90,postaction={draw, line width=6mm, shorten >=3mm, -}]

    \tikzstyle{smallarrows}=[line width=.5mm,draw=black,-triangle 
90,postaction={draw, line width=3.5mm, shorten >=2mm, -}]

\begin{scope}
  \node[main node] (1) {$p$};
  \node[main node] (2) [below of=1] {$p$};
  \node[main node] (3) [below of=2] {$p$};
  \node[main node] (4) [right of=1] {$q$};
  \node[main node] (5) [below of=4] {$q$};
  \node[main node] (6) [below of=5] {$q$};

   \path
     (1) edge (4)
     (2) edge (5)
     (3) edge (6);
     
     \draw[dotted] ([xshift=-3mm, yshift=9mm]1.west) rectangle ([xshift=3mm, 
yshift= -9mm]3.east);
     \draw[dotted] ([xshift=-3mm, yshift=9mm]4.west) rectangle ([xshift=3mm, 
yshift= -9mm]6.east);

   \node[font = \small] at ([xshift=-12mm, yshift=8mm]4.north) {$M$};

   \node[abs node] (7) at ([yshift=-23mm]3.south) {$p$};
   \node[abs node] (8) [right of=7] {$q$};
     
    \draw[->] (7) -- (8);

    \node[font = \small] at ([xshift=14mm, yshift=-8mm]7.south) {$\hat{M}$};

\draw [smallarrows] ([xshift=5mm, yshift=-12mm]3.east) -- ([xshift=5mm, 
yshift=-22mm]3.east) node[color = white, font=\small] at ([xshift=5mm, 
yshift=-17mm]3.east) {$\alpha$};

\draw [bigarrows] ([xshift=1mm]8.east) -- ([xshift=41mm]8.east) node[color = 
white, font=\small] at ([xshift=20mm]8.east) {RemoveMust};
   
     \end{scope}

 \begin{scope}[xshift=8.3cm]
  \node[main node] (1) {$p$};
  \node[main node] (2) [below of=1] {$p$};
  \node[main node] (3) [below of=2] {$p$};
  \node[main node] (4) [right of=1] {$q$};
  \node[main node] (5) [below of=4] {$q$};
  \node[main node] (6) [below of=5] {$q$};

   \path
     (1) edge (4)
     (2) edge (5)
     (3) edge (6);
     
\node[black, auto=false, cross out, -, draw] at ([xshift=6.5mm]3.east){};

     \draw[dotted] ([xshift=-2mm, yshift=9mm]1.west) rectangle ([xshift=2mm, 
yshift= -9mm]3.east);
     \draw[dotted] ([xshift=-2mm, yshift=9mm]4.west) rectangle ([xshift=2mm, 
yshift= -9mm]6.east);

   \node[font = \small] at ([xshift=-12mm, yshift=8mm]4.north) {$M'$};

   \node[abs node] (7) at ([yshift=-23mm]3.south) {$p$};
   \node[abs node] (8) [right of=7] {$q$};
     
    \draw[->, dashed] (7) -- (8);

    \node[font = \small] at ([xshift=14mm, yshift=-8mm]7.south) {$\hat{M'}$};

\draw [smallarrows] ([xshift=5mm, yshift=-20mm]3.east) -- ([xshift=5mm, 
yshift=-10mm]3.east) node[color = white, font=\small] at ([xshift=5mm, 
yshift=-15mm]3.east) {$\gamma$};

\end{scope}

\end{tikzpicture}}    
\hfill
\subfloat[May-transition is also 
removed]{\begin{tikzpicture}[->,>=stealth',auto,node 
distance=2cm, scale=0.6, thick, main node/.style={scale=0.75, minimum size = 
1cm, align=center,circle,fill=blue!10,draw}, abs node/.style={scale=0.75, 
minimum size = 1cm, align=center,rectangle,fill=blue!10,draw}]

    \tikzstyle{bigarrows}=[line width=1.5mm,draw=black,-triangle 
90,postaction={draw, line width=6mm, shorten >=3mm, -}]

    \tikzstyle{smallarrows}=[line width=.5mm,draw=black,-triangle 
90,postaction={draw, line width=3.5mm, shorten >=2mm, -}]

\begin{scope}
  \node[main node] (2) {$p$};
  \node[main node] (5) [right of=2] {$q$};
  \node[main node] (4) [above of=5]{$q$};
  \node[main node] (6) [below of=5] {$q$};

  \path
    (2) edge (5);
    
     \draw[dotted] ([xshift=-2mm, yshift=9mm]2.west) rectangle ([xshift=2mm, 
yshift= -9mm]2.east);
     \draw[dotted] ([xshift=-2mm, yshift=9mm]4.west) rectangle ([xshift=2mm, 
yshift= -9mm]6.east);

   \node[font = \small] at ([xshift=-12mm, yshift=8mm]4.north) {$M$};

   \node[abs node] (7) at ([yshift=-48mm]2.south) {$p$};
   \node[abs node] (8) [right of=7] {$q$};

     \path
    (7) edge (8);

    \node[font = \small] at ([xshift=14mm, yshift=-8mm]7.south) {$\hat{M}'$};

\draw [smallarrows] ([xshift=5mm, yshift=-37mm]2.east) -- ([xshift=5mm, 
yshift=-47mm]2.east) node[color = white, font=\small] at ([xshift=5mm, 
yshift=-40mm]2.east) {$\alpha$};

\draw [bigarrows] ([xshift=4mm]8.east) -- ([xshift=41mm]8.east) node[color = 
white, font=\small] at ([xshift=21mm]8.east) {RemoveMust};
   
     \end{scope}

 \begin{scope}[xshift=8.3cm]
  \node[main node] (2) {$p$};
  \node[main node] (5) [right of=2] {$q$};
  \node[main node] (4) [above of=5]{$q$};
  \node[main node] (6) [below of=5] {$q$};

   \path
     (2) edge (5);
     
     \draw[dotted] ([xshift=-2mm, yshift=9mm]2.west) rectangle ([xshift=2mm, 
yshift= -9mm]2.east);
     \draw[dotted] ([xshift=-2mm, yshift=9mm]4.west) rectangle ([xshift=2mm, 
yshift= -9mm]6.east);

   \node[font = \small] at ([xshift=-12mm, yshift=8mm]4.north) {$M'$};

   \node[abs node] (7) at ([yshift=-48mm]2.south) {$p$};
   \node[abs node] (8) [right of=7] {$q$};
     
    \draw[->] (7) -- (8);

    \node[font = \small] at ([xshift=14mm, yshift=-8mm]7.south) {$\hat{M}'$};

\draw [smallarrows] ([xshift=5mm, yshift=5mm]3.east) -- ([xshift=5mm, 
yshift=15mm]3.east) node[color = white, font=\small] at ([xshift=5mm, 
yshift=10mm]3.east) {$\gamma$};

\node[black, auto=false, cross out, -, draw] at ([xshift=5mm]7.east){};
\node[black, auto=false, cross out, -, draw] at ([xshift=6mm]2.east){};

 \end{scope}

\end{tikzpicture}}
\caption{\emph{RemoveMust}: Removing an existing must-transition}
\label{fig:RemoveMust}
\end{figure}
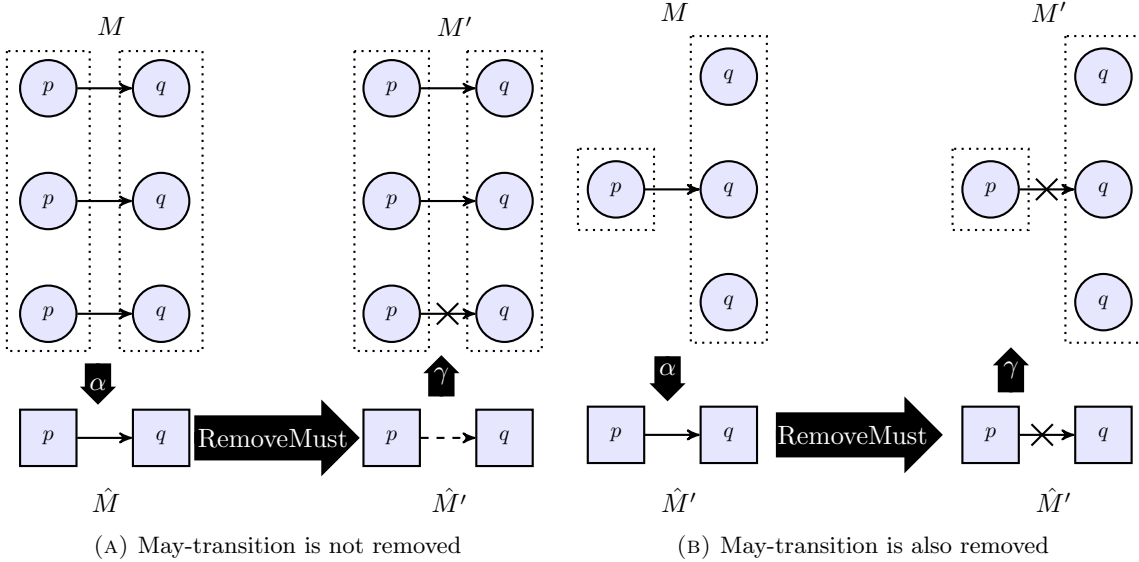

\begin{prop}
\label{prop:RemoveMust}
For any $\hat{M}^{\prime} = RemoveMust(\hat{M},\hat{r}_{m})$, it 
holds that $\hat{d}(\hat{M},\hat{M}^{\prime}) = 1$.\qed
\end{prop}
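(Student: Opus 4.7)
The strategy is a direct computation of $\hat{d}(\hat{M}, \hat{M}')$ from Def.~\ref{def:metric_space_kmts}. Since \emph{RemoveMust} preserves $\hat{S}$, $\hat{S_0}$, and $\hat{L}$, the state-symmetric-difference summand and the labeling summand vanish, and the proof reduces to verifying that the two transition-relation summands contribute $1$ in total. I would then case-split on $|S_1| = |\gamma(\hat{s}_1)|$, matching the two clauses of Def.~\ref{def:RemoveMust}.

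In the subcase $|S_1| = 1$, the transition $\hat{r}_m$ is removed from both $\hat{R}_{must}$ and $\hat{R}_{may}$. A short symmetric-difference calculation gives $|\hat{R}_{must} \, \Delta \, \hat{R}_{must}'| = 1$. For the pure-may term, the key observation is that $\hat{r}_m$ lies in neither $\hat{R}_{may} - \hat{R}_{must}$ (because $\hat{r}_m \in \hat{R}_{must}$) nor $\hat{R}_{may}' - \hat{R}_{must}'$ (because $\hat{r}_m \notin \hat{R}_{may}'$), and the two sets agree elsewhere; hence this term is $0$, giving a total of $1$.

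In the subcase $|S_1| > 1$, only $\hat{R}_{must}$ loses $\hat{r}_m$ while $\hat{R}_{may}$ is unchanged---structurally, a single ``demotion'' of $\hat{r}_m$ from must- to pure-may-status. Once more $|\hat{R}_{must} \, \Delta \, \hat{R}_{must}'| = 1$. This subcase is where the main obstacle lies: the pure-may sets do differ by exactly $\{\hat{r}_m\}$, so the bookkeeping must invoke the non-double-counting convention explicitly spelled out in the remark following Def.~\ref{def:metric_space_kmts}---must- and pure-may-differences are summed, but no edge is charged in both---to yield a combined contribution of $1$. Making this accounting rigorous is the step that demands the most care, and the analogous arguments for Propositions~\ref{prop:AddMust} and~\ref{prop:AddMay}, where the same phenomenon arises when an added transition is already a may-edge, should serve as the template.
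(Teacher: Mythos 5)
The paper offers no proof of this proposition (it is asserted with a terminal \qed), so your attempt can only be judged on its own terms. Your setup is right: the state and labeling summands of $\hat{d}$ vanish, the case split on $|S_{1}|$ matches Def.~\ref{def:RemoveMust}, and the $|S_{1}|=1$ subcase is handled correctly ($|\hat{R}_{must}\,\Delta\,\hat{R}_{must}^{\prime}|=1$ while $\hat{r}_{m}$ belongs to neither pure-may set, so the third summand is $0$ and the total is $1$).

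The gap is exactly where you suspected it. In the subcase $|S_{1}|>1$ you correctly establish that $\hat{R}_{must}\,\Delta\,\hat{R}_{must}^{\prime}=\{\hat{r}_{m}\}$ and that $(\hat{R}_{may}-\hat{R}_{must})\,\Delta\,(\hat{R}_{may}^{\prime}-\hat{R}_{must}^{\prime})=\{\hat{r}_{m}\}$, but Def.~\ref{def:metric_space_kmts} defines $\hat{d}$ as a plain sum of cardinalities, so these two facts yield $1+1=2$, not $1$. The ``non-double-counting convention'' you invoke is not a formal clause of the definition that can override this arithmetic: the remark after Def.~\ref{def:metric_space_kmts} only explains why the third summand uses $\hat{R}_{may}-\hat{R}_{must}$ rather than $\hat{R}_{may}$, a choice that prevents double charging precisely when a transition is removed from (or added to) both relations simultaneously --- i.e.\ it is what makes your $|S_{1}|=1$ subcase come out to $1$. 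It says nothing about a transition that migrates from $\hat{R}_{must}$ into the pure-may set, which is what happens when $|S_{1}|>1$. As written, your final step asserts the desired conclusion rather than deriving it; either the proposition requires a different distance (e.g.\ one charging $|(\hat{R}_{must}\,\Delta\,\hat{R}_{must}^{\prime})\cup(\hat{R}_{may}\,\Delta\,\hat{R}_{may}^{\prime})|$, which does give $1$ in both subcases), or the $|S_{1}|>1$ subcase genuinely has $\hat{d}=2$ under Def.~\ref{def:metric_space_kmts}. Note also that Propositions~\ref{prop:AddMust} and~\ref{prop:AddMay} cannot serve as the template you hope for: the promotion of an existing may-transition to a must-transition exhibits the identical $1+1$ phenomenon, so the same unresolved issue recurs there.
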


\begin{defi}
\label{def:remove_must_ks}
Let $M = (S,S_{0},R,L)$ be a KS and let $\alpha(M) = (\hat{S},\hat{S_{0}}, 
R_{must}, R_{may}, \hat{L})$ be the abstract KMTS derived from $M$ 
as in Def.~\ref{def:abs_kmts}.  Also, let 
$\hat{M}^{\prime} = RemoveMust(\alpha(M),\hat{r}_{m})$ for some 
$\hat{r}_{m} = (\hat{s}_{1},\hat{s}_{2}) \in R_{must}$.
The set $K_{min} \subseteq \gamma(\hat{M}^{\prime})$ with all KSs, 
whose structural distance $d$ from $M$ is minimized is given by:
\begin{equation}
K_{min} = \{M^{\prime} \mid M^{\prime} = (S, S_{0}, R - \{R_{m}\}, L)\} 
\end{equation}
where $R_{m}$ is given for one $s_{1} \in \gamma(\hat{s}_{1})$ as follows:
\begin{equation} \nonumber
R_{m} =  \bigcup_{s_{2} \in \gamma(\hat{s}_{2})} \{(s_{1}, s_{2}) \in R\} \eqno{\qEd}
\end{equation}
\end{defi}

\begin{prop}
\label{prop:remove_must}
For $M^{\prime}$, it holds that
$1 \leq d(M,M^{\prime}) \leq \left|S\right|$.
\end{prop}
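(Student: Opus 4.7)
The plan is to proceed exactly as in the proof of Proposition~\ref{prop:add_must}, reducing the distance $d(M,M')$ to the cardinality of the set $R_m$ of removed transitions and then bounding $|R_m|$ above and below.

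First I would unpack the definition of $d$ from Def.~\ref{def:metric_space}. Since $M' = (S, S_0, R - R_m, L) \in K_{min}$ shares the same state set $S$ and the same labeling function $L$ as $M$, both $|S \Delta S'| = 0$ and $|\pi(L\upharpoonright_{S \cap S'}) \Delta \pi(L'\upharpoonright_{S \cap S'})| = 0$. Hence
\[
d(M,M') = |R \Delta R'| = |R - R'| + |R' - R| = |R_m| + 0 = |R_m|,
\]
where I use that $R_m \subseteq R$ by construction, so nothing new is introduced.

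For the lower bound $|R_m| \geq 1$, I would invoke the fact that $\hat{r}_m = (\hat{s}_1,\hat{s}_2) \in R_{must}$ together with the $\forall\exists$-condition of Def.~\ref{def:abs_kmts}: every concrete state $s_1' \in \gamma(\hat{s}_1)$ has at least one outgoing transition in $R$ into some $s_2' \in \gamma(\hat{s}_2)$. In particular, the fixed $s_1 \in \gamma(\hat{s}_1)$ used in Def.~\ref{def:remove_must_ks} has at least one such outgoing transition, which is collected into $R_m$, giving $|R_m| \geq 1$.

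For the upper bound $|R_m| \leq |S|$, I would observe that $R_m$ consists only of transitions out of the single source state $s_1$ into states of $\gamma(\hat{s}_2) \subseteq S$, so $|R_m| \leq |\gamma(\hat{s}_2)| \leq |S|$. Combining the two bounds yields $1 \leq d(M,M') \leq |S|$. There is no real obstacle here; the argument is entirely parallel to Prop.~\ref{prop:add_must}, with the mild subtlety being to make sure that the $\forall\exists$-condition is what guarantees a nonempty $R_m$ for the chosen $s_1$.
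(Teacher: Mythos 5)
Your proof is correct and follows essentially the same route as the paper's: reduce $d(M,M')$ to $|R_m|$ via the vanishing state and labeling terms, then bound $|R_m|$ between $1$ and $|S|$. You in fact supply slightly more detail than the paper, which merely asserts $|R_m|\geq 1$ and $|R_m|\leq|S|$, whereas you correctly ground the lower bound in the $\forall\exists$-condition for must-transitions and the upper bound in $|\gamma(\hat{s}_2)|\leq|S|$.
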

\begin{proof}
$d(M,M^{\prime}) = |S\Delta S^{\prime}| + |R\Delta R^{\prime}| + 
\frac{|\pi(L\upharpoonright_{S\cap S^{\prime}})\Delta \pi(L^{\prime}\upharpoonright_{S\cap S^{\prime}})|}{2}$.
Because $|S\Delta S^{\prime}| = 0$ and $|\pi(L\upharpoonright_{S\cap S^{\prime}})\Delta \pi(L^{\prime}\upharpoonright_{S\cap S^{\prime}})| = 0$, 
$d(M,M^{\prime}) = |R\Delta R^{\prime}| = 
|R - R^{\prime}| + |R^{\prime} - R| = |R_{m}| +  0 = |R_{m}|$.
It holds that $|R_{m}| \geq 1$ and $|R_{m}| \leq |S|$.
So, we proved that 
$1 \leq d(M,M^{\prime}) \leq \left|S\right|$. 
\end{proof}

\subsubsection{Removing a may-transition}
\begin{defi}[RemoveMay]
\label{def:RemoveMay}
For a given KMTS $\hat{M} = (\hat{S},\hat{S_{0}}, R_{must}, 
R_{may}, \hat{L})$ and 
$\hat{r}_{m} = (\hat{s}_{1},\hat{s}_{2}) \in R_{may}$, 
$RemoveMay(\hat{M},\hat{r}_{m})$ is the KMTS 
$\hat{M^{\prime}} = (\hat{S}, \hat{S_{0}}, 
R_{must}^{\prime}, R_{may}^{\prime},$ $\hat{L})$
such that $R_{must}^{\prime} = R_{must} - \{\hat{r}_{m}\}$ and 
$R_{may}^{\prime} = R_{may} - \{\hat{r}_{m}\}$.   \qed
\end{defi}

Def.~\ref{def:RemoveMay} ensures that removing a 
may-transition $\hat{r}_{m}$ implies the removal of 
a must-transition, if $\hat{r}_{m}$ is 
also a must-transition.  Otherwise, there are not 
any changes in the set of must-transitions $R_{must}$.  
Fig.~\ref{fig:RemoveMay} shows how function \emph{RemoveMay} 
works in both cases.

\begin{figure}[htb]
\centering
\subfloat[May-transition is also a 
must-transition]{\begin{tikzpicture}[->,>=stealth',auto,node 
distance=2cm, scale=0.55, thick, main node/.style={scale=0.75, minimum size = 
1cm, align=center,circle,fill=blue!10,draw}, abs node/.style={scale=0.75, 
minimum size = 1cm, align=center,rectangle,fill=blue!10,draw}]

    \tikzstyle{bigarrows}=[line width=1.5mm,draw=black,-triangle 
90,postaction={draw, line width=6mm, shorten >=3mm, -}]

    \tikzstyle{smallarrows}=[line width=.5mm,draw=black,-triangle 
90,postaction={draw, line width=3.5mm, shorten >=2mm, -}]

\begin{scope}
  \node[main node] (1) {$p$};
  \node[main node] (2) [below of=1] {$p$};
  \node[main node] (3) [below of=2] {$p$};
  \node[main node] (4) [right of=1] {$q$};
  \node[main node] (5) [below of=4] {$q$};
  \node[main node] (6) [below of=5] {$q$};

   \path
     (1) edge (4)
     (2) edge (5)
     (3) edge (6);
     
     \draw[dotted] ([xshift=-3mm, yshift=9mm]1.west) rectangle ([xshift=3mm, 
yshift= -9mm]3.east);
     \draw[dotted] ([xshift=-3mm, yshift=9mm]4.west) rectangle ([xshift=3mm, 
yshift= -9mm]6.east);

   \node[font = \small] at ([xshift=-12mm, yshift=8mm]4.north) {$M$};

   \node[abs node] (7) at ([yshift=-23mm]3.south) {$p$};
   \node[abs node] (8) [right of=7] {$q$};
     
    \draw[->] (7) -- (8);

    \node[font = \small] at ([xshift=14mm, yshift=-8mm]7.south) {$\hat{M}$};

\draw [smallarrows] ([xshift=5mm, yshift=-12mm]3.east) -- ([xshift=5mm, 
yshift=-22mm]3.east) node[color = white, font=\small] at ([xshift=5mm, 
yshift=-17mm]3.east) {$\alpha$};

\draw [bigarrows] ([xshift=1mm]8.east) -- ([xshift=41mm]8.east) node[color = 
white, font=\small] at ([xshift=20mm]8.east) {RemoveMay};
   
     \end{scope}

 \begin{scope}[xshift=8.3cm]
  \node[main node] (1) {$p$};
  \node[main node] (2) [below of=1] {$p$};
  \node[main node] (3) [below of=2] {$p$};
  \node[main node] (4) [right of=1] {$q$};
  \node[main node] (5) [below of=4] {$q$};
  \node[main node] (6) [below of=5] {$q$};

   \path
     (1) edge (4)
     (2) edge (5)
     (3) edge (6);
     
     \draw[dotted] ([xshift=-2mm, yshift=9mm]1.west) rectangle ([xshift=2mm, 
yshift= -9mm]3.east);
     \draw[dotted] ([xshift=-2mm, yshift=9mm]4.west) rectangle ([xshift=2mm, 
yshift= -9mm]6.east);

   \node[font = \small] at ([xshift=-12mm, yshift=8mm]4.north) {$M'$};

   \node[abs node] (7) at ([yshift=-23mm]3.south) {$p$};
   \node[abs node] (8) [right of=7] {$q$};
     
    \draw[->] (7) -- (8);

    \node[font = \small] at ([xshift=14mm, yshift=-8mm]7.south) {$\hat{M'}$};

\draw [smallarrows] ([xshift=5mm, yshift=-20mm]3.east) -- ([xshift=5mm, 
yshift=-10mm]3.east) node[color = white, font=\small] at ([xshift=5mm, 
yshift=-15mm]3.east) {$\gamma$}; 

\node[black, auto=false, cross out, -, draw] at ([xshift=6.5mm]1.east){};
\node[black, auto=false, cross out, -, draw] at ([xshift=6.5mm]2.east){};
\node[black, auto=false, cross out, -, draw] at ([xshift=6.5mm]3.east){};
\node[black, auto=false, cross out, -, draw] at ([xshift=6.4mm]7.east){};

\end{scope}

\end{tikzpicture}}            
\hfill    
\subfloat[May-transition is not a 
must-transition]{\begin{tikzpicture}[->,>=stealth',auto,node 
distance=2cm, scale=0.55, thick, main node/.style={scale=0.75, minimum size = 
1cm, align=center,circle,fill=blue!10,draw}, abs node/.style={scale=0.75, 
minimum size = 1cm, align=center,rectangle,fill=blue!10,draw}]

    \tikzstyle{bigarrows}=[line width=1.5mm,draw=black,-triangle 
90,postaction={draw, line width=6mm, shorten >=3mm, -}]

    \tikzstyle{smallarrows}=[line width=.5mm,draw=black,-triangle 
90,postaction={draw, line width=3.5mm, shorten >=2mm, -}]

\begin{scope}
  \node[main node] (1) {$p$};
  \node[main node] (2) [below of=1] {$p$};
  \node[main node] (3) [below of=2] {$p$};
  \node[main node] (4) [right of=1] {$q$};
  \node[main node] (5) [below of=4] {$q$};
  \node[main node] (6) [below of=5] {$q$};

   \path
     (2) edge (5);
     
     \draw[dotted] ([xshift=-3mm, yshift=9mm]1.west) rectangle ([xshift=3mm, 
yshift= -9mm]3.east);
     \draw[dotted] ([xshift=-3mm, yshift=9mm]4.west) rectangle ([xshift=3mm, 
yshift= -9mm]6.east);

   \node[font = \small] at ([xshift=-12mm, yshift=8mm]4.north) {$M$};

   \node[abs node] (7) at ([yshift=-23mm]3.south) {$p$};
   \node[abs node] (8) [right of=7] {$q$};
     
    \draw[->, dashed] (7) -- (8);

    \node[font = \small] at ([xshift=14mm, yshift=-8mm]7.south) {$\hat{M}$};

\draw [smallarrows] ([xshift=5mm, yshift=-12mm]3.east) -- ([xshift=5mm, 
yshift=-22mm]3.east) node[color = white, font=\small] at ([xshift=5mm, 
yshift=-17mm]3.east) {$\alpha$};

\draw [bigarrows] ([xshift=1mm]8.east) -- ([xshift=41mm]8.east) node[color = 
white, font=\small] at ([xshift=20mm]8.east) {RemoveMay};
   
     \end{scope}

 \begin{scope}[xshift=8.3cm]
  \node[main node] (1) {$p$};
  \node[main node] (2) [below of=1] {$p$};
  \node[main node] (3) [below of=2] {$p$};
  \node[main node] (4) [right of=1] {$q$};
  \node[main node] (5) [below of=4] {$q$};
  \node[main node] (6) [below of=5] {$q$};

   \path
     (2) edge (5);
     
     \draw[dotted] ([xshift=-2mm, yshift=9mm]1.west) rectangle ([xshift=2mm, 
yshift= -9mm]3.east);
     \draw[dotted] ([xshift=-2mm, yshift=9mm]4.west) rectangle ([xshift=2mm, 
yshift= -9mm]6.east);

   \node[font = \small] at ([xshift=-12mm, yshift=8mm]4.north) {$M'$};

   \node[abs node] (7) at ([yshift=-23mm]3.south) {$p$};
   \node[abs node] (8) [right of=7] {$q$};
     
    \draw[->, dashed] (7) -- (8);

    \node[font = \small] at ([xshift=14mm, yshift=-8mm]7.south) {$\hat{M'}$};

\draw [smallarrows] ([xshift=5mm, yshift=-20mm]3.east) -- ([xshift=5mm, 
yshift=-10mm]3.east) node[color = white, font=\small] at ([xshift=5mm, 
yshift=-15mm]3.east) {$\gamma$}; 

\node[black, auto=false, cross out, -, draw] at ([xshift=6.5mm]2.east){};
\node[black, auto=false, cross out, -, draw] at ([xshift=6.4mm]7.east){};

\end{scope}

\end{tikzpicture}}
\caption{\emph{RemoveMay}: Removing an existing may-transition}
\label{fig:RemoveMay}
\end{figure}
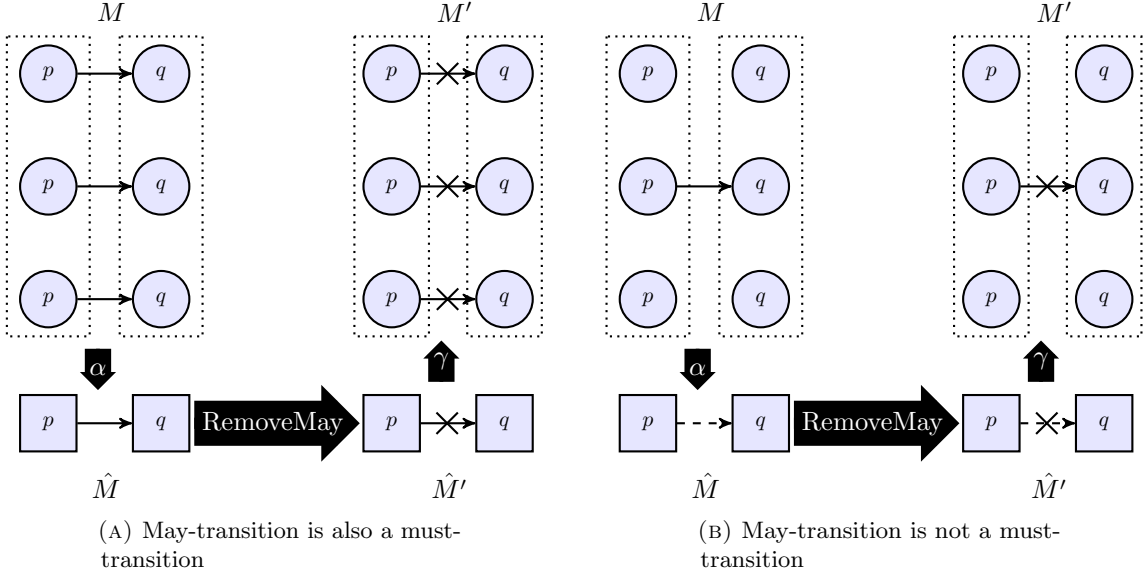

\begin{prop}
\label{prop:RemoveMay}
For any $\hat{M}^{\prime} = RemoveMay(\hat{M},\hat{r}_{m})$, it 
holds that $\hat{d}(\hat{M},\hat{M}^{\prime}) = 1$.\qed
\end{prop}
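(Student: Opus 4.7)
The plan is to unwind the definition of $\hat{d}$ from Def.~\ref{def:metric_space_kmts} and observe that only the transition-related terms can contribute, then case-split on whether $\hat{r}_m$ is a must-transition.

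First, I would note that $RemoveMay$ leaves the state set and the labeling untouched: $\hat{S}' = \hat{S}$ and $\hat{L}' = \hat{L}$. Consequently the first term $|\hat{S} \, \Delta \, \hat{S}'|$ vanishes, and since $\hat{L}\upharpoonright_{\hat{S}\cap\hat{S}'} = \hat{L}'\upharpoonright_{\hat{S}\cap\hat{S}'}$, the labeling term $|\pi(\hat{L}\upharpoonright_{\hat{S}\cap\hat{S}'}) \, \Delta \, \pi(\hat{L}'\upharpoonright_{\hat{S}\cap\hat{S}'})|/2$ vanishes as well. Thus $\hat{d}(\hat{M},\hat{M}')$ reduces to $|R_{must} \, \Delta \, R_{must}'| + |(R_{may} - R_{must}) \, \Delta \, (R_{may}' - R_{must}')|$.

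Next, I would establish a short algebraic identity. Using $R_{may}' = R_{may} - \{\hat{r}_m\}$ and $R_{must}' = R_{must} - \{\hat{r}_m\}$, a direct set computation shows $R_{may}' - R_{must}' = (R_{may} - R_{must}) - \{\hat{r}_m\}$. With this identity in hand I would case-split on the status of $\hat{r}_m$. In Case~1, $\hat{r}_m \in R_{must}$: then $R_{must}$ and $R_{must}'$ differ in exactly $\hat{r}_m$ (contributing $1$), while $\hat{r}_m \notin R_{may} - R_{must}$ so by the identity the may-only sets are equal (contributing $0$). In Case~2, $\hat{r}_m \in R_{may} - R_{must}$: then $R_{must}' = R_{must}$ (contributing $0$), while the may-only sets differ in exactly $\hat{r}_m$ (contributing $1$). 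In both cases the total is $1$.

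This is essentially bookkeeping, so the only real obstacle is handling the double-counting subtlety that motivates the separated $|R_{must} \, \Delta \, R_{must}'|$ and $|(R_{may} - R_{must}) \, \Delta \, (R_{may}' - R_{must}')|$ terms in Def.~\ref{def:metric_space_kmts}. The small identity $R_{may}' - R_{must}' = (R_{may} - R_{must}) - \{\hat{r}_m\}$ is what makes the case analysis clean, and once it is in place the rest is a one-line calculation per case.
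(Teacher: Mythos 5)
Your proof is correct: the identity $R_{may}' - R_{must}' = (R_{may} - R_{must}) - \{\hat{r}_m\}$ together with the two cases ($\hat{r}_m \in R_{must}$ versus $\hat{r}_m \in R_{may} - R_{must}$, which are exhaustive since $R_{must} \subseteq R_{may}$) gives exactly one unit of symmetric difference in exactly one of the two transition terms, and the state and labeling terms vanish as you say. The paper states Prop.~\ref{prop:RemoveMay} without an explicit proof, and your bookkeeping is precisely the argument the definition of $\hat{d}$ is designed to support, including the correct handling of the double-counting issue via the $R_{may} - R_{must}$ term.
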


\begin{defi}
\label{def:remove_may_ks}
Let $M = (S,S_{0},R,L)$ be a KS and let $\alpha(M) = (\hat{S},\hat{S_{0}}, 
R_{must}, R_{may}, \hat{L})$ be the abstract KMTS derived from $M$ 
as in Def.~\ref{def:abs_kmts}.  Also, let 
$\hat{M}^{\prime} = RemoveMay(\alpha(M),\hat{r}_{m})$ 
for some $\hat{r}_{m} = (\hat{s}_{1},\hat{s}_{2}) \in R_{may}$ 
with $\hat{s}_{1},\hat{s}_{2} \in \hat{S}$.  
The KS $M^{\prime} \in \gamma(\hat{M}^{\prime})$, 
whose structural distance $d$ from $M$ is minimized is given by:
\begin{equation}
M^{\prime} = (S, S_{0}, R - R_{m}, L\} 
\end{equation}
where 
$R_{m} = \{r_{m}=(s_{1},s_{2}) \mid 
s_{1} \in \gamma(\hat{s}_{1}), 
s_{2} \in \gamma(\hat{s}_{2})$ and $r_{m} \in R\}$.\qed     
\end{defi}

\begin{prop}
\label{prop:remove_may}
For $M^{\prime}$, it holds that
$1 \leq d(M,M^{\prime}) \leq \left|S\right|^{2}$.
\end{prop}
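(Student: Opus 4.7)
The plan is to reduce $d(M,M')$ to $|R_m|$ using the formula from Definition~\ref{def:metric_space} and then bound $|R_m|$ from above and below. First I would observe that, by the construction of $M'$ in Definition~\ref{def:remove_may_ks}, the state set and the labeling function of $M'$ coincide with those of $M$, while $R' = R - R_m \subseteq R$. Substituting into the distance formula gives $|S \Delta S'| = 0$ and $|\pi(L\upharpoonright_{S\cap S'}) \Delta \pi(L'\upharpoonright_{S\cap S'})| = 0$, so
\[
d(M,M') = |R \Delta R'| = |R - R'| + |R' - R| = |R_m| + 0 = |R_m|.
\]
Hence the proposition reduces to the claim $1 \leq |R_m| \leq |S|^2$.

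For the lower bound, I would invoke the fact that $\hat{r}_m = (\hat{s}_1, \hat{s}_2) \in R_{may}$ in $\alpha(M)$. By clause~4 of Definition~\ref{def:abs_kmts} (the $\exists\exists$-condition), this guarantees the existence of at least one concrete $s_1 \in \gamma(\hat{s}_1)$ and $s_2 \in \gamma(\hat{s}_2)$ with $(s_1,s_2) \in R$, so that pair lies in $R_m$ and $|R_m| \geq 1$.

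For the upper bound, since every element of $R_m$ is a pair in $\gamma(\hat{s}_1) \times \gamma(\hat{s}_2)$, we have $|R_m| \leq |\gamma(\hat{s}_1)| \cdot |\gamma(\hat{s}_2)|$. Because $\gamma(\hat{s}_1), \gamma(\hat{s}_2) \subseteq S$, each factor is at most $|S|$, giving $|R_m| \leq |S|^2$. Combining the two bounds yields the desired inequality $1 \leq d(M,M') \leq |S|^2$.

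The only subtle point compared with the analogous proofs for \emph{RemoveMust} (Prop.~\ref{prop:remove_must}) and \emph{AddMust} (Prop.~\ref{prop:add_must}) is that the upper bound here is $|S|^2$ rather than $|S|$: removing a may-transition may eliminate concrete transitions originating from \emph{every} state in $\gamma(\hat{s}_1)$ and targeting \emph{every} state in $\gamma(\hat{s}_2)$, whereas removing a must-transition only witnesses removal from a single $s_1$. I do not expect any genuine obstacle — the argument is essentially a direct calculation once the distance formula is unpacked.
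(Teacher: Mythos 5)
Your proof is correct and follows essentially the same route as the paper's: reduce $d(M,M^{\prime})$ to $|R\,\Delta\,R^{\prime}| = |R_{m}|$ and then bound $|R_{m}|$ between $1$ and $|S|^{2}$. The only difference is that you explicitly justify the two bounds (the lower one via the $\exists\exists$-condition defining $R_{may}$, the upper one via $R_{m} \subseteq \gamma(\hat{s}_{1}) \times \gamma(\hat{s}_{2}) \subseteq S \times S$), which the paper simply asserts; your version is therefore slightly more complete, and you also write the decomposition $|R-R^{\prime}| + |R^{\prime}-R| = |R_{m}| + 0$ with the terms in the correct order, where the paper has them transposed.
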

\begin{proof}
$d(M,M^{\prime}) = |S\Delta S^{\prime}| + |R\Delta R^{\prime}| + 
\frac{|\pi(L\upharpoonright_{S\cap S^{\prime}})\Delta \pi(L^{\prime}\upharpoonright_{S\cap S^{\prime}})|}{2}$.
Because $|S\Delta S^{\prime}| = 0$ and $|\pi(L\upharpoonright_{S\cap S^{\prime}})\Delta \pi(L^{\prime}\upharpoonright_{S\cap S^{\prime}})| = 0$, 
$d(M,M^{\prime}) = |R\Delta R^{\prime}| = 
|R - R^{\prime}| + |R^{\prime} - R| = 0 + |R_{m}| = |R_{m}|$.
It holds that $|R_{m}| \geq 1$ and $|R_{m}| \leq |S|^{2}$.
So, we proved that 
$1 \leq d(M,M^{\prime}) \leq \left|S\right|^{2}$. 
\end{proof}

\subsubsection{Changing the labeling of a KMTS state}
\begin{defi}[ChangeLabel]
\label{def:ChangeLabel}
For a given KMTS $\hat{M} = (\hat{S},\hat{S_{0}}, R_{must}, 
R_{may}, \hat{L})$, a state $\hat{s} \in \hat{S}$ and an 
atomic CTL formula $\phi$ with $\phi \in 2^{LIT}$, 
$ChangeLabel(\hat{M},\hat{s},\phi)$ is the KMTS 
$\hat{M^{\prime}} = (\hat{S}, \hat{S_{0}}, 
R_{must}, R_{may}, \hat{L^{\prime}})$
such that  
$\hat{L^{\prime}} = ( \hat{L} - \{\hat{l}_{old}\} ) 
\cup \{\hat{l}_{new}\}$ for
$\hat{l}_{old} = (\hat{s},lit_{old})$ and 
$\hat{l}_{new} = (\hat{s},lit_{new})$ where 
$lit_{new} = \hat{L}(\hat{s}) \cup \{ lit \mid lit \in \phi \} 
- \{ \neg lit \mid lit \in \phi \}$.  \qed
\end{defi}

Basic repair operation \emph{ChangeLabel} gives the possibility 
of repairing a model by changing the labeling of a state, 
thus without inducing any changes in the structure of the model 
(number of states or transitions).  Fig.~\ref{fig:ChangeLabel} 
presents the application of \emph{ChangeLabel} in a graphical 
manner. 

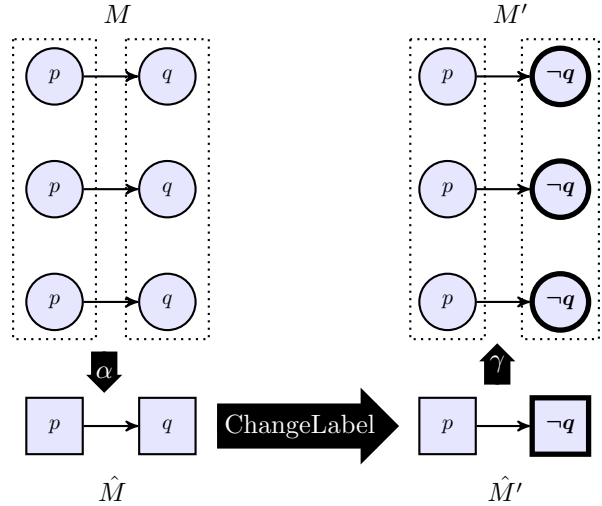
\begin{figure}[htb]
\centering          
\begin{tikzpicture}[->,>=stealth',auto,node 
distance=2cm, scale=0.55, thick, main node/.style={scale=0.75, minimum size = 
1cm, align=center,circle,fill=blue!10,draw}, abs node/.style={scale=0.75, 
minimum size = 1cm, align=center,rectangle,fill=blue!10,draw}, thick 
main node/.style={scale=0.75, minimum size = 1cm, 
align=center,circle,fill=blue!10, line width=2pt, draw}, thick abs 
node/.style={scale=0.75, 
minimum size = 1cm, align=center,rectangle,fill=blue!10, line width=2pt, draw}]

    \tikzstyle{bigarrows}=[line width=1.5mm,draw=black,-triangle 
90,postaction={draw, line width=6mm, shorten >=3mm, -}]

    \tikzstyle{smallarrows}=[line width=.5mm,draw=black,-triangle 
90,postaction={draw, line width=3.5mm, shorten >=2mm, -}]

\begin{scope}
  \node[main node] (1) {$p$};
  \node[main node] (2) [below of=1] {$p$};
  \node[main node] (3) [below of=2] {$p$};
  \node[main node] (4) [right of=1] {$q$};
  \node[main node] (5) [below of=4] {$q$};
  \node[main node] (6) [below of=5] {$q$};

   \path
     (1) edge (4)
     (2) edge (5)
     (3) edge (6);
     
     \draw[dotted] ([xshift=-3mm, yshift=9mm]1.west) rectangle ([xshift=3mm, 
yshift= -9mm]3.east);
     \draw[dotted] ([xshift=-3mm, yshift=9mm]4.west) rectangle ([xshift=3mm, 
yshift= -9mm]6.east);

   \node[font = \small] at ([xshift=-12mm, yshift=8mm]4.north) {$M$};

   \node[abs node] (7) at ([yshift=-23mm]3.south) {$p$};
   \node[abs node] (8) [right of=7] {$q$};
     
    \draw[->] (7) -- (8);

    \node[font = \small] at ([xshift=14mm, yshift=-8mm]7.south) {$\hat{M}$};

\draw [smallarrows] ([xshift=5mm, yshift=-12mm]3.east) -- ([xshift=5mm, 
yshift=-22mm]3.east) node[color = white, font=\small] at ([xshift=5mm, 
yshift=-17mm]3.east) {$\alpha$};

\draw [bigarrows] ([xshift=5mm]8.east) -- ([xshift=49mm]8.east) node[color = 
white, font=\small] at ([xshift=25mm]8.east) {ChangeLabel};
   
     \end{scope}

 \begin{scope}[xshift=9.5cm]
  \node[main node] (1) {$p$};
  \node[main node] (2) [below of=1] {$p$};
  \node[main node] (3) [below of=2] {$p$};
  \node[thick main node] (4) [right of=1] {$\boldsymbol{\neg q}$};
  \node[thick main node] (5) [below of=4] {$\boldsymbol{\neg q}$};
  \node[thick main node] (6) [below of=5] {$\boldsymbol{\neg q}$};

   \path
     (1) edge (4)
     (2) edge (5)
     (3) edge (6);
     
     \draw[dotted] ([xshift=-2mm, yshift=9mm]1.west) rectangle ([xshift=2mm, 
yshift= -9mm]3.east);
     \draw[dotted] ([xshift=-2mm, yshift=9mm]4.west) rectangle ([xshift=2mm, 
yshift= -9mm]6.east);

   \node[font = \small] at ([xshift=-12mm, yshift=8mm]4.north) {$M'$};

   \node[abs node] (7) at ([yshift=-23mm]3.south) {$p$};
   \node[thick abs node] (8) [right of=7] {$\boldsymbol{\neg q}$};
     
    \draw[->] (7) -- (8);

    \node[font = \small] at ([xshift=14mm, yshift=-8mm]7.south) {$\hat{M'}$};

\draw [smallarrows] ([xshift=5mm, yshift=-20mm]3.east) -- ([xshift=5mm, 
yshift=-10mm]3.east) node[color = white, font=\small] at ([xshift=5mm, 
yshift=-15mm]3.east) {$\gamma$};

\end{scope}

\end{tikzpicture}
\caption{\emph{ChangeLabel}: Changing the labeling of a KMTS state}
\label{fig:ChangeLabel}
\end{figure}

\begin{prop}
\label{prop:ChangeLabel}
For any $\hat{M}^{\prime} = ChangeLabel(\hat{M},\hat{s},\phi)$, it 
holds that $\hat{d}(\hat{M},\hat{M}^{\prime}) = 1$.\qed
\end{prop}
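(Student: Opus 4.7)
The plan is to unpack the definition of $\hat{d}$ from Def.~\ref{def:metric_space_kmts} and evaluate each of its four summands against the effect of \emph{ChangeLabel} as specified in Def.~\ref{def:ChangeLabel}. Since \emph{ChangeLabel} modifies only the labeling at the single state $\hat{s}$, most terms should vanish, and the whole calculation reduces to counting how many ordered pairs of the form $(\hat{s}', \cdot)$ change in the graph of $\hat{L}$.

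First I would observe that by Def.~\ref{def:ChangeLabel} we have $\hat{S}' = \hat{S}$, $\hat{S_0}' = \hat{S_0}$, $R_{must}' = R_{must}$, and $R_{may}' = R_{may}$. Therefore the first three summands of $\hat{d}$ are all zero: $|\hat{S}\,\Delta\,\hat{S}'| = 0$, $|R_{must}\,\Delta\,R_{must}'| = 0$, and $|(R_{may}-R_{must})\,\Delta\,(R_{may}'-R_{must}')| = 0$. All the action is in the fourth summand.

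Next I would handle the labeling term. Because $\hat{S} = \hat{S}'$, the restriction $\upharpoonright_{\hat{S}\cap \hat{S}'}$ has no effect, and $\pi(\hat{L})$, $\pi(\hat{L}')$ are just the graphs of $\hat{L}$ and $\hat{L}'$ on $\hat{S}$. By the definition of $\hat{L}'$, for every $\hat{s}'\neq \hat{s}$ we have $\hat{L}'(\hat{s}') = \hat{L}(\hat{s}')$, so the pairs $(\hat{s}', \hat{L}(\hat{s}'))$ lie in both graphs and cancel in the symmetric difference. At $\hat{s}$ the two graphs contain the distinct pairs $(\hat{s}, \hat{L}(\hat{s}))$ and $(\hat{s}, lit_{new})$ respectively. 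Hence
\[
|\pi(\hat{L}\upharpoonright_{\hat{S}\cap\hat{S}'})\,\Delta\,\pi(\hat{L}'\upharpoonright_{\hat{S}\cap\hat{S}'})| \;=\; 2,
\]
and dividing by $2$ gives $1$, so $\hat{d}(\hat{M},\hat{M}') = 0+0+0+1 = 1$, as required.

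The only genuine obstacle is the degenerate case in which $lit_{new} = \hat{L}(\hat{s})$, i.e.\ the formula $\phi$ is already reflected in the current label so \emph{ChangeLabel} changes nothing. In that case the symmetric difference is empty and the distance would be $0$. The natural fix is to note that \emph{ChangeLabel} is tacitly applied only when $\phi \not\subseteq \hat{L}(\hat{s})$ (equivalently, when at least one literal from $\phi$ is added or one complementary literal is removed), mirroring the implicit assumptions made in the earlier propositions (e.g.\ $\hat{r}_{n}\notin R_{must}$ in \emph{AddMust}). Under this standing assumption $lit_{new}\neq \hat{L}(\hat{s})$, which is exactly what makes the two offending pairs distinct and pins the distance at $1$.
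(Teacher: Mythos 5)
Your computation is correct and is exactly the justification the paper has in mind: the paper actually states Prop.~\ref{prop:ChangeLabel} with no proof at all (the \qed{} is attached to the statement, as with the other basic-repair-operation distance propositions), treating it as immediate from Defs.~\ref{def:metric_space_kmts} and~\ref{def:ChangeLabel}. Your observation about the degenerate case $lit_{new}=\hat{L}(\hat{s})$ is a legitimate edge case the paper glosses over, and your resolution (the operation is only invoked when $(\hat{M},\hat{s})\not\models\phi$, which forces the label to actually change) is the right one.
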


\begin{defi}
\label{def:change_label_ks}
Let $M = (S,S_{0},R,L)$ be a KS and let $\alpha(M) = (\hat{S},\hat{S_{0}}, 
R_{must}, R_{may}, \hat{L})$ be the abstract KMTS derived from $M$ 
as in Def.~\ref{def:abs_kmts}.  Also, let 
$\hat{M}^{\prime} = ChangeLabel(\alpha(M),\hat{s},\phi)$ 
for some 
$\hat{s} \in \hat{S}$ and $\phi \in 2^{LIT}$.  
The KS $M^{\prime} \in \gamma(\hat{M}^{\prime})$, 
whose structural distance $d$ from $M$ is minimized, is given by:
\begin{equation}
M^{\prime} = (S, S_{0}, R, L - L_{old} \cup L_{new}\} 
\end{equation}
where
\begin{equation} \nonumber 
L_{old} = \{ l_{old} = (s,lit_{old}) \mid s \in \gamma(\hat{s}), 
s \in S, \neg lit_{old} \not\in \phi \; \text{and} \; l_{old} \in L \}
\end{equation}
\begin{equation} \nonumber 
L_{new} = \{ l_{new} = (s,lit_{new}) \mid s \in \gamma(\hat{s}), 
s \in S, lit_{new} \in \phi \; \text{and} \; l_{new} \notin L \}
\end{equation}      
\qed
\end{defi}

\begin{prop}
\label{prop:change_label}
For $M^{\prime}$, it holds that
$1 \leq d(M,M^{\prime}) \leq |S|$.
\end{prop}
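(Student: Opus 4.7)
The plan is to follow the same template used in Prop.~\ref{prop:add_must}, Prop.~\ref{prop:add_may}, Prop.~\ref{prop:remove_must} and Prop.~\ref{prop:remove_may}: instantiate the distance $d(M,M')$ from Def.~\ref{def:metric_space} on the specific $M'$ produced by Def.~\ref{def:change_label_ks}, show that the two structural terms vanish, and then bound the labelling term from both sides.

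First I would observe that, by the construction in Def.~\ref{def:change_label_ks}, $M$ and $M'$ agree on states and transitions: $S' = S$ (so $|S \,\Delta\, S'| = 0$) and $R' = R$ (so $|R \,\Delta\, R'| = 0$). Hence the distance reduces to
\[
d(M,M') \;=\; \frac{|\pi(L\upharpoonright_{S\cap S'})\,\Delta\,\pi(L'\upharpoonright_{S\cap S'})|}{2},
\]
and since $S \cap S' = S$, it suffices to count the symmetric difference of $\pi(L)$ and $\pi(L')$. By construction $L' = (L - L_{old}) \cup L_{new}$ with $L_{old} \subseteq L$ and $L_{new} \cap L = \emptyset$, so the pairs that appear in exactly one of $\pi(L), \pi(L')$ are precisely those in $L_{old} \cup L_{new}$, giving $d(M,M') = (|L_{old}| + |L_{new}|)/2$.

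Next I would unpack $L_{old}$ and $L_{new}$ via the labelling condition $\forall p \in AP: p \in L(s) \Leftrightarrow \neg p \notin L(s)$ from Def.~\ref{def:ks}. Each state $s \in \gamma(\hat{s})$ whose label is affected by $\phi$ contributes, for every atomic proposition $p$ whose polarity is flipped, exactly one pair to $L_{old}$ (the old literal on $p$) and one pair to $L_{new}$ (the opposite literal on $p$). Hence $|L_{old}| = |L_{new}|$ and the numerator is even, so the division by $2$ is exact and $d(M,M')$ equals the total number of (state, flipped-proposition) pairs.

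For the lower bound I would argue that \emph{ChangeLabel} would not have been applied unless at least one literal in $\phi$ actually changes the labelling of some concrete state in the non-empty set $\gamma(\hat{s})$; this guarantees $|L_{old}| + |L_{new}| \geq 2$, hence $d(M,M') \geq 1$. For the upper bound I would use that $\gamma(\hat{s}) \subseteq S$, so at most $|S|$ concrete states can have their labelling altered; with one flip per state, this yields $d(M,M') \leq |S|$. The only delicate point—really the main thing to get right—is the matching of $L_{old}$ and $L_{new}$ pair-by-pair so that the denominator $2$ in Def.~\ref{def:metric_space} divides cleanly; this is exactly where the ``exactly one of $p,\neg p$'' condition from Def.~\ref{def:ks} is used, mirroring the role it plays implicitly in the earlier propositions.
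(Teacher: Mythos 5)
Your proposal is correct and follows essentially the same route as the paper's own proof: both reduce $d(M,M')$ to the labelling term after observing $S'=S$ and $R'=R$, identify the symmetric difference with $L_{old}\cup L_{new}$ so that $d(M,M')=(|L_{old}|+|L_{new}|)/2=|L_{new}|$, and bound $1\leq|L_{new}|\leq|S|$. Your added justification of why $|L_{old}|=|L_{new}|$ via the one-of-$p,\neg p$ labelling condition is a detail the paper leaves implicit, but it is not a different approach.
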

\begin{proof}
$d(M,M^{\prime}) = |S\Delta S^{\prime}| + |R\Delta R^{\prime}| + 
\frac{|\pi(L\upharpoonright_{S\cap S^{\prime}})\Delta \pi(L^{\prime}\upharpoonright_{S\cap S^{\prime}})|}{2}$.
Because $|R\Delta R^{\prime}| = 0$ and $|R\Delta R^{\prime}| = 0$, 
$d(M,M^{\prime}) = 
\frac{|\pi(L\upharpoonright_{S\cap S^{\prime}})\Delta \pi(L^{\prime}\upharpoonright_{S\cap S^{\prime}})|}{2}=
\frac{|L_{old}| + |L_{new}|}{2} = |L_{old}| = |L_{new}|$.
It holds that $L_{new} \geq 1$ and $L_{new} \leq |S|$.
So, we prove that $1 \leq d(M,M^{\prime}) \leq |S|$. 
\end{proof}

\subsubsection{Adding a new KMTS state}
\begin{defi}[AddState]
\label{def:AddState}
For a given KMTS $\hat{M} = (\hat{S},\hat{S_{0}}, R_{must}, 
R_{may}, \hat{L})$ and 
a state $\hat{s}_{n} \notin \hat{S}$, 
$AddState(\hat{M},\hat{s}_{n})$ is the KMTS 
$\hat{M^{\prime}} = (\hat{S^{\prime}}, \hat{S_{0}}, 
R_{must}, R_{may}, \hat{L^{\prime}})$
such that $\hat{S^{\prime}} = \hat{S} \cup \{\hat{s}_{n}\}$ and 
$\hat{L^{\prime}} = \hat{L} \cup \{\hat{l}_{n}\}$, where 
$\hat{l}_{n} = (\hat{s}_{n},\bot)$. \qed
\end{defi}

The most important issues for function $AddState$ is that the 
newly created abstract state $\hat{s}_{n}$ is isolated, thus 
there are no ingoing or outgoing transitions for this state, 
and additionally, the labeling of this new state is $\bot$.  
Another conclusion from Def.~\ref{def:AddState} is the fact 
that the inserted stated is not permitted to be initial.  
Application of function $AddState$ is presented graphically 
in Fig.~\ref{fig:AddState}.

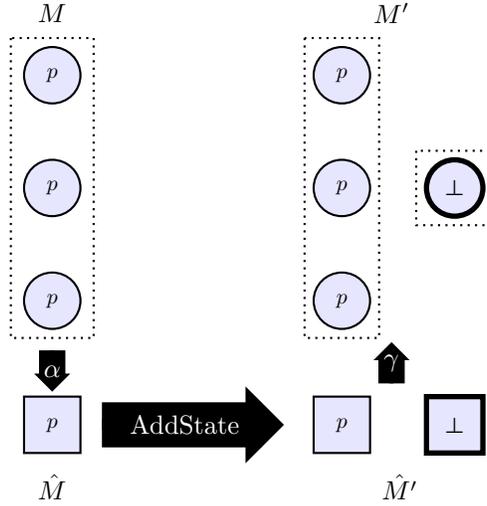
\begin{figure}[htb]
\centering          
\begin{tikzpicture}[->,>=stealth',auto,node 
distance=2cm, scale=0.55, thick, main node/.style={scale=0.75, minimum size = 
1cm, align=center,circle,fill=blue!10,draw}, abs node/.style={scale=0.75, 
minimum size = 1cm, align=center,rectangle,fill=blue!10,draw}, thick 
main node/.style={scale=0.75, minimum size = 1cm, 
align=center,circle,fill=blue!10, line width=2pt, draw}, thick abs 
node/.style={scale=0.75, 
minimum size = 1cm, align=center,rectangle,fill=blue!10, line width=2pt, draw}]

    \tikzstyle{bigarrows}=[line width=1.5mm,draw=black,-triangle 
90,postaction={draw, line width=6mm, shorten >=3mm, -}]

    \tikzstyle{smallarrows}=[line width=.5mm,draw=black,-triangle 
90,postaction={draw, line width=3.5mm, shorten >=2mm, -}]

\begin{scope}
  \node[main node] (1) {$p$};
  \node[main node] (2) [below of=1] {$p$};
  \node[main node] (3) [below of=2] {$p$};
     
     \draw[dotted] ([xshift=-3mm, yshift=9mm]1.west) rectangle ([xshift=3mm, 
yshift= -9mm]3.east);

   \node[font = \small] at ([yshift=8mm]1.north) {$M$};

   \node[abs node] (7) at ([yshift=-23mm]3.south) {$p$};

    \node[font = \small] at ([yshift=-8mm]7.south) {$\hat{M}$};

\draw [smallarrows] ([xshift=7mm, yshift=-12mm]3.west) -- ([xshift=7mm, 
yshift=-22mm]3.west) node[color = white, font=\small] at ([xshift=7mm, 
yshift=-17mm]3.west) {$\alpha$};

\draw [bigarrows] ([xshift=5mm]7.east) -- ([xshift=49mm]7.east) node[color = 
white, font=\small] at ([xshift=25mm]7.east) {AddState};
   
     \end{scope}

 \begin{scope}[xshift=7cm]
  \node[main node] (1) {$p$};
  \node[main node] (2) [below of=1] {$p$};
  \node[main node] (3) [below of=2] {$p$};
  \node[thick main node] (5) [right of=2] {$\boldsymbol{\bot}$};
     
     \draw[dotted] ([xshift=-2mm, yshift=9mm]1.west) rectangle ([xshift=2mm, 
yshift= -9mm]3.east);
     \draw[dotted] ([xshift=-2mm, yshift=9mm]5.west) rectangle ([xshift=2mm, 
yshift= -9mm]5.east);

   \node[font = \small] at ([xshift=12mm, yshift=8mm]1.north) {$M'$};

   \node[abs node] (7) at ([yshift=-23mm]3.south) {$p$};
   \node[thick abs node] (8) [right of=7] {$\boldsymbol{\bot}$};

    \node[font = \small] at ([xshift=14mm, yshift=-8mm]7.south) {$\hat{M'}$};

\draw [smallarrows] ([xshift=5mm, yshift=-20mm]3.east) -- ([xshift=5mm, 
yshift=-10mm]3.east) node[color = white, font=\small] at ([xshift=5mm, 
yshift=-15mm]3.east) {$\gamma$};

\end{scope}

\end{tikzpicture}
\caption{\emph{AddState}: Adding a new KMTS state}
\label{fig:AddState}
\end{figure}

\begin{prop}
\label{prop:AddState}
For any $\hat{M}^{\prime} = AddState(\hat{M},\hat{s}_{n})$, it 
holds that $\hat{d}(\hat{M},\hat{M}^{\prime}) = 1$.\qed
\end{prop}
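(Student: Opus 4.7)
The plan is to apply Definition~\ref{def:metric_space_kmts} term-by-term to the KMTS pair $(\hat{M}, \hat{M}')$ produced by Definition~\ref{def:AddState}, in the same direct fashion as in Propositions~\ref{prop:AddMust}--\ref{prop:ChangeLabel}. The four summands of $\hat{d}(\hat{M}, \hat{M}')$ will be evaluated independently and then added.

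First, I would handle the state-set contribution. Since the hypothesis gives $\hat{s}_n \notin \hat{S}$ and $\hat{S}' = \hat{S} \cup \{\hat{s}_n\}$, the symmetric difference $\hat{S} \Delta \hat{S}'$ equals $\{\hat{s}_n\}$ and therefore contributes exactly $1$. Next, the two transition summands are immediate: Definition~\ref{def:AddState} leaves both $R_{must}$ and $R_{may}$ unchanged, so $R_{must} \Delta R_{must}' = \emptyset$ and $(R_{may} - R_{must}) \Delta (R_{may}' - R_{must}') = \emptyset$, each contributing $0$.

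The only mildly subtle piece is the labeling summand, and here the key observation is that $\hat{S} \cap \hat{S}' = \hat{S}$, so the restriction operator $\upharpoonright_{\hat{S} \cap \hat{S}'}$ discards the new pair $(\hat{s}_n, \bot)$ from $\hat{L}'$. Consequently $\pi(\hat{L}\upharpoonright_{\hat{S}}) = \pi(\hat{L}'\upharpoonright_{\hat{S}})$, their symmetric difference is empty, and the summand evaluates to $0$. Summing the four contributions yields $\hat{d}(\hat{M}, \hat{M}') = 1 + 0 + 0 + 0 = 1$, as claimed.

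I do not anticipate any real obstacle: the proof is a direct calculation from the definitions. The only care needed is in tracking how the $\bot$ labeling of $\hat{s}_n$ is treated; formally, Definition~\ref{def:AddState} introduces the pair $(\hat{s}_n, \bot)$ into $\hat{L}'$, but because $\hat{s}_n$ lies outside $\hat{S} \cap \hat{S}'$, it is filtered out by the restriction used in $\hat{d}$ and therefore does not affect the labeling term.
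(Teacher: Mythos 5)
Your calculation is correct and is exactly the direct verification the paper intends (the paper states Prop.~\ref{prop:AddState} without proof, treating it as immediate from Definitions~\ref{def:AddState} and~\ref{def:metric_space_kmts}); your handling of the labeling term via the restriction to $\hat{S}\cap\hat{S}'=\hat{S}$ is the right way to dispose of the new pair $(\hat{s}_n,\bot)$. Nothing is missing.
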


\begin{defi}
\label{def:add_state_ks}
Let $M = (S,S_{0},R,L)$ be a KS and let $\alpha(M) = (\hat{S},\hat{S_{0}}, 
R_{must}, R_{may}, \hat{L})$ be the abstract KMTS derived from $M$ 
as in Def.~\ref{def:abs_kmts}.  Also, let 
$\hat{M}^{\prime} = AddState(\alpha(M),\hat{s}_{n})$ 
for some 
$\hat{s}_{n} \notin \hat{S}$.  
The KS $M^{\prime} \in \gamma(\hat{M}^{\prime})$, 
whose structural distance $d$ from $M$ is minimized is given by:
\begin{equation}
M^{\prime} = (S \cup \{s_{n}\}, S_{0}, R, L \cup \{l_{n}\})
\end{equation}
where 
$s_{n} \in \gamma(\hat{s}_{n})$ and 
$l_{n} = (s_{n},\bot)$.      \qed
\end{defi}

\begin{prop}
\label{prop:add_state}
For $M^{\prime}$, it holds that
$d(M,M^{\prime}) = 1$.
\end{prop}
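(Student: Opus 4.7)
The plan is to evaluate the three summands in the distance formula of Def.~\ref{def:metric_space} directly from the explicit form of $M'$ given in Def.~\ref{def:add_state_ks}. Since $M' = (S \cup \{s_n\}, S_0, R, L \cup \{l_n\})$, each term admits a one-line computation, so the proof reduces to three short checks followed by an addition. This mirrors the structure of the proofs of Props.~\ref{prop:add_may}, \ref{prop:remove_may}, and \ref{prop:change_label}.

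First, I would argue that $|S \Delta S'| = 1$. By hypothesis $\hat{s}_n \notin \hat{S}$, and the bijective correspondence between $\alpha$ and $\gamma$ imposed in Def.~\ref{def:abs_kmts} (namely $\alpha(s) = \hat{s} \Leftrightarrow s \in \gamma(\hat{s})$) forces $\gamma(\hat{s}_n) \cap S = \emptyset$. In particular the chosen $s_n \in \gamma(\hat{s}_n)$ satisfies $s_n \notin S$, so $S \Delta S' = \{s_n\}$ contributes exactly $1$.

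Next, since $R' = R$ verbatim, we get $|R \Delta R'| = 0$. For the labeling term, note $S \cap S' = S$; then $L\upharpoonright_S = L$ and $(L \cup \{l_n\})\upharpoonright_S = L$ as well, because $l_n = (s_n, \bot)$ has first coordinate $s_n \notin S$. Hence $\pi(L\upharpoonright_{S \cap S'}) = \pi(L'\upharpoonright_{S \cap S'})$ and the symmetric difference is empty, contributing $0$. Summing the three terms yields $d(M, M') = 1 + 0 + 0 = 1$.

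I do not expect any real obstacle: the only point that needs care is the freshness $s_n \notin S$, which follows immediately from the bijective abstraction/concretization convention rather than from any additional argument. The remaining work is purely a matter of unfolding the restriction operator and the symmetric difference on the explicitly given sets.
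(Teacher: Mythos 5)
Your proposal is correct and follows essentially the same route as the paper's own proof: both compute the three summands of $d$ directly from the explicit form $M' = (S \cup \{s_n\}, S_0, R, L \cup \{l_n\})$, finding $|R \,\Delta\, R'| = 0$, a vanishing labeling term, and $|S \,\Delta\, S'| = |\{s_n\}| = 1$. Your additional justification that $s_n \notin S$ (via the abstraction/concretization convention) is a small point the paper leaves implicit, but it does not change the argument.
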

\begin{proof}
$d(M,M^{\prime}) = |S\Delta S^{\prime}| + |R\Delta R^{\prime}| + 
\frac{|\pi(L\upharpoonright_{S\cap S^{\prime}})\Delta \pi(L^{\prime}\upharpoonright_{S\cap S^{\prime}})|}{2}$.
Because $|R\Delta R^{\prime}| = 0$ and $|\pi(L\upharpoonright_{S\cap S^{\prime}})\Delta \pi(L^{\prime}\upharpoonright_{S\cap S^{\prime}})| = 0$, 
$d(M,M^{\prime}) = |S\Delta S^{\prime}| = 
|S - S^{\prime}| + |S^{\prime} - S| = 0 + |\{s_{n}\}| = 1$.
So, we proved that $d(M,M^{\prime}) = 1$. 
\end{proof}

\subsubsection{Removing a disconnected KMTS state}
\begin{defi}[RemoveState]
\label{def:RemoveState}
For a given KMTS $\hat{M} = (\hat{S},\hat{S_{0}}, R_{must}, 
R_{may}, \hat{L})$ and 
a state $\hat{s}_{r} \in \hat{S}$ such that $\forall \hat{s} \in \hat{S} : 
(\hat{s},\hat{s}_{r}) \not\in R_{may} \, \wedge \, (\hat{s}_{r},\hat{s}) 
\not\in R_{may}$,   
$RemoveState(\hat{M},\hat{s}_{r})$ is the KMTS 
$\hat{M^{\prime}} = (\hat{S^{\prime}}, \hat{S_{0}^{\prime}}, 
R_{must}, R_{may}, \hat{L^{\prime}})$
such that $\hat{S^{\prime}} = \hat{S} - \{\hat{s}_{r}\}$, 
$\hat{S_{0}^{\prime}} = \hat{S_{0}} - \{\hat{s}_{r}\}$ and 
$\hat{L^{\prime}} = \hat{L} - \{\hat{l}_{r}\}$, where 
$\hat{l}_{r} = (\hat{s}_{r},lit) \in \hat{L}$. \qed
\end{defi}

From Def.~\ref{def:RemoveState}, it is clear that the state 
being removed should be isolated, thus there are not any may- 
or must-transitions from and to this state.  This means that 
before using \emph{RemoveState} to an abstract state, all its 
ingoing or outgoing must have been removed by using other 
basic repair operations.  \emph{RemoveState} are also used  
for the elimination of dead-end states, when such states 
arise during the repair process.  Fig.~\ref{fig:RemoveState} 
presents the application of \emph{RemoveState} in a graphical 
manner.    

\begin{figure}[htb]
\centering          
\begin{tikzpicture}[->,>=stealth',auto,node 
distance=2cm, scale=0.55, thick, main node/.style={scale=0.75, minimum size = 
1cm, align=center,circle,fill=blue!10,draw}, abs node/.style={scale=0.75, 
minimum size = 1cm, align=center,rectangle,fill=blue!10,draw}]

    \tikzstyle{bigarrows}=[line width=1.5mm,draw=black,-triangle 
90,postaction={draw, line width=6mm, shorten >=3mm, -}]

    \tikzstyle{smallarrows}=[line width=.5mm,draw=black,-triangle 
90,postaction={draw, line width=3.5mm, shorten >=2mm, -}]

\begin{scope}
  \node[main node] (1) {$p$};
  \node[main node] (2) [below of=1] {$p$};
  \node[main node] (3) [below of=2] {$p$};
  \node[main node] (4) [right of=1] {$q$};
  \node[main node] (5) [below of=4] {$q$};
  \node[main node] (6) [below of=5] {$q$};
     
     \draw[dotted] ([xshift=-3mm, yshift=9mm]1.west) rectangle ([xshift=3mm, 
yshift= -9mm]3.east);
     \draw[dotted] ([xshift=-3mm, yshift=9mm]4.west) rectangle ([xshift=3mm, 
yshift= -9mm]6.east);

   \node[font = \small] at ([xshift=-12mm, yshift=8mm]4.north) {$M$};

   \node[abs node] (7) at ([yshift=-23mm]3.south) {$p$};
   \node[abs node] (8) [right of=7] {$q$};
    
    \node[font = \small] at ([xshift=14mm, yshift=-8mm]7.south) {$\hat{M}$};

\draw [smallarrows] ([xshift=5mm, yshift=-12mm]3.east) -- ([xshift=5mm, 
yshift=-22mm]3.east) node[color = white, font=\small] at ([xshift=5mm, 
yshift=-17mm]3.east) {$\alpha$};

\draw [bigarrows] ([xshift=3mm]8.east) -- ([xshift=45mm]8.east) node[color = 
white, font=\small] at ([xshift=23mm]8.east) {RemoveState};
   
     \end{scope}

 \begin{scope}[xshift=9cm]
  \node[main node] (1) {$p$};
  \node[main node] (2) [below of=1] {$p$};
  \node[main node] (3) [below of=2] {$p$};
  \node[main node] (4) [right of=1] {$q$};
  \node[main node] (5) [below of=4] {$q$};
  \node[main node] (6) [below of=5] {$q$};
     
     \draw[dotted] ([xshift=-2mm, yshift=9mm]1.west) rectangle ([xshift=2mm, 
yshift= -9mm]3.east);
     \draw[dotted] ([xshift=-2mm, yshift=9mm]4.west) rectangle ([xshift=2mm, 
yshift= -9mm]6.east);

   \node[font = \small] at ([xshift=-12mm, yshift=8mm]4.north) {$M'$};

   \node[abs node] (7) at ([yshift=-23mm]3.south) {$p$};
   \node[abs node] (8) [right of=7] {$q$};
    
    \node[font = \small] at ([xshift=14mm, yshift=-8mm]7.south) {$\hat{M'}$};

\draw [smallarrows] ([xshift=5mm, yshift=-20mm]3.east) -- ([xshift=5mm, 
yshift=-10mm]3.east) node[color = white, font=\small] at ([xshift=5mm, 
yshift=-15mm]3.east) {$\gamma$}; 

 \node[black, auto=false, cross out, minimum size=6mm, -, draw] at (4) {};
 \node[black, auto=false, cross out, minimum size=6mm, -, draw] at (5) {};
 \node[black, auto=false, cross out, minimum size=6mm, -, draw] at (6) {};
 \node[black, auto=false, cross out, minimum size=6mm, -, draw] at (8) {};

\end{scope}

\end{tikzpicture}
\caption{\emph{RemoveState}: Removing a disconnected KMTS state}
\label{fig:RemoveState}
\end{figure}
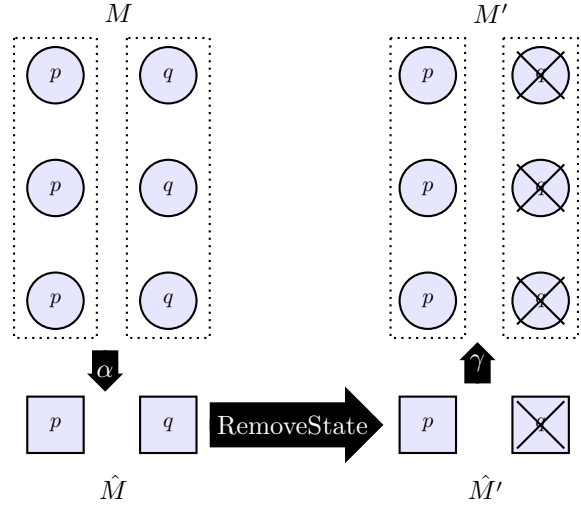

\begin{prop}
\label{prop:RemoveState}
For any $\hat{M}^{\prime} = RemoveState(\hat{M},\hat{s}_{r})$, it 
holds that $\hat{d}(\hat{M},\hat{M}^{\prime}) = 1$.\qed
\end{prop}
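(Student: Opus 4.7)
The plan is to unpack the four terms in $\hat{d}$ one at a time and show that exactly one of them contributes $1$ while the other three contribute $0$.

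First I would observe that by the definition of $RemoveState$, the only stipulated changes from $\hat{M}$ to $\hat{M}^{\prime}$ are $\hat{S}^{\prime} = \hat{S} - \{\hat{s}_{r}\}$, $\hat{S}_{0}^{\prime} = \hat{S}_{0} - \{\hat{s}_{r}\}$, and $\hat{L}^{\prime} = \hat{L} - \{\hat{l}_{r}\}$, while $R_{must}^{\prime} = R_{must}$ and $R_{may}^{\prime} = R_{may}$. The crucial point is that the isolation precondition on $\hat{s}_{r}$ (no ingoing or outgoing may-transitions, and hence, since $R_{must} \subseteq R_{may}$, no must-transitions either) is precisely what guarantees that the transition relations do not need to be altered at all.

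Next I would compute the four summands of $\hat{d}(\hat{M},\hat{M}^{\prime})$ from Definition~\ref{def:metric_space_kmts}. The state-set term gives $|\hat{S}\,\Delta\,\hat{S}^{\prime}| = |\{\hat{s}_{r}\}| = 1$. The must-transition term gives $|R_{must}\,\Delta\,R_{must}^{\prime}| = 0$ since the two sets are identical. The may-only term $|(R_{may}-R_{must})\,\Delta\,(R_{may}^{\prime}-R_{must}^{\prime})|$ is likewise $0$. Finally, on the common domain $\hat{S}\cap\hat{S}^{\prime} = \hat{S}-\{\hat{s}_{r}\}$, the labeling is untouched (the pair $\hat{l}_{r}=(\hat{s}_{r},lit)$ lies outside this domain), so $\pi(\hat{L}\upharpoonright_{\hat{S}\cap\hat{S}^{\prime}}) = \pi(\hat{L}^{\prime}\upharpoonright_{\hat{S}\cap\hat{S}^{\prime}})$ and the label term is $0$.

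Summing gives $\hat{d}(\hat{M},\hat{M}^{\prime}) = 1 + 0 + 0 + 0 = 1$, which is exactly the claim. There is no real obstacle to this proof; the only subtlety worth flagging is the use of the isolation hypothesis to ensure that $R_{must}$ and $R_{may}$ remain unchanged (a point that would fail, for instance, if we tried to delete a state carrying even a single transition, since then removing that transition would add a second unit of distance and break the bound).
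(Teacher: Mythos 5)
Your proof is correct and is exactly the term-by-term computation that the paper leaves implicit (Proposition~\ref{prop:RemoveState} is stated with no written proof, being treated as immediate from Definitions~\ref{def:RemoveState} and~\ref{def:metric_space_kmts}). Your remark that the isolation precondition is what keeps the transition terms at zero is the right subtlety to flag, and nothing in your argument deviates from the intended reasoning.
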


\begin{defi}
\label{def:remove_state_ks}
Let $M = (S,S_{0},R,L)$ be a KS and let $\alpha(M) = (\hat{S},\hat{S_{0}}, 
R_{must}, R_{may}, \hat{L})$ be the abstract KMTS derived from $M$ 
as in Def.~\ref{def:abs_kmts}.  Also, let 
$\hat{M}^{\prime} = RemoveState(\alpha(M),\hat{s}_{r})$ 
for some 
$\hat{s}_{r} \in \hat{S}$ with 
$\hat{l}_{r} = (\hat{s}_{r},lit) \in \hat{L}$.  
The KS $M^{\prime} \in \gamma(\hat{M}^{\prime})$, 
whose structural distance $d$ from $M$ is minimized, is given by:
\begin{equation}
M^{\prime} = (S^{\prime}, S_{0}^{\prime}, 
R^{\prime}, L^{\prime}) \mbox{ s.t. } 
S^{\prime} = S - S_{r},  
S_{0}^{\prime} = S_{0} - S_{r}, R^{\prime} = R, 
L^{\prime} = L - L_{r} 
\end{equation}
where 
$S_{r} = \{ s_{r} \mid s_{r} \in S \mbox{ and } s_{r} 
\in \gamma(\hat{s}_{r}) \}$ and 
$L_{r} = \{ l_{r} = (s_{r},lit) \mid l_{r} \in L \}$. \qed
\end{defi}

\begin{prop}
\label{prop:remove_state}
For $M^{\prime}$, it holds that
$1 \leq d(M,M^{\prime}) \leq |S|$.
\end{prop}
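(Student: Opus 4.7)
The plan is to compute $d(M,M^{\prime})$ directly from Def.~\ref{def:metric_space} by unpacking the three summands, exactly in the style of the earlier propositions in this section (e.g.\ Prop.~\ref{prop:add_state} and Prop.~\ref{prop:change_label}).

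First I would note that, by Def.~\ref{def:remove_state_ks}, $S^{\prime} = S - S_{r}$, $R^{\prime} = R$, and $L^{\prime} = L - L_{r}$, where $L_{r}$ consists exactly of those labeling pairs whose first component lies in $S_{r}$. From this the three terms of $d(M,M^{\prime})$ collapse cleanly: $|S \, \Delta \, S^{\prime}| = |S_{r}|$ because $S_{r} \subseteq S$ and the removed states are precisely those in $S_{r}$; $|R \, \Delta \, R^{\prime}| = 0$ since the transition relation is unchanged; and on the common state set $S \cap S^{\prime} = S - S_{r}$ the two labelings $L$ and $L^{\prime}$ agree (because $L$ and $L^{\prime}$ differ only on $S_{r}$, which has been excluded from the restriction), so $|\pi(L\upharpoonright_{S\cap S^{\prime}}) \, \Delta \, \pi(L^{\prime}\upharpoonright_{S\cap S^{\prime}})| = 0$. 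Combining these yields $d(M,M^{\prime}) = |S_{r}|$.

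It then remains to bound $|S_{r}|$. The upper bound $|S_{r}| \leq |S|$ is immediate from $S_{r} \subseteq S$. For the lower bound, I would appeal to the totality condition on the pair $(\alpha,\gamma)$ from Def.~\ref{def:abs_kmts}: since $\hat{s}_{r} \in \hat{S}$ in the abstract KMTS $\alpha(M)$, there exists at least one concrete state $s \in S$ with $\alpha(s) = \hat{s}_{r}$, equivalently $s \in \gamma(\hat{s}_{r})$, hence $|S_{r}| \geq 1$.

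The calculation is essentially routine; the only mildly delicate point, and the only step I would flag as requiring care, is justifying the nonemptiness $|S_{r}| \geq 1$, which hinges on the bi-conditional $\alpha(s) = \hat{s} \Leftrightarrow s \in \gamma(\hat{s})$ together with the totality of $\alpha$ guaranteed by Def.~\ref{def:abs_kmts}. Once this is in hand, combining with $d(M,M^{\prime}) = |S_{r}|$ gives $1 \leq d(M,M^{\prime}) \leq |S|$, as required.
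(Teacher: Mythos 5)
Your proof is correct and follows essentially the same route as the paper's: both collapse the distance to $d(M,M^{\prime}) = |S \, \Delta \, S^{\prime}| = |S_{r}|$ and then bound $|S_{r}|$ between $1$ and $|S|$. The only difference is that you explicitly justify $|S_{r}| \geq 1$ via the totality of the abstraction pair, a step the paper simply asserts.
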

\begin{proof}
$d(M,M^{\prime}) = |S\Delta S^{\prime}| + |R\Delta R^{\prime}| + 
\frac{|\pi(L\upharpoonright_{S\cap S^{\prime}})\Delta \pi(L^{\prime}\upharpoonright_{S\cap S^{\prime}})|}{2}$.
Because $|R\Delta R^{\prime}| = 0$ and $|\pi(L\upharpoonright_{S\cap S^{\prime}})\Delta \pi(L^{\prime}\upharpoonright_{S\cap S^{\prime}})| = 0$, 
$d(M,M^{\prime}) = |S\Delta S^{\prime}| = 
|S - S^{\prime}| + |S^{\prime} - S| = |S_{r}| + 0 = |S_{r}|$.
It holds that $|S_{r}| \geq 1$ and $|S_{r}| \leq |S|$.
So, we proved that $1 \leq d(M,M^{\prime}) \leq |S|$. 
\end{proof}

\subsubsection{Minimality Of Changes Ordering For Basic Repair Operations}
\label{subsec:minimal_basic_ops}

The distance metric $d$ of Def.~\ref{def:metric_space} reflects the need to quantify structural changes in the concrete model that are attributed to model repair steps applied to the abstract KMTS.  Every such repair step implies multiple structural changes in the concrete KSs, due to the use of abstraction.  In this context, 
our distance metric is an essential means for the effective application of the abstraction in the repair process.

Based on the upper bound given by Prop.~\ref{prop:add_must} 
and all the respective results for the other basic repair 
operations, we introduce the partial ordering shown in 
Fig.~\ref{fig:order_basic_ops}.  This ordering is used 
in our \emph{AbstractRepair} algorithm to 
\emph{heuristically} select at each step the basic repair 
operation that \textit{generates the KSs with the least changes}.
When it is possible to apply more than one basic repair operation with the same
upper bound, our algorithm successively uses them until a repair solution is found,   
in an order based on the computational complexity of their application. \enlargethispage{2\baselineskip}
 
If instead of our approach, all possible repaired KSs were 
checked to identify the basic repair operation with the 
minimum changes, this would defeat the purpose of
using abstraction.  The reason is that such a check 
inevitably would depend on the size of concrete KSs.  

\begin{figure}[t]
\centering
\includegraphics[scale=1]{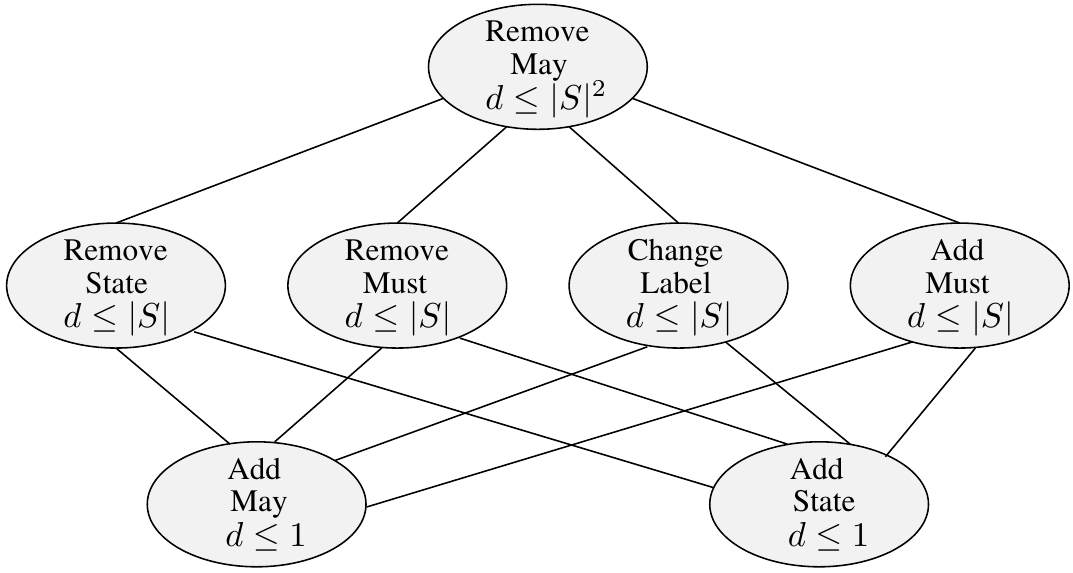}
\caption{Minimality of changes ordering of the set of basic repair operations}
\label{fig:order_basic_ops}
\end{figure}

\section{The Abstract Model Repair Algorithm}
\label{sec:alg}
The \emph{AbstractRepair} algorithm used in Step 
3 of our repair process is a recursive, syntax-directed algorithm, 
where the syntax for the property $\phi$ in question is that of CTL.  The same approach is followed by the SAT 
model checking algorithm in~\cite{HR04} and a number 
of model repair solutions applied to concrete KSs~\cite{ZD08,CR09}. In our case, we aim to the repair of an abstract KMTS by successively calling 
primitive repair functions that handle atomic formulas, logical connectives and CTL operators. At each step, the repair with the least changes for the concrete model among all the possible repairs is applied first.      

\begin{algorithm}[t]
\caption{AbstractRepair}
\label{alg:main}        	
\begin{algorithmic}[1]
\renewcommand{\algorithmicrequire}{\textbf{Input:}}
\renewcommand{\algorithmicensure}{\textbf{Output:}}              		
\REQUIRE $\hat{M} = (\hat{S}, \hat{S}_{0}, R_{must}, R_{may}, 
\hat{L})$, $\hat{s} \in \hat{S}$, a CTL property $\phi$ in PNF for 
which $(\hat{M},\hat{s}) \not\models \phi$, and a set of constraints 
$C = \{ (\hat{s}_{c_{1}},\phi_{c_{1}}), 
(\hat{s}_{c_{2}},\phi_{c_{2}}), ..., (\hat{s}_{c_{n}},\phi_{c_{n}}) \}$ where 
$\hat{s}_{c_{i}} \in \hat{S}$ and $\phi_{c_{i}}$ is a CTL formula. 
\ENSURE $\hat{M^{\prime}} = (\hat{S^{\prime}}, \hat{S_{0}^{\prime}}, 
R_{must}^{\prime}, R_{may}^{\prime}, \hat{L^{\prime}})$ 
and 
$(\hat{M^{\prime}},\hat{s}) \models \phi$ or FAILURE.  
\IF {$\phi$ is $false$}
\RETURN FAILURE
\ELSIF {$\phi \in LIT$}
\RETURN $AbstractRepair_{ATOMIC}(\hat{M},\hat{s},\phi,C)$
\ELSIF {$\phi$ is $\phi_{1} \wedge \phi_{2}$}
\RETURN $AbstractRepair_{AND}(\hat{M},\hat{s},\phi,C)$
\ELSIF {$\phi$ is $\phi_{1} \vee \phi_{2}$}
\RETURN $AbstractRepair_{OR}(\hat{M},\hat{s},\phi,C)$
\ELSIF {$\phi$ is $OPER\phi_{1}$}
\RETURN $AbstractRepair_{OPER}(\hat{M},\hat{s},\phi,C)$
\STATE where $OPER \in \{AX,EX,AU,EU,AF,EF,AG,EG\}$
\ENDIF
\end{algorithmic}
\end{algorithm}

The main routine of \emph{AbstractRepair} is presented in 
Algorithm~\ref{alg:main}. If the property $\phi$ is 
not in Positive Normal Form, i.e. negations are applied only to 
atomic propositions, then we transform it into such a form before 
applying Algorithm~\ref{alg:main}.
    
An initially empty set of constraints $C = \{ (\hat{s}_{c_{1}},\phi_{c_{1}}), 
(\hat{s}_{c_{2}},\phi_{c_{2}}), ..., (\hat{s}_{c_{n}},\phi_{c_{n}}) \}$ 
is passed as an argument in the successive recursive calls of 
\emph{AbstractRepair}. We note that these constraints can also specify 
\emph{existing} properties that should be preserved during repair. 
If $C$ is not empty, then for the returned KMTS $\hat{M}^{\prime}$, 
it holds that $(\hat{M^{\prime}},\hat{s}_{c_{i}}) \models \phi_{c_{i}}$ 
for all $(\hat{s}_{c_{i}},\phi_{c_{i}}) \in C$.  For brevity, we denote 
this with $\hat{M}^{\prime} \models C$.  
We use $C$ in order to handle conjunctive formulas 
of the form $\phi = \phi_{1} \wedge \phi_{2}$ 
for some state $\hat{s}$.  In this case, 
\emph{AbstractRepair} is called for the KMTS $\hat{M}$ 
and property $\phi_{1}$ with $C = \{ (\hat{s},\phi_{2}) \}$.  
The same is repeated for property $\phi_{2}$ with 
$C = \{ (\hat{s},\phi_{1}) \}$ and the two results 
are combined appropriately.    

For any CTL formula $\phi$ and KMTS state $\hat{s}$, 
\emph{AbstractRepair} either outputs a KMTS 
$\hat{M}^{\prime}$ for which 
$(\hat{M^{\prime}},\hat{s}) \models \phi$ or else 
returns FAILURE, if such a model cannot be found.  
This is the case when the algorithm handles 
conjunctive formulas and a KMTS that 
simultaneously satisfies all conjuncts cannot 
be found.   

\begin{algorithm}[htb]       
\floatname{algorithm}{Algorithm}
\caption{$AbstractRepair_{ATOMIC}$}  
\label{alg:ATOMIC}        	
\begin{algorithmic}[1]
\renewcommand{\algorithmicrequire}{\textbf{Input:}}
\renewcommand{\algorithmicensure}{\textbf{Output:}}              		
\REQUIRE $\hat{M} = (\hat{S}, \hat{S}_{0}, R_{must}, R_{may}, 
\hat{L})$, $\hat{s} \in \hat{S}$, 
a CTL property $\phi$ where $\phi$ is an atomic formula for which 
$(\hat{M},\hat{s}) \not\models \phi$, and a set of constraints 
$C = \{ (\hat{s}_{c_{1}},\phi_{c_{1}}), 
(\hat{s}_{c_{2}},\phi_{c_{2}}), ..., (\hat{s}_{c_{n}},\phi_{c_{n}}) \}$ where 
$\hat{s}_{c_{i}} \in \hat{S}$ and $\phi_{c_{i}}$ is a CTL formula.
\ENSURE $\hat{M^{\prime}} = (\hat{S^{\prime}}, \hat{S_{0}^{\prime}}, 
R_{must}^{\prime}, R_{may}^{\prime}, \hat{L^{\prime}})$ and 
$(\hat{M^{\prime}},\hat{s}) \models \phi$ or FAILURE. 
\STATE $\hat{M^{\prime}} := ChangeLabel(\hat{M},\hat{s},\phi)$
\IF {$\hat{M^{\prime}} \models C$} 
	\RETURN $\hat{M^{\prime}}$
\ELSE
	\RETURN FAILURE
\ENDIF
\end{algorithmic}
\end{algorithm}

\begin{algorithm}[htb]      
\floatname{algorithm}{Algorithm}
\caption{$AbstractRepair_{OR}$}   
\label{alg:OR}       	
\begin{algorithmic}[1]
\renewcommand{\algorithmicrequire}{\textbf{Input:}}
\renewcommand{\algorithmicensure}{\textbf{Output:}} 
\REQUIRE $\hat{M} = (\hat{S}, \hat{S}_{0}, R_{must}, R_{may}, 
\hat{L})$, $\hat{s} \in \hat{S}$, 
a CTL property $\phi = \phi_{1} \vee \phi_{2}$ for which  
$(\hat{M},\hat{s}) \not\models \phi$, and a set of constraints 
$C = ( (\hat{s}_{c_{1}},\phi_{c_{1}}), 
(\hat{s}_{c_{2}},\phi_{c_{2}}), ..., (\hat{s}_{c_{n}},\phi_{c_{n}}) )$ where 
$\hat{s}_{c_{i}} \in \hat{S}$ and $\phi_{c_{i}}$ is a CTL formula. 
\ENSURE $\hat{M^{\prime}} = (\hat{S^{\prime}}, \hat{S_{0}^{\prime}}, 
R_{must}^{\prime}, R_{may}^{\prime}, \hat{L^{\prime}})$, 
$\hat{s} \in \hat{S^{\prime}}$ and  
$(\hat{M^{\prime}},\hat{s}) \models \phi$ or FAILURE.  
\STATE $RET_{1} := AbstractRepair(\hat{M},\hat{s},\phi_{1},C)$
\STATE $RET_{2} := AbstractRepair(\hat{M},\hat{s},\phi_{2},C)$
\IF { $RET_{1} \neq FAILURE$ \&\& $RET_{2} \neq FAILURE $  }
	\STATE $\hat{M}_{1} := RET_{1}$
	\STATE $\hat{M}_{2} := RET_{2}$
	\STATE $\hat{M^{\prime}} := MinimallyChanged(\hat{M},\hat{M_{1}},\hat{M_{2}})$
\ELSIF { $RET_{1} \neq FAILURE$ }
	\STATE $\hat{M^{\prime}} := RET_{1}$
\ELSIF { $RET_{2} \neq FAILURE$ }
	\STATE $\hat{M^{\prime}} := RET_{2}$
\ELSE
	\RETURN FAILURE
\ENDIF
\RETURN $\hat{M}^{\prime}$
\end{algorithmic}
\end{algorithm}

\begin{algorithm}[htb]      
\floatname{algorithm}{Algorithm}
\caption{$AbstractRepair_{AND}$}   
\label{alg:AND}       	
\begin{algorithmic}[1]
\renewcommand{\algorithmicrequire}{\textbf{Input:}}
\renewcommand{\algorithmicensure}{\textbf{Output:}} 
\REQUIRE $\hat{M} = (\hat{S}, \hat{S}_{0}, R_{must}, R_{may}, 
\hat{L})$, $\hat{s} \in \hat{S}$, 
a CTL property $\phi = \phi_{1} \wedge \phi_{2}$ for which  
$(\hat{M},\hat{s}) \not\models \phi$, and a set of constraints 
$C = ( (\hat{s}_{c_{1}},\phi_{c_{1}}), 
(\hat{s}_{c_{2}},\phi_{c_{2}}), ..., (\hat{s}_{c_{n}},\phi_{c_{n}}) )$ where 
$\hat{s}_{c_{i}} \in \hat{S}$ and $\phi_{c_{i}}$ is a CTL formula. 
\ENSURE $\hat{M^{\prime}} = (\hat{S^{\prime}}, \hat{S_{0}^{\prime}}, 
R_{must}^{\prime}, R_{may}^{\prime}, \hat{L^{\prime}})$, 
$\hat{s} \in \hat{S^{\prime}}$ and  
$(\hat{M^{\prime}},\hat{s}) \models \phi$ or FAILURE. 
\STATE $RET_{1} := AbstractRepair(\hat{M},\hat{s},\phi_{1},C)$
\STATE $RET_{2} := AbstractRepair(\hat{M},\hat{s},\phi_{2},C)$
\STATE $C_{1} := C \cup \{ (\hat{s},\phi_{1}) \}$, $C_{2} := C \cup \{(\hat{s},\phi_{2})\}$ 
\STATE $RET_{1}^{\prime} := FAIURE$, $RET_{2}^{\prime} := FAIURE$
\IF { $RET_{1} \neq FAILURE$ }
	\STATE $\hat{M}_{1} := RET_{1}$
	\STATE $RET_{1}^{\prime} := AbstractRepair(\hat{M}_{1},\hat{s},\phi_{2},C_{1})$
	\IF { $RET_{1}^{\prime} \neq FAILURE$ }
		\STATE $\hat{M}_{1}^{\prime} := RET_{1}^{\prime}$
	\ENDIF
\ENDIF
\IF { $RET_{2} \neq FAILURE$ }
	\STATE $\hat{M}_{2} := RET_{2}$
	\STATE $RET_{2}^{\prime} := AbstractRepair(\hat{M}_{2},\hat{s},\phi_{1},C_{2})$
	\IF { $RET_{2}^{\prime} \neq FAILURE$ }
		\STATE $\hat{M}_{2}^{\prime} := RET_{2}^{\prime}$
	\ENDIF
\ENDIF
\IF { $RET_{1}^{\prime} \neq FAILURE$ \&\& $RET_{2}^{\prime} \neq FAILURE $  }
	\STATE $\hat{M^{\prime}} := MinimallyChanged(\hat{M},\hat{M}_{1}^{\prime},\hat{M}_{2}^{\prime})$
\ELSIF { $RET_{1}^{\prime} \neq FAILURE$ }
	\STATE $\hat{M^{\prime}} := RET_{1}^{\prime}$
\ELSIF { $RET_{2}^{\prime} \neq FAILURE$ }
	\STATE $\hat{M^{\prime}} := RET_{2}^{\prime}$
\ELSE
	\RETURN FAILURE
\ENDIF
\RETURN $\hat{M}^{\prime}$
\end{algorithmic}
\end{algorithm}

\begin{algorithm}[htb]       
\floatname{algorithm}{Algorithm}
\caption{$AbstractRepair_{AG}$} 
\label{alg:AG}         	
\begin{algorithmic}[1]
\renewcommand{\algorithmicrequire}{\textbf{Input:}}
\renewcommand{\algorithmicensure}{\textbf{Output:}}    
\REQUIRE $\hat{M} = (\hat{S}, \hat{S}_{0}, R_{must}, R_{may}, 
\hat{L})$, $\hat{s} \in \hat{S}$, 
a CTL property $\phi = AG\phi_{1}$ for which 
$(\hat{M},\hat{s}) \not\models \phi$, and a set of constraints 
$C = \{ (\hat{s}_{c_{1}},\phi_{c_{1}}), 
(\hat{s}_{c_{2}},\phi_{c_{2}}), ..., (\hat{s}_{c_{n}},\phi_{c_{n}}) \}$ where 
$\hat{s}_{c_{i}} \in \hat{S}$ and $\phi_{c_{i}}$ is a CTL formula.
\ENSURE $\hat{M^{\prime}} = (\hat{S^{\prime}}, \hat{S_{0}^{\prime}}, 
R_{must}^{\prime}, R_{may}^{\prime}, \hat{L^{\prime}})$ and 
$(\hat{M^{\prime}},\hat{s}) \models \phi$ or FAILURE.  
\IF {$(\hat{M},\hat{s}) \not\models \phi_{1}$}
	\STATE $RET := AbstractRepair(\hat{M},\hat{s},\phi_{1},C)$
	\IF { $RET == FAILURE$ }
		\RETURN FAILURE
	\ELSE
		\STATE $\hat{M^{\prime}} := RET$
	\ENDIF
\ELSE
	\STATE $\hat{M^{\prime}} := \hat{M}$
\ENDIF
\FORALL{ reachable states $\hat{s}_{k}$ through may-transitions from $\hat{s}$ 
	such that $(\hat{M^{\prime}},\hat{s}_{k}) \not\models \phi_{1}$ }
	\STATE $RET := AbstractRepair(\hat{M^{\prime}},\hat{s}_{k},\phi_{1},C)$
	\IF { $RET == FAILURE$ }
		\RETURN FAILURE
	\ELSE
		\STATE $\hat{M^{\prime}} := RET$
	\ENDIF
\ENDFOR
\IF { $\hat{M^{\prime}} \models C$ }
	\RETURN $\hat{M^{\prime}}$
\ENDIF
\RETURN FAILURE
\end{algorithmic}
\end{algorithm}
\subsection{Primitive Functions}
\label{subsec:alg_prim_func}
Algorithm~\ref{alg:ATOMIC} describes $AbstractRepair_{ATOMIC}$, 
which for a simple atomic formula, updates the labeling of 
the input state with the given atomic proposition.   
Disjunctive formulas are handled by repairing the disjunct leading to the 
minimum change (Algorithm~\ref{alg:OR}), while conjunctive formulas are handled 
by the algorithm with the use of constraints (Algorithm~\ref{alg:AND}).

Algorithm~\ref{alg:AG} describes the primitive function 
$AbstractRepair_{AG}$ which is called when 
$\phi = AG\phi_{1}$.  If $AbstractRepair_{AG}$ is called 
for a state $\hat{s}$, it recursively calls \emph{AbstractRepair} 
for $\hat{s}$ and for all reachable states through may-transitions 
from $\hat{s}$ which do not satisfy $\phi_{1}$.  The resulting KMTS 
$\hat{M}^{\prime}$ is returned, if it does not violate any 
constraint in $C$.    

\begin{algorithm}[htb]  
\floatname{algorithm}{Algorithm}
\caption{$AbstractRepair_{EX}$}     
\label{alg:EX}     	
\begin{algorithmic}[1]
\renewcommand{\algorithmicrequire}{\textbf{Input:}}
\renewcommand{\algorithmicensure}{\textbf{Output:}}       
\REQUIRE $\hat{M} = (\hat{S}, \hat{S}_{0}, R_{must}, R_{may}, 
\hat{L})$, $\hat{s} \in \hat{S}$, 
a CTL property $\phi = EX\phi_{1}$ for which  
$(\hat{M},\hat{s}) \not\models \phi$, and a set of constraints 
$C = \{ (\hat{s}_{c_{1}},\phi_{c_{1}}), 
(\hat{s}_{c_{2}},\phi_{c_{2}}), ..., (\hat{s}_{c_{n}},\phi_{c_{n}}) \}$ where 
$\hat{s}_{c_{i}} \in \hat{M}$ and $\phi_{c_{i}}$ is a CTL formula. 
\ENSURE $\hat{M^{\prime}} = (\hat{S^{\prime}}, \hat{S_{0}^{\prime}}, 
R_{must}^{\prime}, R_{may}^{\prime}, \hat{L^{\prime}})$ and 
$(\hat{M^{\prime}},\hat{s}) \models \phi$ or FAILURE. 
\IF {there exists $\hat{s}_{1} \in \hat{S}$ such that $(\hat{M},\hat{s}_{1}) \models \phi_{1}$}
	\FORALL {$\hat{s}_{i} \in \hat{S}$ such that $(\hat{M},\hat{s}_{i}) \models \phi_{1}$}
		\STATE $\hat{r}_{i} := (\hat{s},\hat{s}_{i})$, 
				$\hat{M^{\prime}} := AddMust(\hat{M},\hat{r}_{i})$
		\IF {$\hat{M^{\prime}} \models C$}
			\RETURN $\hat{M^{\prime}}$
		\ENDIF
	\ENDFOR
\ELSE
	\FORALL{direct must-reachable states $\hat{s}_{i}$ from $\hat{s}$ such that $(\hat{M},\hat{s}_{i}) \not\models \phi_{1}$}
		\STATE $RET := AbstractRepair(\hat{M},\hat{s}_{i},\phi_{1},C)$
		\IF { $RET \neq FAILURE$ }
			\STATE $\hat{M^{\prime}} := RET$
			\RETURN $\hat{M^{\prime}}$
		\ENDIF
	\ENDFOR	
	\STATE $\hat{M^{\prime}} := AddState(\hat{M},\hat{s}_{n})$, 
			$\hat{r}_{n} := (\hat{s},\hat{s}_{n})$, 
			$\hat{M^{\prime}} := AddMust(\hat{M^{\prime}},\hat{r}_{n})$
	\STATE $\hat{r}_{n} := (\hat{s}_{n},\hat{s}_{n})$
	\STATE $\hat{M^{\prime}} := AddMay(\hat{M^{\prime}},\hat{r}_{n})$
	\STATE $RET := AbstractRepair(\hat{M^{\prime}},\hat{s}_{n},\phi_{1},C)$
	\IF { $RET \neq FAILURE$ } 
		\STATE $\hat{M^{\prime}} := RET$
		\RETURN $\hat{M^{\prime}}$
	\ENDIF
\ENDIF
\RETURN FAILURE
\end{algorithmic}
\end{algorithm}

$AbstractRepair_{EX}$ presented in Algorithm~\ref{alg:EX} 
is the primitive function for handling properties of the form 
$EX\phi_{1}$ for some state $\hat{s}$.  At first, 
$AbstractRepair_{EX}$ attempts to repair the KMTS by adding a 
must-transition from $\hat{s}$ to a state that 
satisfies property $\phi_{1}$.  If a repaired KMTS is not found,
then \emph{AbstractRepair} is recursively called for an 
immediate successor of $\hat{s}$ through a must-transition,  
such that $\phi_{1}$ is not satisfied.   
If a constraint in $C$ is violated, then (i) a new state 
is added, (ii) \emph{AbstractRepair} is called for the new state and 
(iii) a must-transition from $\hat{s}$ to the new state 
is added.  The resulting KMTS is returned by the algorithm if all 
constraints of $C$ are satisfied.   

\begin{algorithm}[htb]  
\floatname{algorithm}{Algorithm}
\caption{$AbstractRepair_{AX}$}     
\label{alg:AX}     	
\begin{algorithmic}[1]
\renewcommand{\algorithmicrequire}{\textbf{Input:}}
\renewcommand{\algorithmicensure}{\textbf{Output:}}       
\REQUIRE $\hat{M} = (\hat{S}, \hat{S}_{0}, R_{must}, R_{may}, 
\hat{L})$, $\hat{s} \in \hat{S}$, 
a CTL property $\phi = AX\phi_{1}$ for which  
$(\hat{M},\hat{s}) \not\models \phi$, and a set of constraints 
$C = \{ (\hat{s}_{c_{1}},\phi_{c_{1}}), 
(\hat{s}_{c_{2}},\phi_{c_{2}}), ..., (\hat{s}_{c_{n}},\phi_{c_{n}}) \}$ where 
$\hat{s}_{c_{i}} \in \hat{M}$ and $\phi_{c_{i}}$ is a CTL formula. 
\ENSURE $\hat{M^{\prime}} = (\hat{S^{\prime}}, \hat{S_{0}^{\prime}}, 
R_{must}^{\prime}, R_{may}^{\prime}, \hat{L^{\prime}})$ and 
$(\hat{M^{\prime}},\hat{s}) \models \phi$ or FAILURE.
\STATE $\hat{M^{\prime}} := \hat{M}$
\STATE $RET := FAILURE$
\FORALL {direct may-reachable states $\hat{s}_{i}$   
from $\hat{s}$ with $(\hat{s},\hat{s}_{i}) \in R_{may}$}
	\IF {$(\hat{M^{\prime}},\hat{s}_{i}) \not\models \phi_{1}$}
		\STATE $RET := AbstractRepair(\hat{M^{\prime}},\hat{s}_{i},\phi_{1},C)$
		\IF {$RET == FAILURE$}
			\STATE BREAK
		\ENDIF
		\STATE $\hat{M^{\prime}} := RET$
	\ENDIF
\ENDFOR
\IF {$RET \neq FAILURE$}
	\RETURN $\hat{M^{\prime}}$
\ENDIF
\STATE $\hat{M^{\prime}} := \hat{M}$
\FORALL {direct may-reachable states $\hat{s}_{i}$   
from $\hat{s}$ with $\hat{r}_{i} := (\hat{s},\hat{s}_{i}) \in R_{may}$}
	\IF {$(\hat{M^{\prime}},\hat{s}_{i}) \not\models \phi_{1}$}
		\STATE $\hat{M^{\prime}} := RemoveMay(\hat{M^{\prime}},\hat{r}_{i})$
	\ENDIF
\ENDFOR
\IF {there exists direct may-reachable state $\hat{s}_{1}$   
from $\hat{s}$ such that $(\hat{s},\hat{s}_{1}) \in R_{may}$}
	\IF {$\hat{M^{\prime}} \models C$}
		\RETURN $\hat{M^{\prime}}$
	\ENDIF
\ELSE
	\FORALL {$\hat{s}_{j} \in \hat{S}$ such that $(\hat{M^{\prime}},\hat{s}_{j}) \models \phi_{1}$}
		\STATE $\hat{r}_{j} := (\hat{s},\hat{s}_{j})$, 
				$\hat{M^{\prime}} := AddMay(\hat{M^{\prime}},\hat{r}_{j})$
		\IF {$\hat{M^{\prime}} \models C$}
			\RETURN $\hat{M^{\prime}}$
		\ENDIF
	\ENDFOR
	\STATE $\hat{M^{\prime}} := AddState(\hat{M},\hat{s}_{n})$
	\IF {$\hat{s}_{n}$ is a dead-end state}
		\STATE $\hat{r}_{n} := (\hat{s}_{n},\hat{s}_{n})$, 
				$\hat{M^{\prime}} := AddMay(\hat{M^{\prime}},\hat{r}_{n})$
	\ENDIF
	\STATE $RET := AbstractRepair(\hat{M^{\prime}},\hat{s}_{n},\phi_{1},C)$
	\IF { $RET \neq FAILURE$ }
		\STATE $\hat{M^{\prime}} := RET$, 
				$\hat{r}_{n} := (\hat{s},\hat{s}_{n})$, 
				$\hat{M^{\prime}} := AddMay(\hat{M^{\prime}},\hat{r}_{n})$
		\IF {$\hat{M^{\prime}} \models C$}
			\RETURN $\hat{M^{\prime}}$
		\ENDIF
	\ENDIF
\ENDIF
\RETURN FAILURE
\end{algorithmic}
\end{algorithm}

Algorithm~\ref{alg:AX} presents primitive function 
$AbstractRepair_{AX}$ which is used when $\phi = AX\phi_{1}$.  
Firstly, $AbstractRepair_{AX}$ tries to repair the KMTS by 
applying $AbstractRepair$ for all direct may-successors 
$\hat{s}_{i}$ of $\hat{s}$ which do not satisfy property 
$\phi_{1}$, and in the case that all the constraints are 
satisfied the new KMTS is returned by the function.  
If such states do not exist or a constraint is violated, 
all may-transitions $(\hat{s},\hat{s}_{i})$ for which 
$(\hat{M},\hat{s}_{i}) \not\models \phi_{1}$, are removed.    
If there are states $\hat{s}_{i}$ such that 
$r_{m} := (\hat{s},\hat{s}_{i}) \in R_{may}$ and all constraints 
are satisfied then a repaired KMTS has been produced and it 
is returned by the function.  
Otherwise, a repaired KMTS results by the application of 
$AddMay$ from $\hat{s}$ to all states $\hat{s}_{j}$ which 
satisfy $\phi_{1}$.  
If any constraint is violated, then the KMTS is repaired by 
adding a new state, applying $AbstractRepair$ to this state 
for property $\phi_{1}$ and adding a may-transition from 
$\hat{s}$ to this state.  If all constraints are satisfied, the repaired KMTS is returned.  

\begin{algorithm}[htb]       
\floatname{algorithm}{Algorithm}
\caption{$AbstractRepair_{EG}$} 
\label{alg:EG}         	
\begin{algorithmic}[1]
\renewcommand{\algorithmicrequire}{\textbf{Input:}}
\renewcommand{\algorithmicensure}{\textbf{Output:}}    
\REQUIRE $\hat{M} = (\hat{S}, \hat{S}_{0}, R_{must}, R_{may}, 
\hat{L})$, $\hat{s} \in \hat{S}$, 
a CTL property $\phi = EG\phi_{1}$ for which 
$(\hat{M},\hat{s}) \not\models \phi$, and a set of constraints 
$C = \{ (\hat{s}_{c_{1}},\phi_{c_{1}}), 
(\hat{s}_{c_{2}},\phi_{c_{2}}), ..., (\hat{s}_{c_{n}},\phi_{c_{n}}) \}$ where 
$\hat{s}_{c_{i}} \in \hat{S}$ and $\phi_{c_{i}}$ is a CTL formula.
\ENSURE $\hat{M^{\prime}} = (\hat{S^{\prime}}, \hat{S_{0}^{\prime}}, 
R_{must}^{\prime}, R_{may}^{\prime}, \hat{L^{\prime}})$ and 
$(\hat{M^{\prime}},\hat{s}) \models \phi$ or FAILURE.  
\STATE $\hat{M}_{1} := \hat{M}$
\IF {$(\hat{M},\hat{s}) \not\models \phi_{1}$}
	\STATE $RET := AbstractRepair(\hat{M},\hat{s},\phi_{1},C)$
	\IF { $RET == FAILURE$ }
		\RETURN FAILURE
	\ENDIF
	\STATE $\hat{M}_{1} := RET$
\ENDIF
\WHILE {there exists maximal path $\pi_{must} := [\hat{s}_{1},\hat{s}_{2},...]$ 
such that $\forall \hat{s}_{i} \in \pi_{must}$ it holds that 
$(\hat{M}_{1},\hat{s}_{i}) \models \phi_{1}$}
	\STATE $\hat{r}_{1} := (\hat{s},\hat{s}_{1})$, 
			$\hat{M^{\prime}} := AddMust(\hat{M}_{1},\hat{r}_{1})$
	\IF { $\hat{M^{\prime}} \models C$ }
		\RETURN $\hat{M^{\prime}}$
	\ENDIF
\ENDWHILE
\WHILE {there exists maximal path $\pi_{must} := [\hat{s},\hat{s}_{1},\hat{s}_{2},...]$ 
such that $\forall \hat{s}_{i} \neq \hat{s} \in \pi_{must}$ it holds that 
$(\hat{M}_{1},\hat{s}_{i}) \not\models \phi_{1}$}
	\STATE $\hat{M^{\prime}} := \hat{M}_{1}$
	\FORALL {$\hat{s}_{i} \in \pi_{must}$}
		\IF {$(\hat{M}_{1},\hat{s}_{i}) \not\models \phi_{1}$}
			\STATE $RET := AbstractRepair(\hat{M^{\prime}},\hat{s}_{i},\phi_{1},C)$
			\IF { $RET \neq FAILURE$ }
				\STATE $\hat{M^{\prime}} := RET$
			\ELSE 
				\STATE continue to next path
			\ENDIF
		\ENDIF
	\ENDFOR
	\RETURN $\hat{M^{\prime}}$
\ENDWHILE
\STATE $\hat{M^{\prime}} := AddState(\hat{M}_{1},\hat{s}_{n})$
\STATE $RET := AbstractRepair(\hat{M^{\prime}},\hat{s}_{n},\phi_{1},C)$
\IF { $RET \neq FAILURE$ }
	\STATE $\hat{M^{\prime}} := RET$
	\STATE $\hat{r}_{n} := (\hat{s},\hat{s}_{n})$, 
			$\hat{M^{\prime}} := AddMust(\hat{M^{\prime}},\hat{r}_{n})$
	\IF {$\hat{s}_{n}$ is a dead-end state}
		\STATE $\hat{r}_{n} := (\hat{s}_{n},\hat{s}_{n})$, 
				$\hat{M^{\prime}} := AddMust(\hat{M^{\prime}},\hat{r}_{n})$
	\ENDIF
	\IF { $\hat{M^{\prime}} \models C$ }
		\RETURN $\hat{M^{\prime}}$
	\ENDIF
\ENDIF
\RETURN FAILURE
\end{algorithmic}
\end{algorithm}

$AbstractRepair_{EG}$ which is presented in 
Algorithm~\ref{alg:EG} is the primitive function which 
is called when input CTL property is in the form of 
$EG\phi_{1}$.  
Initially, if $\phi_{1}$ is not satisfied at $\hat{s}$ 
$AbstractRepair$ is called for $\hat{s}$ and $\phi_{1}$, 
and a KMTS $\hat{M}_{1}$ is produced.  
At first, a must-transition is added from $\hat{s}$ to a 
state $\hat{s}_{1}$ of a maximal must-path (i.e. a must-path in which each transition appears at most once)    
$\pi_{must} := [\hat{s}_{1},\hat{s}_{2},...]$ such that $\forall \hat{s}_{i} \in \pi_{must}$, 
$(\hat{M}_{1},\hat{s}_{i}) \models \phi_{1}$.  If all 
constraints are satisfied, then the repaired KMTS is returned.  
Otherwise, a KMTS is produced by recursively calling 
$AbstractRepair$ to all states $\hat{s}_{i} \neq \hat{s}$ 
of any maximal must-path $\pi_{must} := [\hat{s}_{1},\hat{s}_{2},...]$ 
with $\forall \hat{s}_{i} \in \pi_{must}$,  
$(\hat{M}_{1},\hat{s}_{i}) \not\models \phi_{1}$.  
If there are violated constraints in $C$, then a repaired KMTS 
is produced by adding a new state, calling $AbstractRepair$ 
for this state and property $\phi_{1}$ and calling $AddMust$  
to insert a must-transition from $\hat{s}$ to 
the new state.  The resulting KMTS is returned by the algorithm, 
if all constraints in $C$ are satisfied. 

\begin{algorithm}[htb]       
\floatname{algorithm}{Algorithm}
\caption{$AbstractRepair_{AF}$} 
\label{alg:AF}         	
\begin{algorithmic}[1]
\renewcommand{\algorithmicrequire}{\textbf{Input:}}
\renewcommand{\algorithmicensure}{\textbf{Output:}}    
\REQUIRE $\hat{M} = (\hat{S}, \hat{S}_{0}, R_{must}, R_{may}, 
\hat{L})$, $\hat{s} \in \hat{S}$, 
a CTL property $\phi = AF\phi_{1}$ for which 
$(\hat{M},\hat{s}) \not\models \phi$, and a set of constraints 
$C = \{ (\hat{s}_{c_{1}},\phi_{c_{1}}), 
(\hat{s}_{c_{2}},\phi_{c_{2}}), ..., (\hat{s}_{c_{n}},\phi_{c_{n}}) \}$ where 
$\hat{s}_{c_{i}} \in \hat{S}$ and $\phi_{c_{i}}$ is a CTL formula.
\ENSURE $\hat{M^{\prime}} = (\hat{S^{\prime}}, \hat{S_{0}^{\prime}}, 
R_{must}^{\prime}, R_{may}^{\prime}, \hat{L^{\prime}})$ and 
$(\hat{M^{\prime}},\hat{s}) \models \phi$ or FAILURE.  
\STATE $\hat{M^{\prime}} := \hat{M}$
\WHILE {there exists maximal path $\pi_{may} := [\hat{s},\hat{s}_{1},...]$ 
such that $\forall \hat{s}_{i} \in \pi_{may}$ it holds 
that $(\hat{M^{\prime}},\hat{s}_{i}) \not\models \phi_{1}$}
	\FORALL {$\hat{s}_{i} \in \pi_{may}$}
		\STATE $RET := AbstractRepair(\hat{M^{\prime}},\hat{s}_{i},\phi_{1},C)$
		\IF { $RET \neq FAILURE$ }
			\STATE $\hat{M^{\prime}} := RET$
			\STATE continue to next path
		\ENDIF
	\ENDFOR
	\RETURN FAILURE
\ENDWHILE
\RETURN $\hat{M}^{\prime}$
\end{algorithmic}
\end{algorithm}

$AbstractRepair_{AF}$ shown in Algorithm~\ref{alg:AF} is called when the 
CTL formula $\phi$ is in the form of $AF\phi_{1}$.  
While there is maximal may-path  
$\pi_{may} := [\hat{s},\hat{s}_{1},...]$ such 
that $\forall \hat{s}_{i} \in \pi_{may}$, 
$(\hat{M^{\prime}},\hat{s}_{i}) \not\models \phi_{1}$, 
$AbstractRepair_{AF}$ tries to obtain a repaired KMTS by recursively 
calling $AbstractRepair$ to some state $\hat{s}_{i} \in \pi_{may}$.  
If all constraints are satisfied to the new KMTS, then it is returned 
as the repaired model.    
 
\begin{algorithm}[htb]       
\floatname{algorithm}{Algorithm}
\caption{$AbstractRepair_{EF}$} 
\label{alg:EF}         	
\begin{algorithmic}[1]
\renewcommand{\algorithmicrequire}{\textbf{Input:}}
\renewcommand{\algorithmicensure}{\textbf{Output:}}    
\REQUIRE $\hat{M} = (\hat{S}, \hat{S}_{0}, R_{must}, R_{may}, 
\hat{L})$, $\hat{s} \in \hat{S}$, 
a CTL property $\phi = EF\phi_{1}$ for which 
$(\hat{M},\hat{s}) \not\models \phi$, and a set of constraints 
$C = \{ (\hat{s}_{c_{1}},\phi_{c_{1}}), 
(\hat{s}_{c_{2}},\phi_{c_{2}}), ..., (\hat{s}_{c_{n}},\phi_{c_{n}}) \}$ where 
$\hat{s}_{c_{i}} \in \hat{S}$ and $\phi_{c_{i}}$ is a CTL formula.
\ENSURE $\hat{M^{\prime}} = (\hat{S^{\prime}}, \hat{S_{0}^{\prime}}, 
R_{must}^{\prime}, R_{may}^{\prime}, \hat{L^{\prime}})$ and 
$(\hat{M^{\prime}},\hat{s}) \models \phi$ or FAILURE.  
\FORALL {must-reachable states $\hat{s}_{i}$ from 
$\hat{s}$ with $(\hat{M},\hat{s}_{i}) \not\models \phi_{1}$ or 
$\hat{s}_{i} := \hat{s}$}
	\FORALL {$\hat{s}_{k} \in \hat{S}$ such that 
	$(\hat{M},\hat{s}_{k}) \models \phi_{1}$ }
		\STATE $\hat{r}_{k} := (\hat{s}_{i},\hat{s}_{k})$, 
				$\hat{M^{\prime}} := AddMust(\hat{M},\hat{r}_{k})$
		\IF {$\hat{M^{\prime}} \models C$}
			\RETURN $\hat{M^{\prime}}$
		\ENDIF
	\ENDFOR
\ENDFOR
\FORALL {must-reachable states $\hat{s}_{i}$ from 
$\hat{s}$ with $(\hat{M},\hat{s}_{i}) \not\models \phi_{1}$ }
	\STATE $RET := AbstractRepair(\hat{M},\hat{s}_{i},\phi_{1},C)$
	\IF { $RET \neq FAILURE$ }
		\STATE $\hat{M^{\prime}} := RET$
		\RETURN $\hat{M^{\prime}}$
	\ENDIF
\ENDFOR
\STATE $\hat{M}_{1} := AddState(\hat{M^{\prime}},\hat{s}_{n})$, 
		$RET := AbstractRepair(\hat{M}_{1},\hat{s}_{n},\phi_{1},C)$
\IF { $RET \neq FAILURE$ }			
	\STATE $\hat{M}_{1} := RET$
	\FORALL {must-reachable states $\hat{s}_{i}$ from 
	$\hat{s}$ with $(\hat{M},\hat{s}_{i}) \not\models \phi_{1}$ or 
	$\hat{s}_{i} := \hat{s}$}
		\STATE $\hat{r}_{i} := (\hat{s}_{i},\hat{s}_{n})$, 
			$\hat{M^{\prime}} := AddMust(\hat{M}_{1},\hat{r}_{i})$
		\IF {$\hat{s}_{n}$ is a dead-end state}
			\STATE $\hat{r}_{n} := (\hat{s}_{n},\hat{s}_{n})$, 
				$\hat{M^{\prime}} := AddMust(\hat{M^{\prime}},\hat{r}_{n})$
		\ENDIF
		\IF { $\hat{M^{\prime}} \models C$ }
			\RETURN $\hat{M^{\prime}}$
		\ENDIF
	\ENDFOR
\ENDIF
\RETURN FAILURE
\end{algorithmic}
\end{algorithm}

$AbstractRepair_{EF}$ shown in 
Algorithm~\ref{alg:EF} is called when the CTL property 
$\phi$ is in the form $EF\phi_{1}$.  
Initially, a KMTS is acquired by adding a 
must-transition from a must-reachable state $\hat{s}_{i}$ 
from $\hat{s}$ to a state 
$\hat{s}_{k} \in \hat{S}$ such that 
$(\hat{M},\hat{s}_{k}) \models \phi_{1}$.  If all 
constraints are satisfied then this KMTS is returned.  
Otherwise, a KMTS is produced by applying $AbstractRepair$ 
to a must-reachable state $\hat{s}_{i}$ from $\hat{s}$ for $\phi_{1}$.  
If none of the constraints is violated then this KMTS is 
returned.  
At any other case, a new KMTS is produced by adding a new 
state $\hat{s}_{n}$, recursively calling $AbstractRepair$ 
for this state and $\phi_{1}$ and adding a must-transition 
from $\hat{s}$ or from a must-reachable $\hat{s}_{i}$ from  
$\hat{s}$ to $\hat{s}_{n}$.  If all 
constraints are satisfied, then this KMTS is returned as a 
repaired model by the algorithm.   

\begin{algorithm}[htb]       
\floatname{algorithm}{Algorithm}
\caption{$AbstractRepair_{AU}$} 
\label{alg:AU}         	
\begin{algorithmic}[1]
\renewcommand{\algorithmicrequire}{\textbf{Input:}}
\renewcommand{\algorithmicensure}{\textbf{Output:}}    
\REQUIRE $\hat{M} = (\hat{S}, \hat{S}_{0}, R_{must}, R_{may}, 
\hat{L})$, $\hat{s} \in \hat{S}$, 
a CTL property $\phi = A(\phi_{1}U\phi_{2})$ for which 
$(\hat{M},\hat{s}) \not\models \phi$, and a set of constraints 
$C = \{ (\hat{s}_{c_{1}},\phi_{c_{1}}), 
(\hat{s}_{c_{2}},\phi_{c_{2}}), ..., (\hat{s}_{c_{n}},\phi_{c_{n}}) \}$ where 
$\hat{s}_{c_{i}} \in \hat{S}$ and $\phi_{c_{i}}$ is a CTL formula.
\ENSURE $\hat{M^{\prime}} = (\hat{S^{\prime}}, \hat{S_{0}^{\prime}}, 
R_{must}^{\prime}, R_{may}^{\prime}, \hat{L^{\prime}})$ and 
$(\hat{M^{\prime}},\hat{s}) \models \phi$ or FAILURE.  
\STATE $\hat{M}_{1} := \hat{M}$
\IF {$(\hat{M},\hat{s}) \not\models \phi_{1}$}
	\STATE $RET := AbstractRepair(\hat{M},\hat{s},\phi_{1},C)$
	\IF { $RET == FAILURE$ }
		\RETURN FAILURE
	\ELSE
		\STATE $\hat{M}_{1} := RET$
	\ENDIF
\ENDIF
\WHILE {there exists path $\pi_{may} := [\hat{s}_{1},...,\hat{s}_{m}]$  
such that $\forall \hat{s}_{i} \in \pi_{may}$ it holds 
that $(\hat{M}_{1},\hat{s}_{i}) \models \phi_{1}$ and there does not 
exist $\hat{r}_{m} := (\hat{s}_{m},\hat{s}_{n}) \in R_{may}$ such that 
$(\hat{M}_{1},\hat{s}_{n}) \models \phi_{2}$}
	\FORALL {$\hat{s}_{j} \in \pi_{may}$ for which $(\hat{M}_{1},\hat{s}_{j}) \not\models \phi_{2}$ with $\hat{s}_{j} \neq \hat{s}_{1}$ }
		\STATE $RET := AbstractRepair(\hat{M}_{1},\hat{s}_{j},\phi_{2},C)$
		\IF { $RET \neq FAILURE$ }
			\STATE $\hat{M^{\prime}} := RET$
			\STATE continue to next path
		\ENDIF
	\ENDFOR
	\RETURN FAILURE
\ENDWHILE
\RETURN $\hat{M^{\prime}}$
\end{algorithmic}
\end{algorithm}

$AbstractRepair_{AU}$ is presented in Algorithm~\ref{alg:AU} 
and is called when $\phi = A(\phi_{1}U\phi_{2})$. 
If $\phi_{1}$ is not satisfied at $\hat{s}$, 
then a KMTS $\hat{M}_{1}$ is produced by applying $AbstractRepair$ 
to $\hat{s}$ for $\phi_{1}$.  Otherwise, $\hat{M}_{1}$ is same to 
$\hat{M}$.  
A new KMTS is produced as follows: for all may-paths 
$\pi_{may} := [\hat{s}_{1},...,\hat{s}_{m}]$ such that 
$\forall \hat{s}_{i} \in \pi_{may}$, $(\hat{M}_{1},\hat{s}_{i}) \models \phi_{1}$ 
and for which there does not $\hat{r}_{m} := (\hat{s}_{m},\hat{s}_{n}) \in R_{may}$ with  
$(\hat{M}_{1},\hat{s}_{n}) \models \phi_{2}$, 
$AbstractRepair$ is called for property $\phi_{2}$  for some state 
$\hat{s}_{j} \in \pi_{may}$ with $(\hat{M}_{1},\hat{s}_{j}) \not\models \phi_{2}$.     
If the resulting KMTS satisfies all constraints, then it is 
returned as a repair solution.  

\begin{algorithm}[htb]       
\floatname{algorithm}{Algorithm}
\caption{$AbstractRepair_{EU}$} 
\label{alg:EU}         	
\begin{algorithmic}[1]
\renewcommand{\algorithmicrequire}{\textbf{Input:}}
\renewcommand{\algorithmicensure}{\textbf{Output:}}    
\REQUIRE $\hat{M} = (\hat{S}, \hat{S}_{0}, R_{must}, R_{may}, 
\hat{L})$, $\hat{s} \in \hat{S}$, 
a CTL property $\phi = E(\phi_{1}U\phi_{2})$ for which 
$(\hat{M},\hat{s}) \not\models \phi$, and a set of constraints 
$C = \{ (\hat{s}_{c_{1}},\phi_{c_{1}}), 
(\hat{s}_{c_{2}},\phi_{c_{2}}), ..., (\hat{s}_{c_{n}},\phi_{c_{n}}) \}$ where 
$\hat{s}_{c_{i}} \in \hat{S}$ and $\phi_{c_{i}}$ is a CTL formula.
\ENSURE $\hat{M^{\prime}} = (\hat{S^{\prime}}, \hat{S_{0}^{\prime}}, 
R_{must}^{\prime}, R_{may}^{\prime}, \hat{L^{\prime}})$ and 
$(\hat{M^{\prime}},\hat{s}) \models \phi$ or FAILURE.  
\STATE $\hat{M}_{1} := \hat{M}$
\IF {$(\hat{M},\hat{s}) \not\models \phi_{1}$}
	\STATE $RET := AbstractRepair(\hat{M},\hat{s},\phi_{1},C)$
	\IF { $RET == FAILURE$ }
		\RETURN FAILURE
	\ELSE
		\STATE $\hat{M}_{1} := RET$
	\ENDIF
\ENDIF
\WHILE { there exists path $\pi_{must} := [\hat{s}_{1},...,\hat{s}_{m}]$ 
such that $\forall \hat{s}_{i} \in \pi_{must}$, 
$(\hat{M}_{1},\hat{s}_{i}) \models \phi_{1}$}
	\FORALL { $\hat{s}_{j} \in \hat{S}$ with     
		$(\hat{M}_{1},\hat{s}_{j}) \models \phi_{2}$}
		\STATE $\hat{r}_{j} := (\hat{s}_{m},\hat{s}_{j})$, 
				$\hat{M}^{\prime} := AddMust(\hat{M}_{1},\hat{r}_{j})$
		\IF { $\hat{M^{\prime}} \models C$ }
			\RETURN $\hat{M^{\prime}}$
		\ENDIF
	\ENDFOR
\ENDWHILE	
\STATE $\hat{M^{\prime}} := AddState(\hat{M}_{1},\hat{s}_{k})$
\STATE $RET := AbstractRepair(\hat{M^{\prime}},\hat{s}_{k},\phi_{2},C)$
\IF { $RET \neq FAILURE$ }
	\STATE $\hat{M^{\prime}} := RET$
	\STATE $\hat{r}_{n} := (\hat{s},\hat{s}_{k})$, 
			$\hat{M^{\prime}} := AddMust(\hat{M^{\prime}},\hat{r}_{n})$
	\IF {$\hat{s}_{k}$ is a dead-end state}
		\STATE $\hat{r}_{k} := (\hat{s}_{k},\hat{s}_{k})$, 
				$\hat{M^{\prime}} := AddMust(\hat{M^{\prime}},\hat{r}_{k})$
	\ENDIF
	\IF { $\hat{M^{\prime}} \models C$ }
		\RETURN $\hat{M^{\prime}}$
	\ENDIF
\ENDIF
\RETURN FAILURE
\end{algorithmic}
\end{algorithm}

$AbstractRepair_{EU}$ is called if for input CTL formula $\phi$ 
it holds that $\phi = E(\phi_{1}U\phi_{2})$.  
$AbstractRepair_{EU}$ is presented in Algorithm~\ref{alg:EU}.  
Firstly, if $\phi_{1}$ is not satisfied at $\hat{s}$, then 
$AbstractRepair$ is called for $\hat{s}$ and $\phi_{1}$ 
and a KMTS $\hat{M}_{1}$ is produced for which 
$(\hat{M}_{1},\hat{s}) \models \phi_{1}$.  Otherwise, 
$\hat{M}_{1}$ is same to $\hat{M}$.  
A new KMTS is produced as follows: 
for a must-path $\pi_{must} := [\hat{s}_{1},...,\hat{s}_{m}]$ 
such that $\forall \hat{s}_{i} \in \pi_{must}$, 
$(\hat{M}_{1},\hat{s}_{i}) \models \phi_{1}$
and for a $\hat{s}_{j} \in \hat{S}$ with 
$(\hat{M}_{1},\hat{s}_{j}) \models \phi_{2}$,
a must-transition is added from 
$\hat{s}_{m}$ to $\hat{s}_{j}$.  If all 
constraints are satisfied then the new KMTS is returned.  
Alternatively, a KMTS is produced by adding a new state 
$\hat{s}_{n}$, recursively calling $AbstractRepair$ for 
$\phi_{2}$ and $\hat{s}_{n}$ and adding a must-transition 
from $\hat{s}$ to $\hat{s}_{n}$.  In the case that no 
constraint is violated then this is a repaired KMTS and it 
is returned from the function.

\subsection{Properties of the Algorithm}
\label{subsec:alg_props}
\emph{AbstractRepair} is \emph{well-defined}~\cite{BGS07}, in the 
sense that the algorithm always proceeds and eventually returns a result $\hat{M}^{\prime}$ or FAILURE such that $(\hat{M}^\prime,\hat{s}) \models \phi$, for any input $\hat{M}$, $\phi$ and $C$, with $(\hat{M},\hat{s}) \not\models \phi$.  Moreover, the algorithm steps are well-ordered, as opposed to existing concrete model repair solutions~\cite{CR11,ZD08} that entail nondeterministic behavior.  

\subsubsection{Soundness}
\label{subsubsec:alg_soundness}

\begin{lem}
\label{theor:sound_help}
Let a KMTS $\hat{M}$, a CTL formula $\phi$ with $(\hat{M},\hat{s}) \not\models \phi$ for some $\hat{s}$ of $\hat{M}$, and a set $C = \{ (\hat{s}_{c_{1}},\phi_{c_{1}}), (\hat{s}_{c_{2}},\phi_{c_{2}}), ..., (\hat{s}_{c_{n}},\phi_{c_{n}}) \}$ with $(\hat{M},\hat{s}_{c_{i}}) \models \phi_{c_{i}}$ for all 
$(\hat{s}_{c_{n}},\phi_{c_{n}}) \in C$.  If $AbstractRepair(\hat{M},\hat{s},\phi,C)$ returns a KMTS $\hat{M}^{\prime}$, then $(\hat{M}^{\prime},\hat{s}) \models \phi$ and $(\hat{M}^{\prime},\hat{s}_{c_{i}}) \models \phi_{c_{i}}$ for all $(\hat{s}_{c_{i}},\phi_{c_{i}}) \in C$.
\end{lem}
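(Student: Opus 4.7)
The plan is to proceed by structural induction on the CTL formula $\phi$, exploiting the fact that \emph{AbstractRepair} is syntax-directed: the main routine dispatches on the top-level connective/operator of $\phi$, and each primitive function either returns FAILURE or a KMTS, with every non-FAILURE return statement explicitly guarded by a check of the form ``$\hat{M}^{\prime} \models C$''. Consequently, the second half of the conclusion (preservation of the constraints in $C$) is immediate at every return point, and the bulk of the argument concerns the first half: that the returned $\hat{M}^{\prime}$ satisfies $\phi$ at $\hat{s}$ under the 3-valued semantics of Def.~\ref{def:ctl3_semantics}.

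For the base cases, I would argue as follows. If $\phi = \mathit{false}$, the algorithm returns FAILURE, so there is nothing to prove. If $\phi \in Lit$, $AbstractRepair_{ATOMIC}$ applies \emph{ChangeLabel}$(\hat{M},\hat{s},\phi)$; by Def.~\ref{def:ChangeLabel} the new labeling $\hat{L}^{\prime}(\hat{s})$ contains $\phi$, so $(\hat{M}^{\prime},\hat{s}) \models \phi$ follows directly from the literal clause of Def.~\ref{def:ctl3_semantics}. The explicit guard on line~2 of Algorithm~\ref{alg:ATOMIC} handles the constraint part. The cases $\phi = \phi_{1} \vee \phi_{2}$ and $\phi = \phi_{1} \wedge \phi_{2}$ reduce to the inductive hypothesis: for disjunction, whichever disjunct is actually repaired gives $(\hat{M}^{\prime},\hat{s}) \models \phi_{i}$ and hence the disjunction; for conjunction, the algorithm is careful to re-invoke \emph{AbstractRepair} for $\phi_{2}$ on the model $\hat{M}_{1}$ already satisfying $\phi_{1}$, passing $\phi_{1}$ as an \emph{additional} constraint on $\hat{s}$ in $C_{1}$ (and symmetrically for the other branch). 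The inductive hypothesis applied to that nested call yields both $(\hat{M}_{1}^{\prime},\hat{s}) \models \phi_{2}$ and preservation of $\phi_{1}$ at $\hat{s}$, so the conjunction holds.

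For the temporal-operator cases I would treat each primitive function in turn, relying on the 3-valued semantics of Def.~\ref{def:ctl3_semantics} together with the effects of the basic repair operations established in Section~\ref{subsec:basic_ops}. For $EX\phi_{1}$, every non-FAILURE return in Algorithm~\ref{alg:EX} either has just installed a must-transition from $\hat{s}$ to some $\hat{s}_{i}$ already satisfying $\phi_{1}$ (direct application of the $EX$-true clause), or has, by inductive hypothesis, repaired an existing must-successor to satisfy $\phi_{1}$, or has added a fresh state $\hat{s}_{n}$, repaired it for $\phi_{1}$ via the inductive hypothesis, and added must-transitions into and out of it. The cases $AX$, $EF$, $EG$, $AF$, $EU$ and $AU$ are analogous: in each case I would identify, at every return statement in the corresponding algorithm, the must/may-path structure that witnesses (or refutes, for $AX$) the formula under Def.~\ref{def:ctl3_semantics}, and check that the sequence of repair operations applied along the way realizes that witness, invoking the inductive hypothesis on each recursive call. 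The $AG\phi_{1}$ case (Algorithm~\ref{alg:AG}) requires slightly more care: I would argue by induction on the number of may-reachable states violating $\phi_{1}$ that after the \textbf{for}-loop terminates, $\phi_{1}$ holds at every may-reachable state from $\hat{s}$, and hence $(\hat{M}^{\prime},\hat{s}) \models AG\phi_{1}$.

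The main obstacle, and the place where I expect to spend the most effort, is the $AX$ and $AG$ cases, where the primitive repair may \emph{delete} may-transitions or recursively modify successor states. Here the induction hypothesis is applied to intermediate KMTSs $\hat{M}^{\prime}$ that have already been altered by earlier iterations of the loop, so I need to verify that (i) the preconditions of the inductive hypothesis are met at each recursive call (in particular, that the augmented constraint set is still satisfied by the current intermediate model), and (ii) the modifications made in later iterations do not invalidate the formula already established by earlier iterations. The explicit constraint-check at every return statement in the algorithms is the mechanism that enforces (ii), so the argument reduces to a bookkeeping check that the right properties are threaded into $C$ at each recursive call; once that is set up carefully, the inductive step goes through uniformly.
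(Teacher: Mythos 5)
Your plan coincides with the paper's own proof: structural induction on $\phi$, with the base cases handled via \emph{ChangeLabel} and the 3-valued semantics, conjunction handled by threading $\phi_1$ (resp.\ $\phi_2$) into the constraint set of the nested call, and each temporal-operator case discharged by matching the repair operations at every return point against the corresponding semantic clause, invoking the induction hypothesis on recursive calls. The only slight overstatement is that not every non-FAILURE return is guarded by an explicit $\hat{M}^{\prime} \models C$ test---at several return points (e.g.\ lines 11 and 18 of Algorithm~\ref{alg:EX}) constraint preservation is inherited from the induction hypothesis rather than checked, which is exactly how the paper argues those cases as well.
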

\begin{proof}
We use structural induction on $\phi$.  For brevity, we write $\hat{M} \models C$ 
to denote that $(\hat{M},\hat{s}_{c_{i}}) \models \phi_{c_{i}}$, 
for all $(\hat{s}_{c_{i}},\phi_{c_{i}}) \in C$.  

\paragraph{Base Case: }
\begin{itemize}
\item if $\phi = \top$, the lemma is trivially true, because $(\hat{M},\hat{s}) \models \phi$    
\item if $\phi = \bot$, then $AbstractRepair(\hat{M},\hat{s},\phi,C)$ 
returns FAILURE at line 2 of Algorithm~\ref{alg:main} and the lemma is also trivially true. 
\item if $\phi = p \in AP$, $AbstractRepair_{ATOMIC}(\hat{M},\hat{s},p,C)$ is called at line 4 of Algorithm~\ref{alg:main} and an $\hat{M^{\prime}} = ChangeLabel(\hat{M},\hat{s},p)$ is computed 
at line 1 of Algorithm~\ref{alg:ATOMIC}.  Since $p \in \hat{L}^{\prime}(\hat{s})$ in $\hat{M^{\prime}}$,  
from 3-valued semantics of CTL over KMTSs we have $(\hat{M^{\prime}},\hat{s}) \models \phi$.  Algorithm~\ref{alg:ATOMIC} returns $\hat{M^{\prime}}$ at line 3, if and only if $\hat{M}^{\prime} \models C$ and the lemma is true.     
\end{itemize}

\paragraph{Induction Hypothesis:} For CTL formulae $\phi_{1}, \phi_{2}$, the lemma is true.  Thus, for $\phi_{1}$ (resp. $\phi_{2}$), if $AbstractRepair(\hat{M},\hat{s},\phi_{1},C)$ returns a KMTS $\hat{M}^{\prime}$, then $(\hat{M^{\prime}},\hat{s}) \models \phi_{1}$ and $\hat{M}^{\prime} \models C$.    

\paragraph{Inductive Step:}
\begin{itemize}
\item if $\phi = \phi_{1} \vee \phi_{2}$, then $AbstractRepair(\hat{M},\hat{s},\phi,C)$ calls $AbstractRepair_{OR}(\hat{M},\hat{s},\phi_{1} \vee \phi_{2},C)$ at line 8 of Algorithm~\ref{alg:main}.     From the induction hypothesis, if a KMTS $\hat{M}_{1}$ is returned by $AbstractRepair(\hat{M},\hat{s},\phi_{1},C)$ at line 1 of Algorithm~\ref{alg:OR} and a KMTS $\hat{M}_{2}$ is returned by $AbstractRepair(\hat{M},\hat{s},\phi_{2},C)$ respectively, then $(\hat{M}_{1},\hat{s}) \models \phi_{1}$, $\hat{M}_{1} \models C$ and $(\hat{M}_{2},\hat{s}) \models \phi_{1}$, $\hat{M}_{2} \models C$.  $AbstractRepair_{OR}(\hat{M},\hat{s},\phi_{1} \vee \phi_{2},C)$ returns at line 8 of Algorithm~\ref{alg:main} the KMTS $\hat{M^{\prime}}$, which can be either $\hat{M}_{1}$ or $\hat{M}_{2}$.  Therefore, $(\hat{M^{\prime}},\hat{s}) \models \phi_{1}$ or $(\hat{M^{\prime}},\hat{s}) \models \phi_{2}$ and $\hat{M^{\prime}} \models C$ in both cases.  From 3-valued semantics of CTL, $(\hat{M^{\prime}},\hat{s}) \models \phi_{1} \vee \phi_{2}$ and the lemma is true. 

\item if $\phi = \phi_{1} \wedge \phi_{2}$, then 
$AbstractRepair(\hat{M},\hat{s},\phi,C)$ calls 
$AbstractRepair_{AND}(\hat{M},\hat{s},\phi_{1} \wedge \phi_{2},C)$ at line 6 of 
Algorithm~\ref{alg:main}.  From the induction hypothesis, if at line 1 of Algorithm~\ref{alg:AND} $AbstractRepair(\hat{M},\hat{s},\phi_{1},C)$ returns a KMTS $\hat{M}_{1}$, then 
$(\hat{M}_{1},\hat{s}) \models \phi_{1}$ and $\hat{M}_{1} \models C$.  Consequently, $\hat{M}_{1} \models C_{1}$, where $C_{1} = C \cup {(\hat{s},\phi_{1})}$.  At line 7, if $AbstractRepair(\hat{M}_{1},\hat{s},\phi_{2},C_{1})$ returns a KMTS $\hat{M}_{1}^{\prime}$, then from the induction hypothesis $(\hat{M}_{1}^{\prime},\hat{s}) \models \phi_{2}$ and $\hat{M}_{1}^{\prime} \models C_{1}$.

In the same manner, if the calls at lines 2 and 12 of Algorithm~\ref{alg:AND} return the KMTSs $\hat{M}_{2}$ and $\hat{M}_{2}^{\prime}$, then from the induction hypothesis $(\hat{M}_{2},\hat{s}) \models \phi_{2}$, $\hat{M}_{2} \models C$ and $(\hat{M}_{2}^{\prime},\hat{s}) \models \phi_{1}$, 
$\hat{M}_{2}^{\prime} \models C_{2}$ with $C_{2} = C \cup {(\hat{s},\phi_{2})}$.

The KMTS $\hat{M^{\prime}}$ at line 6 of Algorithm~\ref{alg:main} can be either $\hat{M}_{1}^{\prime}$ or $\hat{M}_{2}^{\prime}$ and therefore, $(\hat{M^{\prime}},\hat{s}) \models \phi_{1}$, 
$(\hat{M^{\prime}},\hat{s}) \models \phi_{2}$ and $\hat{M^{\prime}} \models C$.  From 3-valued semantics of CTL it holds that $(\hat{M^{\prime}},\hat{s}) \models \phi_{1} \wedge \phi_{2}$ and the lemma is true.  

\item if $\phi = EX\phi_{1}$, $AbstractRepair(\hat{M},\hat{s},\phi,C)$ calls 
$AbstractRepair_{EX}(\hat{M},\hat{s},EX\phi_{1},C)$ at line 10 of 
Algorithm~\ref{alg:main}.    

If a KMTS $\hat{M}^{\prime}$ is returned at line 5 of Algorithm~\ref{alg:EX}, there is a
state $\hat{s}_{1}$ with $(\hat{M},\hat{s}_{1}) \models \phi_{1}$ such that 
$\hat{M}^{\prime} = AddMust(\hat{M},(\hat{s},\hat{s}_{1}))$ and $\hat{M^{\prime}} \models C$.
From 3-valued semantics of CTL, we conclude that $(\hat{M^{\prime}},\hat{s}) \models EX\phi_{1}$.

If a $\hat{M}^{\prime}$ is returned at line 11, there is $(\hat{s},\hat{s}_{1}) \in R_{must}$ such that $(\hat{M^{\prime}},\hat{s}_{1}) \models \phi_{1}$
and $\hat{M^{\prime}} \models C$ from the induction hypothesis, since $\hat{M^{\prime}}=AbstractRepair(\hat{M},\hat{s}_{1},\phi_{1},C)$.  From 3-valued semantics of CTL, we conclude that $(\hat{M^{\prime}},\hat{s}) \models EX\phi_{1}$.   

If a $\hat{M}^{\prime}$ is returned at line 18, a must transition $(\hat{s},\hat{s}_{n})$ to a new state has been added and $\hat{M}^{\prime}=AbstractRepair(AddMust(\hat{M},(\hat{s},\hat{s}_{n})),\hat{s}_{n},\phi_{1},C)$.  Then, from the induction hypothesis $(\hat{M^{\prime}},\hat{s}_{n}) \models \phi_{1}$, $\hat{M^{\prime}} \models C$ and from 3-valued semantics of CTL, we also conclude that $(\hat{M^{\prime}},\hat{s}) \models EX\phi_{1}$.

\item if $\phi = AG\phi_{1}$, $AbstractRepair(\hat{M},\hat{s},\phi,C)$ calls 
$AbstractRepair_{AG}(\hat{M},\hat{s},AG\phi_{1},C)$ at line 10 of 
Algorithm~\ref{alg:main}.  If $(\hat{M},\hat{s}) \not\models \phi_{1}$ and $AbstractRepair(\hat{M},\hat{s},\phi_{1},C)$ returns 
a KMTS $\hat{M}_{0}$ at line 2 of Algorithm~\ref{alg:AG}, then from the induction 
hypothesis $(\hat{M}_{0},\hat{s}) \models \phi_{1}$ and 
$\hat{M}_{0} \models C$.  Otherwise, $\hat{M}_{0} = \hat{M}$ and $(\hat{M}_{0},\hat{s}) \models \phi_{1}$ also hold true.    

If Algorithm~\ref{alg:AG} returns a $\hat{M}^{\prime}$ at line 16, then $\hat{M}^{\prime} \models C$ and $\hat{M}^{\prime}$ is the result of successive $AbstractRepair(\hat{M_{i}},\hat{s}_{k},\phi_{1},C)$ calls with $\hat{M_{i}}=AbstractRepair(\hat{M}_{i-1},\hat{s}_{k},\phi_{1},C)$ and $i=1, . . .$, for
all may-reachable states $\hat{s}_{k}$ from $\hat{s}$ 
such that $(\hat{M}_{0},\hat{s}_{k}) \not\models \phi_{1}$.     
From the induction hypothesis, $(\hat{M}^{\prime},\hat{s}_{k}) \models \phi_{1}$ and 
$\hat{M^{\prime}} \models C$ for all such $\hat{s}_{k}$ and from 3-valued semantics of CTL we conclude that $(\hat{M^{\prime}},\hat{s}) \models AG\phi_{1}$.   

\end{itemize}

\noindent We prove the lemma for all other cases in a similar manner.  
\end{proof}

\begin{thm}[Soundness]
\label{theor:sound}
Let a KMTS $\hat{M}$, a CTL formula $\phi$ with 
$(\hat{M},\hat{s}) \not\models \phi$, for some $\hat{s}$ of 
$\hat{M}$.  If $AbstractRepair(\hat{M},\hat{s},\phi,\emptyset)$ returns a 
KMTS $\hat{M}^{\prime}$, then $(\hat{M}^{\prime},\hat{s}) \models \phi$.  
\end{thm}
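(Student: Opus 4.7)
The plan is to derive this theorem as an immediate corollary of Lemma~\ref{theor:sound_help}, which was just established. That lemma already handles the general case of an arbitrary constraint set $C$ such that $(\hat{M},\hat{s}_{c_i}) \models \phi_{c_i}$ for every $(\hat{s}_{c_i},\phi_{c_i}) \in C$; the present theorem is the specialization at $C = \emptyset$. The bulk of the real work (the structural induction over CTL formulas, and the case analysis over the primitive repair routines $AbstractRepair_{ATOMIC}$, $AbstractRepair_{OR}$, $AbstractRepair_{AND}$, $AbstractRepair_{EX}$, $AbstractRepair_{AG}$, and so on) has already been discharged inside the lemma, so there is essentially nothing new to verify here.

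Concretely, I would proceed in two short steps. First, I would instantiate the lemma with $C := \emptyset$ and observe that the side condition \emph{``$(\hat{M},\hat{s}_{c_i}) \models \phi_{c_i}$ for all $(\hat{s}_{c_i},\phi_{c_i}) \in C$''} holds vacuously, since there are no pairs to check. Hence all hypotheses of Lemma~\ref{theor:sound_help} are satisfied by $(\hat{M}, \hat{s}, \phi, \emptyset)$. Second, I would invoke the lemma's conclusion: if the call $AbstractRepair(\hat{M},\hat{s},\phi,\emptyset)$ returns some $\hat{M}^{\prime}$ (as opposed to FAILURE), then $(\hat{M}^{\prime},\hat{s}) \models \phi$, and the auxiliary conjunct ``$\hat{M}^{\prime} \models C$'' is again vacuous. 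This is exactly the statement of the theorem.

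There is no real obstacle here; the only thing worth double-checking is that the main algorithm (Algorithm~\ref{alg:main}) actually dispatches correctly to the primitive routine corresponding to the top-level operator of $\phi$ when invoked with an empty constraint set, so that the recursive structure exploited by the lemma applies verbatim. This is immediate from inspecting the branches of Algorithm~\ref{alg:main}, which are determined purely by the syntactic form of $\phi$ and do not depend on $C$ being nonempty. Consequently, the proof of the theorem reduces to a single sentence citing Lemma~\ref{theor:sound_help}.
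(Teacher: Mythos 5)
Your proposal is correct, and it is logically tighter than what the paper actually does. The paper proves Theorem~\ref{theor:sound} by running the entire structural induction on $\phi$ a second time with $C=\emptyset$, invoking Lemma~\ref{theor:sound_help} only inside the inductive step for $\phi_1\wedge\phi_2$, where the recursive calls of $AbstractRepair_{AND}$ introduce the nonempty constraint sets $C_1=\{(\hat{s},\phi_1)\}$ and $C_2=\{(\hat{s},\phi_2)\}$. You instead observe that the theorem is literally the $C=\emptyset$ instance of the lemma: the side condition ``$(\hat{M},\hat{s}_{c_i})\models\phi_{c_i}$ for all $(\hat{s}_{c_i},\phi_{c_i})\in C$'' is vacuous, the lemma's statement places no lower bound on $|C|$, its proof nowhere uses nonemptiness of $C$, and the dispatch in Algorithm~\ref{alg:main} depends only on the syntactic form of $\phi$. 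Since the lemma is established before the theorem and independently of it, there is no circularity, and your one-step specialization is a valid proof. What each approach buys: yours eliminates a near-verbatim duplication of the induction (the paper's theorem proof largely repeats the lemma's case analysis with $\emptyset$ substituted for $C$), while the paper's version gives a self-contained walk-through of the $C=\emptyset$ execution path that some readers may find easier to follow; mathematically, nothing is lost by your shortcut.
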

\begin{proof}
We use structural induction on $\phi$ and Lemma~\ref{theor:sound_help}
in the inductive step for $\phi_{1} \wedge \phi_{2}$.  

\paragraph{Base Case:}
\begin{itemize}
\item if $\phi = \top$, Theorem~\ref{theor:sound} is trivially true, because 
$(\hat{M},\hat{s}) \models \phi$.     
\item if $\phi = \bot$, then $AbstractRepair(\hat{M},\hat{s},\bot,\emptyset)$ 
returns FAILURE at line 2 of Algorithm~\ref{alg:main} and the theorem is also 
trivially true.    
\item if $\phi = p \in AP$, $AbstractRepair_{ATOMIC}(\hat{M},\hat{s},p,\emptyset)$ 
is called at line 4 of Algorithm~\ref{alg:main} and an $\hat{M^{\prime}} = 
ChangeLabel(\hat{M},\hat{s},p)$ is computed at line 1.  Because of the fact that 
$p \in \hat{L}^{\prime}(\hat{s})$ in $\hat{M^{\prime}}$, from 3-valued semantics 
of CTL over KMTSs we have $(\hat{M^{\prime}},\hat{s}) \models \phi$.    
Algorithm~\ref{alg:ATOMIC} returns $\hat{M^{\prime}}$ at line 3 because $C$ is empty, 
and the theorem is true.    
\end{itemize}

\paragraph{Induction Hypothesis:}
For CTL formulae $\phi_{1}$, $\phi_{2}$, the theorem is true.  Thus, for $\phi_{1}$ 
(resp. $\phi_{2}$), if $AbstractRepair(\hat{M},\hat{s},\phi,\emptyset)$ returns a 
KMTS $\hat{M}^{\prime}$, then $(\hat{M^{\prime}},\hat{s}) \models \phi_{1}$.      

\paragraph{Inductive Step:}
\begin{itemize}
\item if $\phi = \phi_{1} \vee \phi_{2}$, then $AbstractRepair(\hat{M},\hat{s},\phi,\emptyset)$ 
calls $AbstractRepair_{OR}(\hat{M},\hat{s},\phi_{1} \vee \phi_{2},\emptyset)$ at 
line 8 of Algorithm~\ref{alg:main}.    

From the induction hypothesis, if $AbstractRepair(\hat{M},\hat{s},\phi_{1},\emptyset)$
returns a KMTS $\hat{M}_{1}$ at line 1 of Algorithm~\ref{alg:OR} 
and $AbstractRepair(\hat{M},\hat{s},\phi_{2},\emptyset)$ returns a KMTS $\hat{M}_{2}$ 
respectively, then $(\hat{M}_{1},\hat{s}) \models \phi_{1}$ and 
$(\hat{M}_{2},\hat{s}) \models \phi_{1}$.  
$AbstractRepair_{OR}(\hat{M},\hat{s},\phi_{1} \vee \phi_{2},\emptyset)$ 
returns at line 8 of Algorithm~\ref{alg:main} the KMTS $\hat{M^{\prime}}$, 
which can be either $\hat{M}_{1}$ or $\hat{M}_{2}$.  Therefore, 
$(\hat{M^{\prime}},\hat{s}) \models \phi_{1}$ or 
$(\hat{M^{\prime}},\hat{s}) \models \phi_{2}$.  
From 3-valued semantics of CTL, 
$(\hat{M^{\prime}},\hat{s}) \models \phi_{1} \vee \phi_{2}$ and the theorem is true. 

\item if $\phi = \phi_{1} \wedge \phi_{2}$, then 
$AbstractRepair(\hat{M},\hat{s},\phi,\emptyset)$ calls 
$AbstractRepair_{AND}(\hat{M},\hat{s},\phi_{1} \wedge \phi_{2},\emptyset)$ 
at line 6 of Algorithm~\ref{alg:main}.  From the induction 
hypothesis, if at line 1 of Algorithm~\ref{alg:AND} 
$AbstractRepair(\hat{M},\hat{s},\phi_{1},\emptyset)$ returns a 
KMTS $\hat{M}_{1}$, then $(\hat{M}_{1},\hat{s}) \models \phi_{1}$.  
Consequently, $\hat{M}_{1} \models C_{1}$, where 
$C_{1} = \emptyset \cup {(\hat{s},\phi_{1})}$.  
At line 7, if $AbstractRepair(\hat{M}_{1},\hat{s},\phi_{2},C_{1})$ 
returns a KMTS $\hat{M}_{1}^{\prime}$, then from Lemma~\ref{theor:sound_help} 
$(\hat{M}_{1}^{\prime},\hat{s}) \models \phi_{2}$ and $\hat{M}_{1}^{\prime} \models C_{1}$.

Likewise, if the calls at lines 2 and 12 of Algorithm~\ref{alg:AND} 
return the KMTSs $\hat{M}_{2}$ and $\hat{M}_{2}^{\prime}$, then from the induction hypothesis 
$(\hat{M}_{2},\hat{s}) \models \phi_{2}$ and from Lemma~\ref{theor:sound_help} 
$(\hat{M}_{2}^{\prime},\hat{s}) \models \phi_{1}$, $\hat{M}_{2}^{\prime} \models C_{2}$ 
with $C_{2} = \emptyset \cup {(\hat{s},\phi_{2})}$.

The KMTS $\hat{M^{\prime}}$ at line 7 of Algorithm~\ref{alg:main} 
can be either $\hat{M}_{1}^{\prime}$ or $\hat{M}_{2}^{\prime}$ and 
therefore, $(\hat{M^{\prime}},\hat{s}) \models \phi_{1}$ and  
$(\hat{M^{\prime}},\hat{s}) \models \phi_{2}$.  From 3-valued semantics 
of CTL it holds that $(\hat{M^{\prime}},\hat{s}) \models \phi_{1} \wedge \phi_{2}$ 
and the lemma is true. 

\item if $\phi = EX\phi_{1}$, $AbstractRepair(\hat{M},\hat{s},\phi,\emptyset)$ calls 
$AbstractRepair_{EX}(\hat{M},\hat{s},EX\phi_{1},\emptyset)$ at line 10 of 
Algorithm~\ref{alg:main}.  

If a KMTS $\hat{M}^{\prime}$ is returned at line 5 of Algorithm~\ref{alg:EX}, 
there is a state $\hat{s}_{1}$ with $(\hat{M},\hat{s}_{1}) \models \phi_{1}$ 
such that $\hat{M}^{\prime} = AddMust(\hat{M},(\hat{s},\hat{s}_{1}))$.  
From 3-valued semantics of CTL, we conclude that 
$(\hat{M^{\prime}},\hat{s}) \models EX\phi_{1}$.

If a $\hat{M}^{\prime}$ is returned at line 11, there is 
$(\hat{s},\hat{s}_{1}) \in R_{must}$ such that 
$(\hat{M^{\prime}},\hat{s}_{1}) \models \phi_{1}$ from the induction hypothesis, 
since $\hat{M^{\prime}} = AbstractRepair(\hat{M},\hat{s}_{1},\phi_{1},\emptyset)$.  
From 3-valued semantics of CTL, we conclude that 
$(\hat{M^{\prime}},\hat{s}) \models EX\phi_{1}$.   

If a $\hat{M}^{\prime}$ is returned at line 18, a must transition 
$(\hat{s},\hat{s}_{n})$ to a new state has been added and $\hat{M}^{\prime} = AbstractRepair(AddMust(\hat{M},(\hat{s},\hat{s}_{n})),\hat{s}_{n},\phi_{1},\emptyset)$.  
Then, from the induction hypothesis $(\hat{M^{\prime}},\hat{s}_{n}) \models \phi_{1}$
 and from 3-valued semantics of CTL, we also conclude that 
$(\hat{M^{\prime}},\hat{s}) \models EX\phi_{1}$.

\item if $\phi = AG\phi_{1}$, $AbstractRepair(\hat{M},\hat{s},\phi,\emptyset)$ 
calls $AbstractRepair_{AG}(\hat{M},\hat{s},AG\phi_{1},\emptyset)$ at line 10 of 
Algorithm~\ref{alg:main}.  If $(\hat{M},\hat{s}) \not\models \phi_{1}$ and 
$AbstractRepair(\hat{M},\hat{s},\phi_{1},\emptyset)$ returns a KMTS 
$\hat{M}_{0}$ at line 2 of Algorithm~\ref{alg:AG}, then from the induction 
hypothesis $(\hat{M}_{0},\hat{s}) \models \phi_{1}$.  Otherwise, 
$\hat{M}_{0} = \hat{M}$ and $(\hat{M}_{0},\hat{s}) \models \phi_{1}$, 
$\hat{M}_{0} \models C$ also hold true.    

If Algorithm~\ref{alg:AG} returns a $\hat{M}^{\prime}$ at line 16, 
this KMTS is the result of successive calls of $AbstractRepair(\hat{M_{i}},\hat{s}_{k},\phi_{1},\emptyset)$ with $\hat{M_{i}}=AbstractRepair(\hat{M}_{i-1},\hat{s}_{k},\phi_{1},\emptyset)$ and $i=1, . . .$, for
all may-reachable states $\hat{s}_{k}$ from $\hat{s}$ 
such that $(\hat{M}_{0},\hat{s}_{k}) \not\models \phi_{1}$.     
From the induction hypothesis, $(\hat{M}^{\prime},\hat{s}_{k}) \models \phi_{1}$ 
for all such $\hat{s}_{k}$ and from 3-valued semantics of CTL we conclude that 
$(\hat{M^{\prime}},\hat{s}) \models AG\phi_{1}$.   
\end{itemize}
\noindent We prove the theorem for all other cases in the same way. 
\end{proof}

\noindent Theorem~\ref{theor:sound} shows that \emph{AbstractRepair} 
is \emph{sound} in the sense that if it returns a KMTS 
$\hat{M}^{\prime}$, then $\hat{M}^{\prime}$ satisfies 
property $\phi$.  In this case, from the definitions of the basic 
repair operations, it follows that one or more KSs can be 
obtained for which $\phi$ holds true.

\subsubsection{Semi-completeness}
\label{subsubsec:alg_completeness}

\begin{defi}[\emph{mr}-CTL]
Given a set $AP$ of atomic propositions, we define the syntax of 
a CTL fragment inductively via a Backus Naur Form:  
\begin{align*}
	\phi ::== &\bot \, | \, \top \, | \, p \, | \, (\neg \phi) \, | \, (\phi \vee \phi) \, | \, AXp \, | \, EXp \, | \, AFp \\ 
	& | \, EFp \, | \, AGp \, | \, EGp \, | \, A[p \, U \, p] \, | \, E[p \, U \, p]
\end{align*}
where $p$ ranges over $AP$.  
\end{defi}

\emph{mr}-CTL includes most of the CTL formulae apart from those with nested 
path quantifiers or conjunction.

\begin{thm}[Completeness]
\label{theor:complete}
Given a KMTS $\hat{M}$, an \textit{mr}-CTL formula $\phi$ with 
$(\hat{M},\hat{s}) \not\models \phi$, for some $\hat{s}$ 
of $\hat{M}$, if there exists a KMTS 
$\hat{M}^{\prime\prime}$ over the same set $AP$ of atomic propositions with 
$(\hat{M}^{\prime\prime},\hat{s}) \models \phi$, 
$AbstractRepair(\hat{M},\hat{s},\phi,\emptyset)$ returns a 
KMTS $\hat{M}^{\prime}$ such that 
$(\hat{M}^{\prime},\hat{s}) \models \phi$.    
\end{thm}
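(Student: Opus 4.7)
The plan is to proceed by structural induction on the \textit{mr}-CTL formula $\phi$, following the same skeleton as the proof of Theorem~\ref{theor:sound}, but arguing in the opposite direction: whenever a witness $\hat{M}^{\prime\prime}$ with $(\hat{M}^{\prime\prime},\hat{s}) \models \phi$ exists, one of the branches of the corresponding primitive function is guaranteed to produce a repaired KMTS rather than \textsc{Failure}. The key structural fact I will exploit is that the \textit{mr}-CTL syntax forbids both conjunction and nesting of path quantifiers, so (i) the constraint set $C$ passed down the recursion stays empty (since $C$ is only extended by $AbstractRepair_{AND}$), and (ii) under every temporal operator the inner subformula is a literal $p \in \mathit{Lit}$. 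Consequently every recursive call reduces either to the atomic case, handled by a single $ChangeLabel$ that never fails when $C=\emptyset$, or to a 3-valued check against a literal, which is a purely local inspection of $\hat{L}$.

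For the base cases, $\phi = \bot$ is vacuous since no witness $\hat{M}^{\prime\prime}$ exists; $\phi = \top$ contradicts the assumption $(\hat{M},\hat{s}) \not\models \phi$; and $\phi = p$ is handled by $AbstractRepair_{ATOMIC}$, which returns $ChangeLabel(\hat{M},\hat{s},p)$ and succeeds because the empty constraint set is trivially satisfied. For $\phi = \phi_1 \vee \phi_2$, the witness $\hat{M}^{\prime\prime}$ must satisfy at least one disjunct by 3-valued semantics; the induction hypothesis supplies a repair for that disjunct, and $AbstractRepair_{OR}$ returns a non-\textsc{Failure} value whenever at least one of its two recursive calls does.

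For each temporal operator I will show that one of the strategies explicitly enumerated by the corresponding primitive function is always applicable when $\phi$ is satisfiable as a whole. For $EXp$, $EFp$, and $E[p\,U\,p^\prime]$, the algorithm's final fallback adds a fresh state, relabels it via the recursive call on the atomic subformula (IH plus the base case), and connects it by a must-transition; since $C=\emptyset$ this always succeeds. For $AXp$ and $AGp$, the algorithm first attempts to repair each offending may-successor (or may-reachable state) via a recursive call on the literal $p$; by the atomic base case each such call succeeds, and the combined result satisfies the universal path condition by 3-valued semantics. The $EGp$ case is handled by the fallback that adds a state, labels it with $p$, attaches a self must-loop, and a must-transition from $\hat{s}$, yielding an infinite must-path all of whose states satisfy $p$. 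The $AFp$ and $A[p\,U\,p^\prime]$ cases reduce, along each maximal may-path, to the same atomic repair step.

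The main obstacle will be the universal path cases ($AG$, $AF$, $AU$): I need to argue that the \texttt{while}/\texttt{for} loops over maximal may-paths terminate and that the successive repairs performed on distinct paths do not undo one another. Termination will follow from the finiteness of $\hat{S}$ together with the observation that $ChangeLabel$ monotonically fixes a literal at a state, so once $p$ has been installed at $\hat{s}_k$ no later iteration will need to revisit $\hat{s}_k$. Non-interference will follow from the fact that, under \textit{mr}-CTL, each recursive call only invokes $AbstractRepair_{ATOMIC}$ on a single state, leaving labels at all other states untouched and leaving $R_{\mathit{must}}, R_{\mathit{may}}$ unchanged; hence the invariant ``every already-processed state satisfies $p$'' is preserved across iterations, and the final KMTS satisfies $\phi$ by the respective clause of Definition~\ref{def:ctl3_semantics}.
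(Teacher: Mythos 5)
Your proposal is correct and follows essentially the same route as the paper's own proof: structural induction on $\phi$ with a case analysis over the branches of each primitive repair function, using the facts that the constraint set $C$ remains empty in the absence of conjunction and that $AbstractRepair_{ATOMIC}$ then never fails. Your additional observations --- that every temporal-operator subformula in \emph{mr}-CTL is a literal (so recursive calls bottom out in the atomic case) and that the universal-path loops terminate without undoing earlier repairs --- are refinements the paper leaves implicit rather than a different argument.
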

\begin{proof}
We prove the theorem using structural induction on $\phi$.

\paragraph{Base Case:}
\begin{itemize}
\item if $\phi = \top$, Theorem~\ref{theor:complete} is trivially true, 
because for any KMTS $\hat{M}$ it holds that 
$(\hat{M},\hat{s}) \models \phi$.    
\item if $\phi = \bot$, then the theorem is trivially true, because 
there does not exist a KMTS $\hat{M}^{\prime\prime}$ such that 
$(\hat{M}^{\prime\prime},\hat{s}) \models \phi$.   
\item if $\phi = p \in AP$, there is a KMTS $\hat{M}^{\prime\prime}$
with $p \in \hat{L}^{\prime\prime}(\hat{s})$ and therefore $(\hat{M}^{\prime\prime},\hat{s}) \models \phi$.  
Algorithm~\ref{alg:main} calls $AbstractRepair_{ATOMIC}(\hat{M},\hat{s},p,\emptyset)$ 
at line 4 and an $\hat{M^{\prime}} = ChangeLabel(\hat{M},\hat{s},p)$ 
is computed at line 1 of Algorithm~\ref{alg:ATOMIC}.  Since $C$ is empty, $\hat{M^{\prime}}$ is returned at line 3 and $(\hat{M}^{\prime},\hat{s}) \models \phi$ from 3-valued semantics of CTL.  Therefore, the theorem is true.  
\end{itemize}

\paragraph{Induction Hypothesis:}
For \emph{mr}-CTL formulae $\phi_{1}$, $\phi_{2}$, the theorem is true.  
Thus, for $\phi_{1}$ (resp. $\phi_{2}$), 
if there is a KMTS $\hat{M}^{\prime\prime}$ over the same set $AP$ of atomic propositions with 
$(\hat{M}^{\prime\prime},\hat{s}) \models \phi_{1}$, 
$AbstractRepair(\hat{M},\hat{s},\phi_{1},\emptyset)$ returns a KMTS 
$\hat{M}^{\prime}$ such that $(\hat{M}^{\prime},\hat{s}) \models \phi_{1}$.  

\paragraph{Inductive Step:}
\begin{itemize}
\item if $\phi = \phi_{1} \vee \phi_{2}$, from the 3-valued semantics of CTL a 
KMTS that satisfies $\phi$ exists if and only if there is a KMTS satisfying any of the $\phi_{1}$, $\phi_{2}$.
From the induction hypothesis,
if there is a KMTS $\hat{M}_{1}^{\prime\prime}$ with $(\hat{M}_{1}^{\prime\prime},\hat{s}) \models \phi_{1}$, $AbstractRepair(\hat{M},\hat{s},\phi_{1},\emptyset)$
at line 1 of Algorithm~\ref{alg:OR} returns a KMTS $\hat{M}_{1}^{\prime}$ such that $(\hat{M}_{1}^{\prime},\hat{s}) \models \phi_{1}$. Respectively, $AbstractRepair(\hat{M},\hat{s},\phi_{2},\emptyset)$ at line 2 of Algorithm~\ref{alg:OR} can return a KMTS $\hat{M}_{2}^{\prime}$ with $(\hat{M}_{2}^{\prime},\hat{s}) \models \phi_{2}$. In any case, if either $\hat{M}_{1}^{\prime}$ or $\hat{M}_{2}^{\prime}$ exists, for the KMTS $\hat{M}^{\prime}$ that is returned at line 13 of Algorithm~\ref{alg:OR} we have $(\hat{M}^{\prime},\hat{s}) \models \phi_{1}$ or $(\hat{M}^{\prime},\hat{s}) \models \phi_{2}$ and therefore $(\hat{M}^{\prime},\hat{s}) \models \phi$.   

\item if $\phi = EX\phi_{1}$, from the 3-valued semantics of CTL a KMTS that satisfies $\phi$ at $\hat{s}$ exists if and only if there is KMTS satisfying
$\phi_{1}$ at some direct must-successor of $\hat{s}$.

If in the KMTS $\hat{M}$ there is a state $\hat{s}_{1}$ with $(\hat{M},\hat{s}_{1}) \models \phi_{1}$, then the new KMTS $\hat{M}^{\prime} = AddMust(\hat{M},(\hat{s},\hat{s}_{1}))$ is computed at line 3 of Algorithm~\ref{alg:EX}.  Since $C$ is empty $\hat{M}^{\prime}$ is returned at line 5 and $(\hat{M^{\prime}},\hat{s}) \models EX\phi_{1}$.    

Otherwise, if there is a direct must-successor $\hat{s}_{i}$ of $\hat{s}$, $AbstractRepair(\hat{M},\hat{s}_{i},\phi_{1},\emptyset)$ is called at line 8.  From the induction hypothesis, if there is a KMTS $\hat{M}^{\prime\prime}$ with $(\hat{M}^{\prime\prime},\hat{s}_{i}) \models \phi_{1}$, then a KMTS $\hat{M}^{\prime}$ is computed such that 
$(\hat{M}^{\prime},\hat{s}_{i}) \models \phi_{1}$ and therefore the theorem is true.     

If there are no must-successors of $\hat{s}$, a new state $\hat{s}_{n}$ is added 
and subsequently connected with a must-transition from $\hat{s}$.  $AbstractRepair$ is then
called for $\phi_{1}$ and $\hat{s}_{n}$ as previously and the theorem holds also true.  
 
\item if $\phi = AG\phi_{1}$, from the 3-valued semantics of CTL a KMTS that satisfies $\phi$ at $\hat{s}$ exists, if and only if there is KMTS satisfying $\phi_{1}$ at $\hat{s}$ and at each may-reachable state from $\hat{s}$.  

$AbstractRepair(\hat{M},\hat{s},\phi_{1},\emptyset)$ is called at line 2 of Algorithm~\ref{alg:AG} and from the induction hypothesis if there is KMTS $\hat{M}_{0}^{\prime}$ with $(\hat{M}_{0}^{\prime},\hat{s}) \models \phi_{1}$, then a KMTS $\hat{M}_{0}$ is computed such that 
$(\hat{M}_{0},\hat{s}) \models \phi_{1}$.  $AbstractRepair$ is subsequently called for $\phi_{1}$ and for all may-reachable $\hat{s}_{k}$ from $\hat{s}$ with $(\hat{M}_{0},\hat{s}_{k}) \not\models \phi_{1}$ one-by-one.  
From the induction hypothesis, if there is KMTS $\hat{M}_{i}^{\prime}$ that satisfies $\phi_{1}$ at each such $\hat{s}_{k}$, then all $\hat{M}_{i}=AbstractRepair(\hat{M}_{i-1},\hat{s}_{k},\phi_{1},\emptyset), \, i=1, . . .,$ satisfy $\phi_{1}$ at $\hat{s}_{k}$ and the theorem holds true.
\end{itemize} 
\noindent We prove the theorem for all other cases in the same way.    
\end{proof}

\noindent Theorem~\ref{theor:complete} shows that \emph{AbstractRepair} 
is \emph{semi-complete} with respect to full CTL: if there is a KMTS that satisfies a \emph{mr}-CTL formula $\phi$, then the algorithm finds one such KMTS.  

\subsection{Complexity Issues}
\label{subsec:alg_complex}

AMR's complexity analysis is restricted to \emph{mr}-CTL, for which the algorithm has been proved 
complete.  For these formulas, we show that AMR is upper bounded by a polynomial 
expression in the state space size and the number of may-transitions of the abstract KMTS, 
and depends also on the length of the \emph{mr}-CTL formula.

For CTL formulas with nested 
path quantifiers and/or conjunction, AMR is looking for a repaired 
model satisfying all conjunctives (constraints), which increases the worst-case execution time exponentially
to the state space size of the abstract KMTS.  In general, as shown 
in~\cite{BK12}, the complexity
of all model repair algorithms gets worse when raising the level of their completeness, but AMR
has the advantage of working exclusively over an abstract model with a reduced state space compared to its concrete counterpart.    
 
Our complexity analysis for \emph{mr}-CTL is based on the following results.  
For an abstract KMTS $\hat{M} = (\hat{S}, \hat{S_{0}},$ $R_{must}, R_{may}, 
\hat{L})$ and a \emph{mr}-CTL property $\phi$, (i) 3-valued CTL model checking 
is performed in $O(|\phi| \cdot (|\hat{S}|+|R_{may}|))$~\cite{GHJ01}, (ii) Depth 
First Search (DFS) of states reachable from $\hat{s} \in \hat{S}$ is performed 
in $O(|\hat{S}|+|R_{may}|)$ in the worst case or in $O(|\hat{S}|+|R_{must}|)$ 
when only must-transitions are accessed, (iii) finding a maximal path from 
$\hat{s} \in \hat{S}$ using Breadth First Search (BFS) is performed in 
$O(|\hat{S}|+|R_{may}|)$ for may-paths and in $O(|\hat{S}|+|R_{must}|)$ for 
must-paths.     

We analyze the computational cost for each of the AMR's primitive functions:  
\begin{itemize}
\item if $\phi = p \in AP$, $AbstractRepair_{ATOMIC}$ 
is called and the operation $ChangeLabel$ is applied, which is in $O(1)$.    

\item if $\phi = EX\phi_{1}$, then $AbstractRepair_{EX}$ is called and
the applied operations with the highest cost are: (1) finding a state satisfying
$\phi_{1}$, which depends on the cost of 3-valued CTL model checking and
is in $O(|\hat{S}| \cdot |\phi_{1}| \cdot (|\hat{S}|+|R_{may}|))$, (2) finding a 
must-reachable state, which is in $O(|\hat{S}| + |R_{must}|)$.  These operations 
are called at most once and the overall complexity for this primitive functions is 
therefore in $O(|\hat{S}| \cdot |\phi_{1}| \cdot (|\hat{S}|+|R_{may}|))$.
  
\item if $\phi = AX\phi_{1}$, then $AbstractRepair_{AX}$ is called and the most 
costly operations are: (1) finding a may-reachable state, which is in 
$O(|\hat{S}| + |R_{may}|)$, and (2) checking if a state satisfies $\phi_{1}$, which is in 
$O(|\phi_{1}| \cdot (|\hat{S}|+|R_{may}|))$.  These operations are called at most 
$|\hat{S}|$ times and the overall bound class is  
$O(|\hat{S}| \cdot |\phi_{1}| \cdot (|\hat{S}|+|R_{may}|))$.  

\item if $\phi = EF\phi_{1}$, $AbstractRepair_{EF}$ is called and the 
operations with the highest cost are: (1) finding a must-reachable 
state, which is in $O(|\hat{S}| + |R_{must}|)$, (2) checking if a state satisfies $\phi_{1}$ with 
its bound class being $O(|\phi_{1}| \cdot (|\hat{S}|+|R_{may}|))$ and (3) finding a state 
that satisfies $\phi_{1}$, which is in $O(|\hat{S}| \cdot |\phi_{1}| \cdot (|\hat{S}|+|R_{may}|))$.  
These three operations are called at most $|\hat{S}|$ times and consequently, the 
overall bound class is $O(|\hat{S}|^{2} \cdot |\phi_{1}| \cdot (|\hat{S}|+|R_{may}|))$.  

\item if $\phi = AF\phi_{1}$, $AbstractRepair_{AF}$ is called and the most costly operation 
is: finding a maximal may-path violating 
$\phi_{1}$ in all states, which is in $O(|\hat{S}| \cdot |\phi_{1}| \cdot (|\hat{S}|+|R_{may}|)$.  
This operation is called at most $|\hat{S}|$ times and therefore, the overall bound class is
$O(|\hat{S}|^2 \cdot |\phi_{1}| \cdot (|\hat{S}|+|R_{may}|))$.  
\end{itemize}
In the same way, it is easy to show that: (i) if $\phi = EG\phi_{1}$, then $AbstractRepair_{EG}$ is 
in $O(|\hat{S}| \cdot |\phi_{1}| \cdot (|\hat{S}|+|R_{must}|)$, 
(ii) if $\phi = AG\phi_{1}$, then $AbstractRepair_{AG}$ is in 
$O(|\hat{S}| \cdot |\phi_{1}| \cdot (|\hat{S}|+|R_{may}|))$, (iii) if $\phi = E(\phi_{1}U\phi_{2})$, 
then the bound class of $AbstractRepair_{EU}$ is $O(|\hat{S}| \cdot |\phi_{1}| \cdot (|\hat{S}|+|R_{must}|)$,
(iv) if $\phi = A(\phi_{1}U\phi_{2})$ then $AbstractRepair_{AU}$ is in 
$O(|\hat{S}|^2 \cdot |\phi_{1}| \cdot (|\hat{S}|+|R_{may}|))$.  

For a \emph{mr}-CTL property $\phi$, the main body of the algorithm is called at most $|\phi|$ times 
and the overall bound class of the AMR algorithm is $O(|\hat{S}|^2 \cdot |\phi|^{2} \cdot (|\hat{S}|+|R_{may}|))$.

\subsection{Application}
\label{sec:app}
We present the application of \emph{AbstractRepair} 
on the ADO system from Section~\ref{sec:mc}.    
After the first two steps of our repair process,  
\emph{AbstractRepair} 
is called for the KMTS $\alpha_{\mathit{Refined}}(M)$ that 
is shown in Fig.~\ref{fig:ado_refined}, 
the state $\hat{s}_{01}$ and the CTL property $\phi = AGEXq$.  

\emph{AbstractRepair} calls $AbstractRepair_{AG}$ with 
arguments $\alpha_{\mathit{Refined}}(M)$, $\hat{s}_{01}$ and $AGEXq$.  
The $AbstractRepair_{AG}$ algorithm at line 10 triggers a 
recursive call of \emph{AbstractRepair} with the same 
arguments.  Eventually,  $AbstractRepair_{EX}$ is called 
with arguments $\alpha_{\mathit{Refined}}(M)$, $\hat{s}_{01}$ and 
$EXq$, that in turn calls \emph{AddMust} 
at line 3, thus adding a must-transition from 
$\hat{s}_{01}$ to $\hat{s}_{1}$.  
\emph{AbstractRepair} terminates by returning a KMTS 
$\hat{M^{\prime}}$ that satisfies $\phi = AGEXq$.  
The repaired KS $M^{\prime}$ is the single element  
in the set of KSs derived by the concretization 
of $\hat{M^{\prime}}$ (cf. Def.~\ref{def:add_must_ks}).  The execution steps of 
\emph{AbstractRepair} and the obtained 
repaired KMTS and KS are shown in 
Fig.~\ref{fig:ado_repair_process} and 
Fig.~\ref{fig:ado_repaired} respectively. 


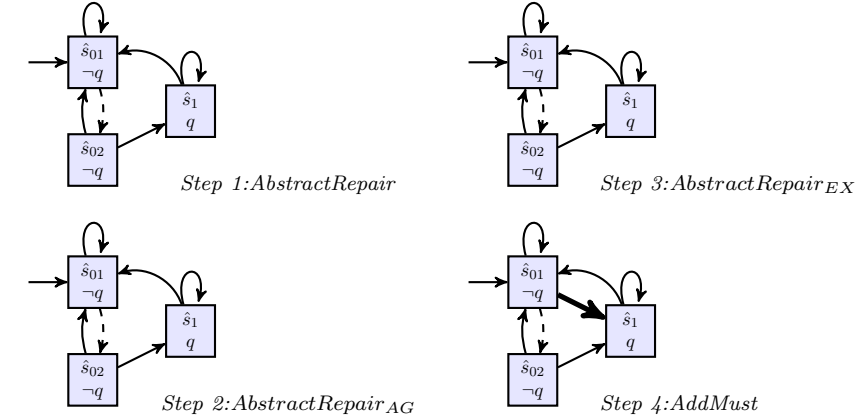
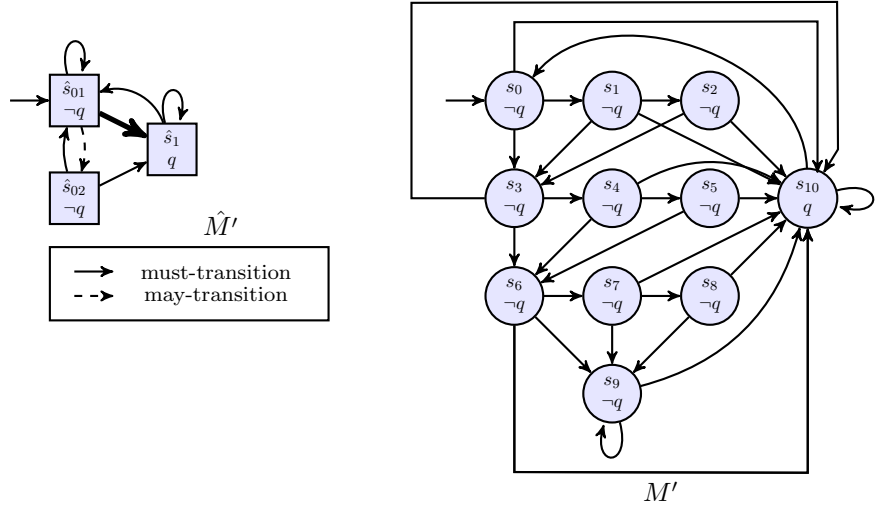
\begin{figure}[htb]
\centering
\subfloat[Application of \emph{AbstractRepair}.]
{\label{fig:ado_repair_process}\begin{tikzpicture}[->,>=stealth',auto,node 
distance=2cm, scale=0.65, thick, main node/.style={scale=0.65, minimum size = 
1cm, align=center,circle,fill=blue!10,draw}, abs node/.style={scale=0.65, 
minimum size = 1cm, align=center,rectangle,fill=blue!10,draw}]

\begin{scope}
  \node[abs node] (12) {$\hat{s}_{01}$ \\ $\neg q$};
  \node[abs node] (13) [below of=12] {$\hat{s}_{02}$ \\ $\neg q$};
  \node[abs node, yshift=1cm] (14) [right of=13] {$\hat{s}_1$ \\ $q$};

     \path
     (12) edge [loop above] (12)
     (14) edge [loop above] (14)
     (12) edge [bend left=15, dashed] (13)
     (14) edge [out=135, in=45,bend right=45] (12)
     (13) edge (14)
     (13) edge [bend left=15] (12);

     \draw[->] ([xshift=-.8cm]12.west) --  (12.west);

     \node[font = \small] at (4, -2.5) {\scriptsize{\textit{Step 2:}$AbstractRepair_{AG}$}};
  \end{scope}

\begin{scope}[xshift=9cm]
  \node[abs node] (12) {$\hat{s}_{01}$ \\ $\neg q$};
  \node[abs node] (13) [below of=12] {$\hat{s}_{02}$ \\ $\neg q$};
  \node[abs node, yshift=1cm] (14) [right of=13] {$\hat{s}_1$ \\ $q$};

     \path
     (12) edge [loop above] (12)
     (14) edge [loop above] (14)
     (12) edge [bend left=15, dashed] (13)
     (12) edge [line width = .7mm] (14)
     (14) edge [out=135, in=45,bend right=45] (12)
     (13) edge (14)
     (13) edge [bend left=15] (12);

     \draw[->] ([xshift=-.8cm]12.west) --  (12.west);

     \node[font = \small] at (3, -2.5) {\scriptsize{\textit{Step 4:AddMust}}};

  \end{scope}
  
\begin{scope}[yshift=4.5cm]
  \node[abs node] (12) {$\hat{s}_{01}$ \\ $\neg q$};
  \node[abs node] (13) [below of=12] {$\hat{s}_{02}$ \\ $\neg q$};
  \node[abs node, yshift=1cm] (14) [right of=13] {$\hat{s}_1$ \\ $q$};

     \path
     (12) edge [loop above] (12)
     (14) edge [loop above] (14)
     (12) edge [bend left=15, dashed] (13)
     (14) edge [out=135, in=45,bend right=45] (12)
     (13) edge (14)
     (13) edge [bend left=15] (12);

     \draw[->] ([xshift=-.8cm]12.west) --  (12.west);

     \node[font = \small] at (4, -2.5) {\scriptsize{\textit{Step 1:AbstractRepair}}};

  \end{scope}
  
      \begin{scope}[xshift=9cm,yshift=4.5cm]
  \node[abs node] (12) {$\hat{s}_{01}$ \\ $\neg q$};
  \node[abs node] (13) [below of=12] {$\hat{s}_{02}$ \\ $\neg q$};
  \node[abs node, yshift=1cm] (14) [right of=13] {$\hat{s}_1$ \\ $q$};

     \path
     (12) edge [loop above] (12)
     (14) edge [loop above] (14)
     (12) edge [bend left=15, dashed] (13)
     (14) edge [out=135, in=45,bend right=45] (12)
     (13) edge (14)
     (13) edge [bend left=15] (12);

     \draw[->] ([xshift=-.8cm]12.west) --  (12.west);

     \node[font = \small] at (4, -2.5) {\scriptsize{\textit{Step 3:}$AbstractRepair_{EX}$}};

  \end{scope}

\end{tikzpicture}}          
\hfill
\subfloat[The repaired KMTS and KS.]
{\label{fig:ado_repaired}\begin{tikzpicture}[->,>=stealth',auto,node 
distance=2cm, scale=0.65, thick, main node/.style={scale=0.65, minimum size = 
1cm, align=center,circle,fill=blue!10,draw}, abs node/.style={scale=0.65, 
minimum size = 1cm, align=center,rectangle,fill=blue!10,draw}]

\begin{scope}
  \node[abs node] (12) {$\hat{s}_{01}$ \\ $\neg q$};
  \node[abs node] (13) [below of=12] {$\hat{s}_{02}$ \\ $\neg q$};
  \node[abs node, yshift=1cm] (14) [right of=13] {$\hat{s}_1$ \\ $q$};

     \path
     (12) edge [loop above] (12)
     (14) edge [loop above] (14)
     (12) edge [bend left=15, dashed] (13)
     (12) edge [line width = .7mm] (14)
     (14) edge [out=135, in=45,bend right=45] (12)
     (13) edge (14)
     (13) edge [bend left=15] (12);

     \draw[->] ([xshift=-.8cm]12.west) --  (12.west);

     \node[font = \small] at (3, -2.5) {$\hat{M^{\prime}}$};

     \draw (-.5, -3) rectangle (5.2, -4.5);
     \node[font = \scriptsize] at (2.9, -3.5) {must-transition};
     \node[font = \scriptsize] at (2.9, -4) {may-transition};
     \draw[->] (0, -3.5) -- (.8, -3.5);
     \draw[->, dashed] (0, -4) -- (.8, -4);
  \end{scope}

\begin{scope}[xshift=9cm]
  \node[main node] (1) {$s_0$ \\ $\neg q$};
  \node[main node] (2) [below of=1] {$s_3$ \\ $\neg q$};
  \node[main node] (3) [below of=2] {$s_6$ \\ $\neg q$};
  \node[main node] (4) [right of=1] {$s_1$ \\ $\neg q$};
  \node[main node] (5) [below of=4] {$s_4$ \\ $\neg q$};
  \node[main node] (6) [below of=5] {$s_7$ \\ $\neg q$};
  \node[main node] (7) [right of=4] {$s_2$ \\ $\neg q$};
  \node[main node] (8) [below of=7] {$s_5$ \\ $\neg q$};
  \node[main node] (9) [below of=8] {$s_8$ \\ $\neg q$};
  \node[main node] (10) [right of=8] {$s_{10}$ \\ $q$};
  \node[main node] (11) [below of=6] {$s_9$ \\ $\neg q$};

   \path
     (1) edge (4)
     (4) edge (7)
     (2) edge (5)
     (5) edge (8)
     (3) edge (6)
     (6) edge (9)
     (7) edge (10)
     (8) edge (10)
     (9) edge (10)
     (3) edge (11)
     (6) edge (11)
     (9) edge (11)

     (1) edge (2)
     (4) edge (2)
     (7) edge (2)

     (2) edge (3)
     (5) edge (3)
     (8) edge (3)

     (10) edge [bend right=70] node {} (1)

     (4) edge (10)
     (6) edge (10)
     (5) edge [bend left=30] node {} (10)
     (11) edge [bend right=30] node {} (10)

     (10) edge [loop right] (10)
     (11) edge [loop below] (11);
     \draw[->] ([xshift=-.8cm]1.west) --  (1.west);
     
     \draw[->] (1) -- ([yshift=1cm]1.north) -- 
([xshift=2mm, yshift=3cm]10.north) -- ([xshift=2mm, yshift=-1mm]10.north);

     \draw[->] (2) -- ([xshift=-1.5cm]2.west) -- 
([xshift=-1.5cm, yshift=4cm]2.west) -- 
([xshift=6mm, yshift=3.4cm]10.north) -- ([xshift=0mm, yshift=10mm]10.east) -- 
(10);

     \draw[->] (3) -- ([yshift=-3cm]3.south) -- 
([yshift=-5cm]10.south) -- (10.south);

     \draw[->] (3) -- ([yshift=-3cm]3.south) -- 
([yshift=-5cm]10.south) -- (10.south);
     \node[font = \small] at (3, -8) {$M'$};
\end{scope}

\end{tikzpicture}}
\caption{Repair of ADO system using abstraction.}
\label{fig:ado_repair}
\end{figure}

\noindent Although the ADO is not a system with a large state space, 
it is shown that the repair process is accelerated by the proposed 
use of abstraction.    
If on the other hand model repair was applied directly to the 
concrete model, new transitions would have have been inserted from
all the states labeled with $\neg open$ to the one labeled with \emph{open}.  
In the ADO, we have seven such states, but in a system with a large state 
space this number can be significantly higher.  The repair of such a model 
without the use of abstraction would be impractical.  

\section{Experimental Results: The Andrew File System 1 (AFS1) Protocol}
\label{sec:exp}

In this section, we provide experimental results for the relative performance of a
prototype implementation of our AMR algorithm in comparison with a prototype implementation
of a concrete model repair solution~\cite{ZD08}. The results serve as a proof of concept 
for the use of abstraction in model repair and demonstrate the practical utility of our approach.  

As a model we use a KS for the Andrew File System Protocol 1 (AFS1)~\cite{WV95},  
which has been repaired for a specific property in~\cite{ZD08}. AFS1 is a client-server cache coherence protocol for a distributed file system.  
Four values are used for the client's belief about a file (nofile, valid, invalid, 
suspect) and three values for the server's belief (valid, invalid, none).  

A property which is not satisfied in the AFS1 protocol 
in the form of CTL is:
\[
AG((Server.belief = valid) \rightarrow (Client.belief = valid))
\]
 
\begin{figure}[H]
\centering
\subfloat[The KS after the final refinement step.]
{\label{fig:afs1_refined2_ks}\begin{tikzpicture}[->,>=stealth',auto,node 
distance=2.5cm, scale=0.45, thick, main 
node white/.style={font=\footnotesize, scale=0.75, 
align=center,ellipse,fill=white,draw=black, minimum width=1.8cm}, main 
node red/.style={font=\footnotesize, scale=0.75, 
align=center,ellipse,fill=red!20,draw=black, minimum width=1.8cm},
main node blue/.style={font=\footnotesize, scale=0.75, 
align=center,ellipse,fill=blue!10,draw=black, minimum width=1.8cm},
main node orange/.style={font=\footnotesize, scale=0.75, 
align=center,ellipse,fill=orange!20,draw=black, minimum width=1.8cm},
main node gray/.style={font=\footnotesize, scale=0.75, 
align=center,ellipse,fill=gray!20,draw=black, minimum width=1.8cm}
]

\node[main node white] (11) at (0, 0) {$s_{11}$ \\ $\neg p \wedge \neg q$};
\node[main node white] (12) [right of=11] {$s_{12}$ \\ $\neg p \wedge \neg q$};
\node[main node blue] (17) [below of=11] {$s_{17}$ \\ $\neg p \wedge \neg q$};
\node[main node blue] (18) [right of=17] {$s_{18}$ \\ $\neg p \wedge \neg q$};
\node[main node red] (19) [below of=17] {$s_{19}$ \\ $p \wedge \neg q$};
\node[main node red] (20) [right of=19] {$s_{20}$ \\ $p \wedge \neg q$};

\node[main node orange] (3) [right of=12] {$s_{3}$ \\ $\neg p \wedge \neg q$};
\node[main node orange] (4) [right of=3] {$s_{4}$ \\ $\neg p \wedge \neg q$};
\node[main node white] (1) [below of=3] {$s_{1}$ \\ $\neg p \wedge \neg q$};
\node[main node white] (2) [right of=1] {$s_{2}$ \\ $\neg p \wedge \neg q$};
\node[main node orange] (22) [below of=1] {$s_{22}$ \\ $\neg p \wedge \neg q$};
\node[main node blue] (21) [right of=22] {$s_{21}$ \\ $\neg p \wedge \neg q$};

\node[main node red] (7) [right of=4] {$s_{7}$ \\ $p \wedge \neg q$};
\node[main node red] (8) [right of=7] {$s_{8}$ \\ $p \wedge \neg q$};
\node[main node gray] (9) [below of=7] {$s_{9}$ \\ $p \wedge q$};
\node[main node gray] (10) [right of=9] {$s_{10}$ \\ $p \wedge q$};

\node[main node orange, xshift=-1cm] (6) [above of=4] {$s_{6}$ \\ $\neg p 
\wedge \neg q$};
\node[main node white] (13) [above of=6] {$s_{13}$ \\ $\neg p \wedge \neg q$};

\node[main node blue, xshift=-1cm] (5) [above of=8] {$s_{5}$ \\ $\neg p 
\wedge \neg q$};
\node[main node white] (14) [above of=5] {$s_{14}$ \\ $\neg p \wedge \neg q$};

\node[main node red] (23) [below of=22] {$s_{23}$ \\ $p \wedge \neg q$};
\node[main node red] (24) [right of=23] {$s_{24}$ \\ $p \wedge \neg q$};

\node[main node gray, yshift=-2.5cm] (25) [below of=19] {$s_{25}$ \\ $p \wedge 
q$};
\node[main node gray] (26) [right of=25] {$s_{26}$ \\ $p \wedge q$};

\node[main node gray, yshift=-6cm] (15) [below of=9] {$s_{15}$ \\ $p \wedge 
q$};
\node[main node gray] (16) [right of=15] {$s_{16}$ \\ $p \wedge q$};

      \path
      (15) edge [loop below] (15)
      (16) edge [loop below] (16)

      (13) edge (5)
      (13) edge (6)
      (14) edge (6)
      (14) edge (5)
      (6) edge (3)
      (6) edge (4)    
      (5) edge (7)
      (5) edge (8)    

      (11) edge (17)
      (11) edge (18)    
      (12) edge (17)
      (12) edge (18)    

      (17) edge (19)
      (17) edge (20)    
      (18) edge (19)
      (18) edge (20)    

      (3) edge (1)
      (3) edge (2)    
      (4) edge (1)
      (4) edge (2)    

      (1) edge (22)
      (1) edge (21)    
      (2) edge (22)
      (2) edge (21)    

      (7) edge (9)
      (7) edge (10)    
      (8) edge (9)
      (8) edge (10)    

      (9) edge (15)
      (9) edge (16)    
      (10) edge (15)
      (10) edge (16)    

      (21) edge (23)
      (21) edge (24)    
      (22) edge (25)
      (22) edge (26)   
      (20) edge (25)
      (20) edge (26)    
      (19) edge (25)
      (19) edge (26)    
      (23) edge (25)
      (23) edge (26)    
      (24) edge (25)
      (24) edge (26)    
      
      (26) edge (15)

      (15) edge [bend right] (16)
      (16) edge [bend right] (15);
      
      \draw[->] (25.south) -- ([yshift=-5mm]25.south) -- (15);
      \draw[->] (26.east) -- ([xshift=-15mm, yshift=5mm]16.north) -- (16);
      \draw[->] ([xshift=-2mm]25.south) -- ([xshift=-2mm, 
yshift=-30mm]25.south) -- ([xshift=-20mm, yshift=-15mm]16.south) -- (16);
      
      \draw[->] ([yshift=8mm]11.north) -- (11);
      \draw[->] ([yshift=8mm]12.north) -- (12);
      \draw[->] ([xshift=-8mm]13.west) -- (13);
      \draw[->] ([xshift=-8mm]14.west) -- (14);
      
%

\end{tikzpicture}} 
               
\subfloat[The refined KMTS.]
{\label{fig:afs1_refined2_kmts}\begin{tikzpicture}[->,>=stealth',auto,node 
distance=2cm, scale=0.5, thick, abs node/.style={font=\footnotesize, 
rectangle,fill=blue!10, text centered, text width=1.2cm, draw=black, minimum 
height = 1cm}]

  \node[abs node, fill=red!20] (12) at (0, 0) {$p \wedge \neg q$};
  \node[abs node, fill=white] (13) at (0, 6) {$\neg p \wedge \neg q$};
  \node[abs node] (14) at (3, 3) {$\neg p \wedge \neg q$};
  \node[abs node, fill=gray!20] (15) at (6, 0) {$p \wedge q$};
  \node[abs node, fill=orange!20] (16) at (6, 6) {$\neg p \wedge \neg q$};

  \draw[->] ([xshift=-8mm]13.west) -- (13);

     \path
     (12) edge [loop below] (12)
     (15) edge [loop below] (15)
     (16) edge [loop above, dashed] (16)
    
     (12) edge (15)
     (14) edge (12)
     (13) edge (14)
     (16) edge [bend right, dashed] (13)    
     (13) edge [bend right, dashed] (16)   
     (16) edge [dashed] (15);    

     \draw[->] ([xshift=-.8cm]15.west) --  (15.west);

\end{tikzpicture}}
\caption{The KS and the KMTS of the AFS1 protocol after the 2nd refinement 
step.}
\label{fig:afs1_ks_kmts}
\end{figure}
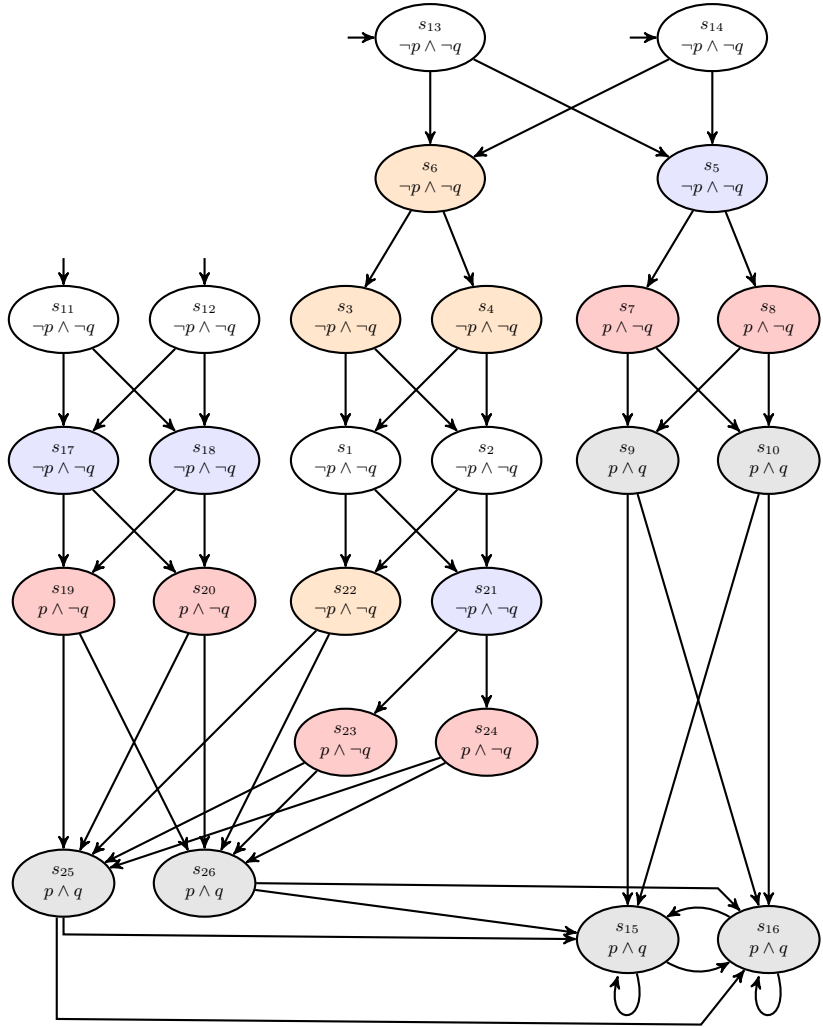
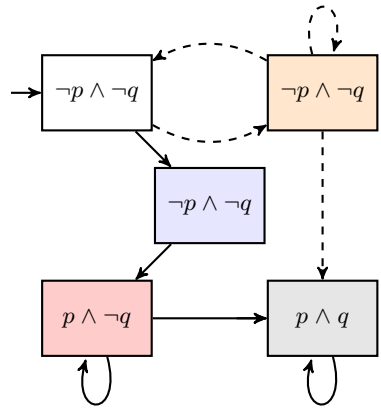

\begin{figure}[H]
\centering
\subfloat[The repaired KMTS.]
{\label{fig:afs1_repaired_kmts}\begin{tikzpicture}[->,>=stealth',auto,node 
distance=2cm, scale=0.5, thick, abs node/.style={font=\footnotesize, 
rectangle,fill=blue!10, text centered, text width=1.2cm, draw=black, minimum 
height = 1cm}]

  \node[abs node, fill=red!20] (12) at (0, 0) {$\neg p \wedge \neg q$};
  \node[abs node, fill=white] (13) at (0, 6) {$\neg p \wedge \neg q$};
  \node[abs node] (14) at (3, 3) {$\neg p \wedge \neg q$};
  \node[abs node, fill=gray!20] (15) at (6, 0) {$p \wedge q$};
  \node[abs node, fill=orange!20] (16) at (6, 6) {$\neg p \wedge \neg q$};

  \draw[->] ([xshift=-8mm]13.west) -- (13);

     \path
     (12) edge [loop below] (12)
     (15) edge [loop below] (15)
     (16) edge [loop above, dashed] (16)
    
     (12) edge (15)
     (14) edge (12)
     (13) edge (14)
     (16) edge [bend right, dashed] (13)    
     (13) edge [bend right, dashed] (16)   
     (16) edge [dashed] (15);    

     \draw[->] ([xshift=-.8cm]15.west) --  (15.west);

\end{tikzpicture}}                

\subfloat[The repaired KS.]
{\label{fig:afs1_repaired_ks}\begin{tikzpicture}[->,>=stealth',auto,node 
distance=2.5cm, scale=0.45, thick, main 
node white/.style={font=\footnotesize, scale=0.75, 
align=center,ellipse,fill=white,draw=black, minimum width=1.8cm}, main 
node red/.style={font=\footnotesize, scale=0.75, 
align=center,ellipse,fill=red!20,draw=black, minimum width=1.8cm},
main node blue/.style={font=\footnotesize, scale=0.75, 
align=center,ellipse,fill=blue!10,draw=black, minimum width=1.8cm},
main node orange/.style={font=\footnotesize, scale=0.75, 
align=center,ellipse,fill=orange!20,draw=black, minimum width=1.8cm},
main node gray/.style={font=\footnotesize, scale=0.75, 
align=center,ellipse,fill=gray!20,draw=black, minimum width=1.8cm}
]

\node[main node white] (11) at (0, 0) {$s_{11}$ \\ $\neg p \wedge \neg q$};
\node[main node white] (12) [right of=11] {$s_{12}$ \\ $\neg p \wedge \neg q$};
\node[main node blue] (17) [below of=11] {$s_{17}$ \\ $\neg p \wedge \neg q$};
\node[main node blue] (18) [right of=17] {$s_{18}$ \\ $\neg p \wedge \neg q$};
\node[main node red] (19) [below of=17] {$s_{19}$ \\ $\neg p \wedge \neg q$};
\node[main node red] (20) [right of=19] {$s_{20}$ \\ $\neg p \wedge \neg q$};

\node[main node orange] (3) [right of=12] {$s_{3}$ \\ $\neg p \wedge \neg q$};
\node[main node orange] (4) [right of=3] {$s_{4}$ \\ $\neg p \wedge \neg q$};
\node[main node white] (1) [below of=3] {$s_{1}$ \\ $\neg p \wedge \neg q$};
\node[main node white] (2) [right of=1] {$s_{2}$ \\ $\neg p \wedge \neg q$};
\node[main node orange] (22) [below of=1] {$s_{22}$ \\ $\neg p \wedge \neg q$};
\node[main node blue] (21) [right of=22] {$s_{21}$ \\ $\neg p \wedge \neg q$};

\node[main node red] (7) [right of=4] {$s_{7}$ \\ $\neg p \wedge \neg q$};
\node[main node red] (8) [right of=7] {$s_{8}$ \\ $\neg p \wedge \neg q$};
\node[main node gray] (9) [below of=7] {$s_{9}$ \\ $p \wedge q$};
\node[main node gray] (10) [right of=9] {$s_{10}$ \\ $p \wedge q$};

\node[main node orange, xshift=-1cm] (6) [above of=4] {$s_{6}$ \\ $\neg p 
\wedge \neg q$};
\node[main node white] (13) [above of=6] {$s_{13}$ \\ $\neg p \wedge \neg q$};

\node[main node blue, xshift=-1cm] (5) [above of=8] {$s_{5}$ \\ $\neg p 
\wedge \neg q$};
\node[main node white] (14) [above of=5] {$s_{14}$ \\ $\neg p \wedge \neg q$};

\node[main node red] (23) [below of=22] {$s_{23}$ \\ $\neg p \wedge \neg q$};
\node[main node red] (24) [right of=23] {$s_{24}$ \\ $\neg p \wedge \neg q$};

\node[main node gray, yshift=-2.5cm] (25) [below of=19] {$s_{25}$ \\ $p \wedge 
q$};
\node[main node gray] (26) [right of=25] {$s_{26}$ \\ $p \wedge q$};

\node[main node gray, yshift=-6cm] (15) [below of=9] {$s_{15}$ \\ $p \wedge 
q$};
\node[main node gray] (16) [right of=15] {$s_{16}$ \\ $p \wedge q$};

      \path
      (15) edge [loop below] (15)
      (16) edge [loop below] (16)

      (13) edge (5)
      (13) edge (6)
      (14) edge (6)
      (14) edge (5)
      (6) edge (3)
      (6) edge (4)    
      (5) edge (7)
      (5) edge (8)    

      (11) edge (17)
      (11) edge (18)    
      (12) edge (17)
      (12) edge (18)    

      (17) edge (19)
      (17) edge (20)    
      (18) edge (19)
      (18) edge (20)    

      (3) edge (1)
      (3) edge (2)    
      (4) edge (1)
      (4) edge (2)    

      (1) edge (22)
      (1) edge (21)    
      (2) edge (22)
      (2) edge (21)    

      (7) edge (9)
      (7) edge (10)    
      (8) edge (9)
      (8) edge (10)    

      (9) edge (15)
      (9) edge (16)    
      (10) edge (15)
      (10) edge (16)    

      (21) edge (23)
      (21) edge (24)    
      (22) edge (25)
      (22) edge (26)   
      (20) edge (25)
      (20) edge (26)    
      (19) edge (25)
      (19) edge (26)    
      (23) edge (25)
      (23) edge (26)    
      (24) edge (25)
      (24) edge (26)    
      
      (26) edge (15)

      (15) edge [bend right] (16)
      (16) edge [bend right] (15);
      
      \draw[->] (25.south) -- ([yshift=-5mm]25.south) -- (15);
      \draw[->] (26.east) -- ([xshift=-15mm, yshift=5mm]16.north) -- (16);
      \draw[->] ([xshift=-2mm]25.south) -- ([xshift=-2mm, 
yshift=-30mm]25.south) -- ([xshift=-20mm, yshift=-15mm]16.south) -- (16);
      
      \draw[->] ([yshift=8mm]11.north) -- (11);
      \draw[->] ([yshift=8mm]12.north) -- (12);
      \draw[->] ([xshift=-8mm]13.west) -- (13);
      \draw[->] ([xshift=-8mm]14.west) -- (14);
      
%

\end{tikzpicture}}
\caption{The repaired KMTS and KS of the AFS1 protocol.}
\label{fig:afs1_repaired_ks_kmts}
\end{figure}
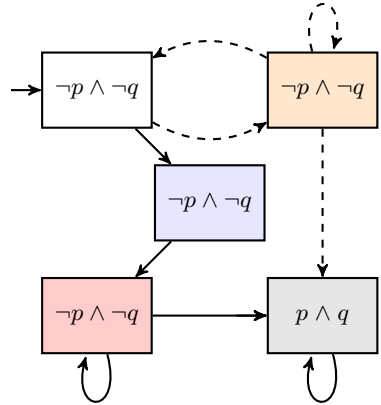
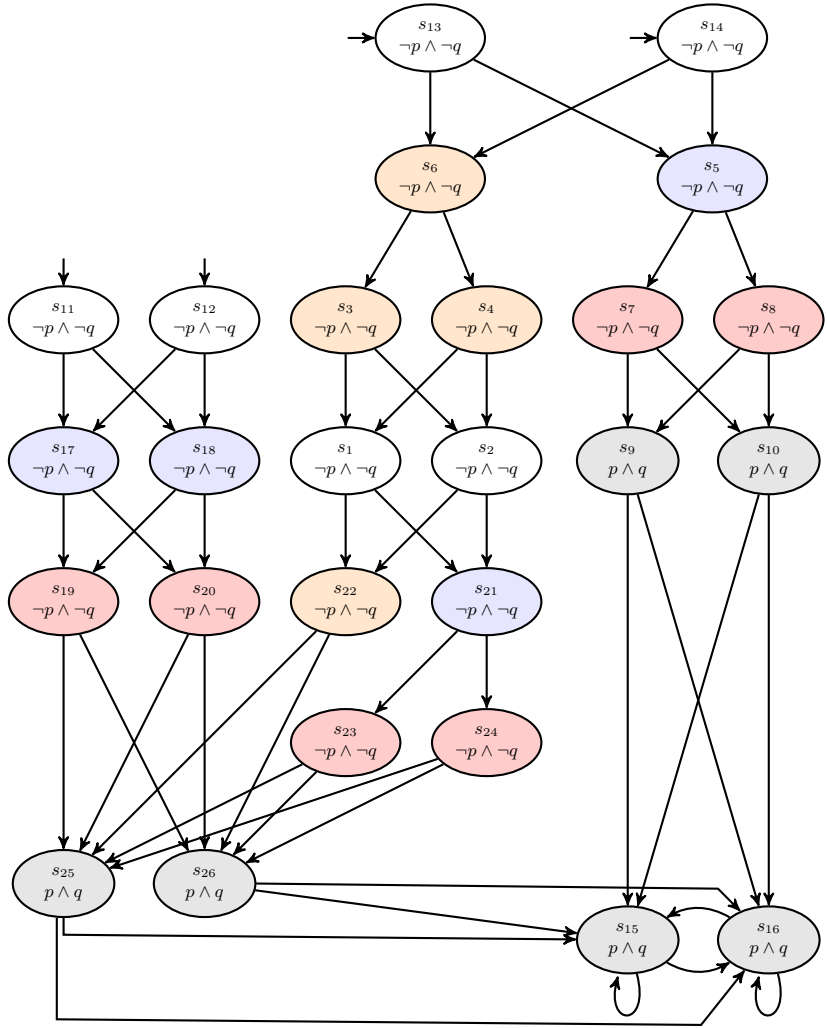

We define the atomic proposition $p$ as $Server.belief = valid$ and $q$ as 
$Client.belief = valid$, and the property is thus written as $AG(p \rightarrow 
q)$.  
The KS for the AFS1 protocol is 
depicted in Fig.~\ref{fig:afs1_refined2_ks}. State colors show how they are 
abstracted
in the KMTS of Fig.~\ref{fig:afs1_refined2_kmts}, which is derived after the 
2nd 
refinement 
step of our AMR framework (Fig.~\ref{fig:abs_repair}). The shown KMTS and the 
CTL property of interest are given as input in our prototype AMR 
implementation. 

To obtain larger models of AFS1 we have extended the original model by adding one more possible value for three model variables. Three new models are obtained with gradually increasing size of state space. 

The results of our experiments are presented in Table~\ref{table:exp_results}.  
The time needed for the AMR prototype to repair the original AFS1 model and its extensions 
is from 124 to even 836 times less than the needed time for concrete 
model repair.   
The repaired KMTS and KS for the original AFS1 model are shown in Fig.~\ref{fig:afs1_repaired_ks_kmts}.  

An interesting observation from the application of the AMR algorithm on the repair
of the AFS1 KS is that the distance $d$ (cf. Def.~\ref{def:metric_space}) of the  
repaired KS from the original KS is less than the corresponding
distance obtained from the concrete model repair algorithm in~\cite{ZD08}. This result 
demonstrates in practice the effect of the minimality of changes ordering, on which the AMR algorithm is based on (cf. Fig.~\ref{fig:order_basic_ops}). 

\begin{table}[t]
\begin{center}
    \begin{tabular}{ | p{4cm} | p{2cm} | p{2cm} | p{2cm} | p{2cm} |}
    \hline
    Models & Concrete States & Concr. Repair (Time in sec.) & AMR (Time in sec.) & Improvement (\%) \\ \hline
    $AFS1$ & $26$ & $17.4$ & $0.14$ & $124$ \\ \hline
    $AFS1 (Extension 1)$ & $30$ & $24.9$ & $0.14$ & $178$ \\ \hline
    $AFS1 (Extension 2)$ & $34$ & $35.0$ & $0.14$ & $250$ \\ \hline
    $AFS1 (Extension 3)$ & $38$ & $117.0$ & $0.14$ & $836$ \\ \hline
    \end{tabular}
\end{center}
\caption{Experimental results of AMR with respect to concrete repair}
\label{table:exp_results}
\end{table}

\section{Related Work}
\label{sec:relwork}
To the best of our knowledge this is the first work that suggests 
the use of abstraction as a means to counter the state space 
explosion in search of a Model Repair solution.  However, abstraction
and in particular abstract interpretation has been used in 
\emph{program synthesis}~\cite{VYY2010},  a different but related problem 
to the Model Repair. Program synthesis refers to the automatic generation of a program based on a given specification. Another related problem where abstraction has been
used is that of \emph{trigger querying}~\cite{AK14}: 
given a system $M$ and a formula $\phi$, find the set of scenarios that trigger 
$\phi$ in $M$.  

The related work in the area of \emph{program repair} do not consider KSs as 
the program model.  In this context, abstraction has been previously
used in the repair of data structures~\cite{ZMK13}.  The problem of repairing a 
Boolean program has been formulated in~\cite{SJB05,JGB07,GBC06,EJ12} as the 
finding of a winning strategy for a game between two players. The only exception 
is the work reported in~\cite{SDE08}.

Another line of research on program repair treats the repair
as a search problem and applies innovative evolutionary algorithms~\cite{A11}, 
\emph{behavioral programming} techniques~\cite{HKMW12} or other informal 
heuristics~\cite{WC08,AAG11,WPFSBMZ10}.   

Focusing exclusively on the area of Model Repair without the use of abstraction, 
it is worth to mention the following approaches. The first work on Model Repair 
with respect to CTL formulas was presented in~\cite{A95}.  
The authors used only the removal of transitions and showed that the problem is NP-complete.  
Another interesting early attempt to introduce the Model Repair problem for CTL properties 
is the work in~\cite{BEGL99}. The authors are based
on the AI techniques of abductive reasoning and theory revision and propose a repair 
algorithm with relatively high computational cost. A formal algorithm for Model Repair 
in the context of KSs and CTL is presented in~\cite{ZD08}.  The authors admit 
that their repair process strongly depends on the model's size and they do not attempt 
to provide a solution for handling conjunctive CTL formulas.  

In~\cite{CR09}, the authors try to render model repair applicable to large KSs by using ``table systems'', a concise representation of KSs that is implemented in the NuSMV model checker. A 
limitation of their approach is that table systems cannot represent all possible KSs.  
In~\cite{ZKZ10}, tree-like local model updates are introduced with  
the aim of making the repair process applicable to large-scale domains. However,
the proposed approach is only applicable to the universal fragment of the CTL.

A number of works attempt to ensure completeness for increasingly larger fragments of
the CTL by introducing ways of handling the constraints associated with conjunctive formulas. 
In~\cite{KPYZ10}, the authors propose the use of constraint automata for ACTL formulas, while
in~\cite{CR11} the authors introduce the use of protected models for an extension of the CTL.  
Both of the two methods are not directly applicable to formulas of the full CTL.

The Model Repair problem has been also addressed in many other contexts. In~\cite{E12}, 
the author uses a distributed algorithm and the processing power of computing clusters to fight the time and space complexity of the repair process. In~\cite{MLB11}, an extension of the Model Repair problem has been studied for Labeled Transition Systems. In~\cite{BGKRS11}, we have provided a solution for the Model Repair problem in probabilistic systems.  Another recent effort for repairing 
discrete-time probabilistic models has been proposed in~\cite{PAJTK15}.  In~\cite{BBG11}, model repair is applied to the \emph{fault recovery} of component-based models. Finally, a slightly different
but also related problem is that of Model Revision, which has been studied for UNITY properties 
in~\cite{BEK09,BK08-OPODIS} and for CTL in~\cite{GW10}. Other methods in the 
area of fault-tolerance include the work in~\cite{gr09}, which uses discrete 
controller synthesis and~\cite{fb15}, which employs SMT solving. Another 
interesting work in this direction is in~\cite{df09}, where the authors present 
a repair algorithm for fault-tolerance in a fully connected topology, with 
respect to a temporal specification.

\section{Conclusions}
\label{sec:concl}

In this paper, we have shown how abstraction can be used to cope with the state
explosion problem in Model Repair.  Our model-repair framework is based on
Kripke Structures, a 3-valued semantics for CTL, and Kripke Modal
Transition Systems, and features an abstract-model-repair algorithm for
KMTSs.  We have proved that our AMR algorithm is sound for the full CTL 
and complete for a subset of CTL.  We have also proved that our AMR 
algorithm is upper bounded by a polynomial expression in the 
size of the abstract model for a major fragment of CTL. To demonstrate 
its practical utility, we applied our framework to an Automatic Door 
Opener system and to the Andrew File System 1 protocol.  
  
As future work, we plan to apply our method to case studies with larger
state spaces, and investigate how abstract model repair can be used in
different contexts and domains.  A model repair application of high interest
is in the design of fault-tolerant systems. In~\cite{bka12}, the authors 
present an approach for the repair of a distributed algorithm such that the 
repaired one features fault-tolerance.  The input to this model repair problem 
includes a set of uncontrollable transitions such as the faults in the 
system.  The model repair algorithm used works on concrete models and it can 
therefore solve the problem only for a limited number of processes.  With this 
respect, we believe that this application could be benefited from the use of 
abstraction in our AMR framework. 

At the level of extending our AMR framework, we aim to search for ``better" abstract models, in order to either restrict failures due to refinement or ensure completeness for a larger fragment of the CTL.  We will also investigate different notions of minimality in the changes introduced by model repair and the applicability of abstraction-based model repair to probabilistic, hybrid and other
types of models.       

\section{Acknowledgment} 

This work was partially sponsored by Canada NSERC Discovery Grant 418396-2012 and NSERC Strategic Grants 430575-2012 and 463324-2014.
The research was also co-financed by the European Union (European Social Fund ESF) and Greek national funds through the Operational Program ``Education and Lifelong Learning'' of the National Strategic Reference Framework (NSRF) - Research Funding Program: Thalis Athens University of Economics and Business - SOFTWARE ENGINEERING RESEARCH PLATFORM.
 
\bibliographystyle{plain}
\bibliography{amr}

\end{document}